\newtheorem{theorem}{Theorem}
\newtheorem{lemma}{Lemma}
\newtheorem{corollary}{Corollary}
\newtheorem{example}{Example}
\newcommand{\myspan}{\textnormal{span}}
\newcommand{\myrank}{\textnormal{rank}}
\newcommand{\mygcd}{\textnormal{gcd}}
\newcommand{\mylcm}{\textnormal{lcm}}
\newcommand{\myqed}{\hfill $\blacksquare$}
\newcommand{\NA}{PBNA}
\newcommand{\B}[1]{\mathbf{#1}}
\newcommand{\BF}{\mathbb{F}}
\newcommand{\C}[1]{\mathcal{#1}}
\newcommand{\SR}[1]{\mathscr{#1}}
\newcommand{\DEF}{\triangleq}
\newcommand{\BB}[1]{\mathbb{#1}}
\begin{document}

\title{On the Feasibility of Precoding-Based Network Alignment for Three Unicast Sessions \vspace{-0.3cm}}

\author{\IEEEauthorblockN{Chun Meng, Abinesh Ramakrishnan, Athina Markopoulou, Syed Ali Jafar}
\IEEEauthorblockA{ 
Department of Electrical Engineering and Computer Science \\
University of California, Irvine \\
Email: \{cmeng1, abinesh.r, athina, syed\}@uci.edu
}
\thanks{This work was supported by the NSF CAREER award (0747110) and by
an AFOSR MURI (FA9550-09-1-0643).}
}
\maketitle

\begin{abstract}
We consider the problem of network coding across three unicast sessions over a directed acyclic graph, when each session has min-cut one. 
Previous work by Das et al. adapted a precoding-based interference alignment technique, originally developed for the wireless interference channel, specifically to this problem. We  refer to this approach as precoding-based network alignment (PBNA). 
Similar to the wireless setting, PBNA asymptotically achieves half the minimum cut; different from the wireless setting, its feasibility depends on the graph structure.
Das et al. provided a set of feasibility conditions for PBNA with respect to a particular precoding matrix. However, the set consisted of an infinite number of conditions, which is impossible to check  in practice. Furthermore, the conditions were purely algebraic, without interpretation with regards to the graph structure. 
In this paper, we first prove that the set of conditions provided by Das. et al are also necessary for the feasibility of PBNA with respect to {\em any} precoding matrix. 
Then, using two graph-related properties and a degree-counting technique, we reduce the set to just four conditions. 
This reduction enables an efficient algorithm for checking the feasibility of PBNA on a given graph.
\end{abstract}

\section{Introduction}
Network coding was originally introduced to maximize the rate of a single multicast session over a network \cite{Ahlswede2000}\cite{Koetter2003}\cite{TraceyHo2006}. 
However, network coding across different sessions, which includes {\em multiple unicasts}  as a special case, is a well-known open problem. 
For example, finding linear network codes for multiple unicasts is NP-hard \cite{April2004}. Thus, suboptimal, heuristic approaches, such as linear programming \cite{Traskov2006} and evolutionary approaches \cite{MinkyuKim2009}, are typically used. 
Moreover, while it has been shown that scalar or vector linear network codes might be insufficient to achieve the optimal rate \cite{Medard2003}, only approximation methods \cite{Harvey2006} exist to characterize the rate region for this setting. 

In this paper,  we consider the simplest inter-session linear network coding scenario:  three unicast sessions over a directed acyclic graph, each session with minimum cut one. 
Das et al.  \cite{Das2010} applied a precoding-based interference alignment technique, originally developed by Cadambe and Jafar \cite{Cadambe2008} for wireless interference channel, to this problem; we refer to this technique as \textit{precoding-based network alignment (PBNA)}. 
In a nutshell, PBNA (i) simulates a wireless channel through random network coding \cite{TraceyHo2006} in the middle of the network and (ii) applies interference alignment at the edge, i.e., via precoding at the sources and decoding at the receivers. 
This way, it greatly simplifies the network code design, while it guarantees that each unicast session asymptotically achieves a rate equal to half of its minimum cut\cite{Das2010}.

An important difference from the wireless interference channel is that, in our problem, there may be dependencies between elements of the transfer matrix introduced by the graph structure, which make PBNA infeasible in some networks \cite{Ramakrishnan2010}. 
As a first step, Das et al. \cite{Das2010} provided a set of feasibility conditions for PBNA, and proved they are sufficient for the feasibility of PBNA with respect to a particular precoding matrix.
One important limitation is that the set consists of an infinite number of conditions, which makes it impossible to check in practice. 
Another limitation is the lack of consideration of graph structure, which turns out to be the reason for the  significant redundancy in the set of conditions.  
Ramakrishnan et al. \cite{Ramakrishnan2010} conjectured that the infinite set of conditions can be reduced to just two conditions. Han et al. \cite{Han2011}  proved that the conjecture holds for three symbol extensions; however, this result cannot be generalized beyond three symbol extensions.

In this paper, we make the following contributions. 
First, we prove that the set of conditions provided in \cite{Das2010} are also necessary for the feasibility of PBNA with respect to {\em any} valid precoding matrix. 
Then, using a simple degree-counting technique and two graph-related properties, we greatly reduce the set to just three conditions; two of them turn out to have an intuitive interpretation in terms of graph structure. 
Finally, we present an efficient algorithm for checking the three conditions.

The rest of this paper is organized as follows. 
In Section II, we present the problem formulation. 
In Section III, we summarize our main results. 
In Section IV, we discuss the graph-related properties that are key to the simplification of the conditions. 
In Section V, we prove and discuss our main results regarding the  feasibility condition of PBNA. 
In Section VI, we present an algorithm for checking the condition. 
In Section VII, we conclude the paper.
The Appendices provide details on the proofs that were outlined or omitted from the main part of the paper.

\section{Problem Formulation\label{sec_problem}}
The network is a delay-free directed acyclic graph, denoted by $G=(V,E)$, where $V$ is the set of nodes and $E$ the set of edges.
Without loss of generality, each edge has capacity one, i.e., can transmit one symbol of finite field $\BF_{2^m}$ in a unit time.
For the $i$th unicast session ($i\in\{1,2, 3\}$), let $s_i$ and $d_i$ be its sender and receiver respectively, and $R_i$ its transmission rate. 
Every edge $e\in E$ represents an error free channel. 
We assume that the minimum cut between $s_i$ and $d_i$ is one. 
Let $X_i$ be the source symbol transmitted at $s_i$ and $Z_i$ be the symbol received at $d_i$.
We further extend $G$ as follows: For the $i$th unicast session ($i\in\{1,2,3\}$), we add a virtual sender $s'_i$ and a virtual receiver $d'_i$ and two edges $\sigma_i=(s'_i,s_i)$ and $\tau_i=(d_i,d'_i)$. 
The extended graph is denoted by $G'=(V',E')$. 
For $e\in E'$, let $head(e)$ and $tail(e)$ denote its head and tail respectively.
 
In the middle of the network, we employ random network coding \cite{TraceyHo2006} to mimic wireless channel. 
The symbol transmitted along $e\in E'$, denoted by $Y_e$, is a linear combination of incoming symbols at $tail(e)$.
\begin{align*}
Y_e=\begin{cases}
X_i & \text{If } e=\sigma_i; \\
\sum_{head(e')=tail(e)} x_{e'e}Y_{e'} & \text{Otherwise.}
\end{cases}
\end{align*}
where $x_{e'e}$ is a variable, which takes values from $\BF_{2^m}$ and represents the coding coefficient used to combine the incoming symbol along $e'$ into the symbol along $e$. 
We group all coding coefficients $x_{e'e}$'s into a vector $\B{x}$, called the coding vector of $G'$. 
The network acts as a linear system: the output at $d'_i$ is a mixture of source symbols, $Z_i = \sum^3_{j=1}\nolimits m_{ij}(\B{x})X_j$, where $m_{ij}(\B{x})\in \BF_{2^m}[\B{x}]$ is the \textit{transfer function} from $s'_j$ to $d'_i$ and can be written as follows \cite{Koetter2003}:    
\begin{align*}
m_{ij}(\B{x})=\sum_{P\in \C{P}_{ij}}\nolimits t(P)
\end{align*}
where $\C{P}_{ij}$ is the set of paths from $s'_j$ to $d'_i$, and $t(P)$ is the product of coding coefficients along path $P$. 
We assume that all $m_{ij}(\B{x})$'s are non-zeros, which is the most challenging case. 
Indeed, as shown in Section \ref{sec_cond_na}, when some $m_{ij}(\B{x})$ $(i\neq j)$ is zero, the feasibility condition of PBNA is significantly simplified due to reduced number of interferences.

At the edge of the network, we apply interference alignment \cite{Das2010}\cite{Cadambe2008} via precoding at senders and decoding at receivers. 
Let $\B{X}_i=(X^1_i,\cdots,X^{k_i}_i)^T$ denote the input vector at sender $s'_i$, where $k_i$ is a two-phase function of some integers $n$, depending on whether $i$ equals one:
\begin{flalign*}
k_i = \begin{cases}
L_1(n) & \text{if } i=1 \\
L_2(n) & \text{otherwise}.
\end{cases}
\end{flalign*}
where $L_1: \BB{Z}^+ \rightarrow \BB{Z}^+$ and $L_1: \BB{Z}^+ \rightarrow \BB{Z}^+$ are two functions defined on $\BB{Z}^+$.
We will determine $L_1(n)$ and $L_2(n)$ later in this section. 
In order for PBNA to work properly, we require $L_1(n)$ and $L_2(n)$ satisfy the following condition:
\begin{flalign}
& L_1(n) \ge L_2(n) \label{eq_na_dim_1} \\
& \lim_{n \rightarrow \infty} \frac{L_1(n)}{L_2(n)} = 1
\end{flalign}
Define $L(n) = L_1(n) + L_2(n)$. 
As we will see later, the above two conditions are essential in the construction of a valid solution to PBNA.
We use {\em precoding matrix} $\B{V}_i$ to encode $\B{X}_i$  into $L(n)$ symbols, which are then transmitted via $L(n)$ uses of the network (time slots). The output vector at $d'_i$ is 
\begin{align*}
\B{Z}_i =(Z^1_i, \cdots, Z^{L(n)}_i)^T = \sum^3_{j=1}\nolimits \B{M}_{ij}\B{V}_j\B{X}_j
\end{align*}
where $\B{M}_{ij}$ is a $L(n)\times L(n)$ diagonal matrix with the $(k, k)$ element being $m_{ij}(\B{x}^k)$, where $\B{x}^k$ represents the coding vector for the $k$th use of the network.
$\B{V}_1$ is a $L(n)\times L_1(n)$ matrix, and $\B{V}_2, \B{V}_3$ are both $L(n)\times L_2(n)$ matrices. 
$\B{V}_i$ can still contain indeterminate variables. Let $\xi$ denote the vector of all variables in $\B{x}^1,\cdots,\B{x}^{L(n)}$ and $\B{V}_1, \B{V}_2,\B{V}_3$. 
We require the following conditions are satisfied for some values of $\xi$ \cite{Cadambe2008}:
\begin{align*}
& \SR{A}_1: \, \myspan(\B{M}_{12}\B{V}_2) = \myspan(\B{M}_{13}\B{V}_3) \\
& \SR{A}_2: \, \myspan(\B{M}_{23}\B{V}_3) \subseteq \myspan(\B{M}_{21}\B{V}_1) \\
& \SR{A}_3: \, \myspan(\B{M}_{32}\B{V}_2) \subseteq \myspan(\B{M}_{31}\B{V}_1) \\
& \SR{B}_1: \, \text{rank}(\B{M}_{11}\B{V}_1 \hspace*{8pt} \B{M}_{12}\B{V}_2) = L(n) \\
& \SR{B}_2: \, \text{rank}(\B{M}_{21}\B{V}_1 \hspace*{8pt} \B{M}_{22}\B{V}_2) = L(n) \\
& \SR{B}_3: \, \text{rank}(\B{M}_{31}\B{V}_1 \hspace*{8pt} \B{M}_{33}\B{V}_3) = L(n)
\end{align*}

Condition $\SR{A}_i$ guarantees that all the interferences at $d'_i$ are aligned, i.e., mapped into the same linear space, while condition  $\SR{B}_i$ ensures that all source symbols for the $i$th unicast session can be decoded. 
These conditions ensure that we can achieve a rate tuple $(R_1,R_2,R_3)=\B{R}_n\DEF(\frac{L_1(n)}{L(n)}, \frac{L_2(n)}{L(n)}, \frac{L_2(n)}{L(n)})$, which approaches $(\frac{1}{2}, \frac{1}{2}, \frac{1}{2})$ as $n\rightarrow \infty$. 
In this case, we say that $\B{R}_n$ is \textit{feasible through PBNA}. \footnote{In this paper, we first consider the feasibility conditions of PBNA for a fixed value of $n$. Then, in the Main Theorem, we prove that the feasibility conditions of PBNA are actually irrelevant to $n$ for $n>1$.}

Previous work \cite{Das2010}\cite{Ramakrishnan2010}\cite{Han2011} only considered the feasibility of PBNA under a particular precoding matrix, i.e., $\B{V}^*_1$ in Eq. (\ref{eq_v1}), which was first introduced in \cite{Cadambe2008}.
To address this limitation and characterize the feasibility of PBNA for any precoding matrix, we reformulate $\SR{A}_1, \SR{A}_2, \SR{A}_3$ and $\SR{B}_1, \SR{B}_2, \SR{B}_3$ without any assumption about the structure of precoding matrix. 
First, we reformulate $\SR{A}_1, \SR{A}_2, \SR{A}_3$ as:
\begin{flalign*}
& \SR{A}'_1:\, \B{M}_{12}\B{V}_2 = \B{M}_{13}\B{V}_3\B{A} \\
& \SR{A}'_2:\, \B{M}_{23}\B{V}_3 = \B{M}_{21}\B{V}_1\B{B} \\
& \SR{A}'_3:\, \B{M}_{32}\B{V}_2 = \B{M}_{31}\B{V}_1\B{C}
\end{flalign*}
where $\B{A}$ is an $L_2(n)\times L_2(n)$ invertible matrix, and $\B{B}$ and $\B{C}$ are both $L_1(n)\times L_2(n)$ matrices with rank $L_2(n)$. 
$\SR{A}'_1, \SR{A}'_2, \SR{A}'_3$ can be further condensed into a single condition:
\begin{flalign}
\label{eq_v1_align}
\B{T}\B{V}_1\B{C} = \B{V}_1\B{BA}
\end{flalign}
where $\B{T}=\B{M}_{12}\B{M}^{-1}_{21}\B{M}_{23}\B{M}^{-1}_{32}\B{M}_{31}\B{M}^{-1}_{13}$. 
Finally, conditions $\SR{B}_1,  \SR{B}_2, \SR{B}_3$ are reformulated as:
\begin{flalign*}
& \SR{B}'_1: \hspace{8pt} \psi_1(\xi)=\det(\B{V}_1 \quad \B{P}_1\B{V}_1\B{C}) \neq 0 \\
& \SR{B}'_2: \hspace{8pt} \psi_2(\xi)=\det(\B{V}_1 \quad \B{P}_2\B{V}_1\B{C}) \neq 0 \\
& \SR{B}'_3: \hspace{8pt} \psi_3(\xi)=\det(\B{V}_1 \quad \B{P}_3\B{V}_1\B{C}\B{A}^{-1}) \neq 0
\end{flalign*}
where $\B{P}_1=\B{M}_{31}\B{M}^{-1}_{11}\B{M}_{12}\B{M}^{-1}_{32}$, $\B{P}_2=\B{M}_{31}\B{M}^{-1}_{21}\B{M}_{22}\B{M}^{-1}_{32}$, and $\B{P}_3=\B{M}_{12}\B{M}^{-1}_{32}\B{M}_{33}\B{M}^{-1}_{13}$, and $\psi_1(\xi),\psi_2(\xi), \psi_3(\xi)$ are rational functions in the field $\BF_{2^m}(\xi)$. 
Define $\psi(\xi)=\prod^3_{i=1}\psi_i(\xi)$. We assume that $\BF_{2^m}$ is sufficiently large such that if $\psi(\xi)$ is a non-zero rational function, there are values to $\xi$, denoted by $\xi_0$, such that $\psi(\xi_0)\neq 0$.

We also define the following rational functions:
\vspace{-5pt}
\begin{small}
\begin{flalign}
\label{def_pi}
\begin{split}
& p_1(\B{x})=\frac{m_{31}(\B{x})m_{12}(\B{x})}{m_{11}(\B{x})m_{32}(\B{x})} \hspace*{5pt}
p_2(\B{x})=\frac{m_{31}(\B{x})m_{22}(\B{x})}{m_{21}(\B{x})m_{32}(\B{x})} \\
& p_3(\B{x})=\frac{m_{12}(\B{x})m_{33}(\B{x})}{m_{32}(\B{x})m_{13}(\B{x})} \hspace*{5pt}
\eta(\B{x})=\frac{m_{31}(\B{x})m_{12}(\B{x})m_{23}(\B{x})}{m_{21}(\B{x})m_{32}(\B{x})m_{13}(\B{x})}
\end{split}
\end{flalign}
\end{small}\vspace*{-5pt}

Clearly, $p_i(\B{x})$ and $\eta(\B{x})$ form the elements along the diagonals of $\B{P}_i$ and $\B{T}$ respectively. Hence, the following lemma holds:
\begin{lemma}
\label{lemma_na}
$\B{R}^*_n$ is feasible through PBNA if and only if 1) Eq. (\ref{eq_v1_align}) is satisfied, and 2) $\SR{B}'_1,  \SR{B}'_2, \SR{B}'_3$ are satisfied.
\end{lemma}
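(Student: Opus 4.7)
The plan is to prove Lemma \ref{lemma_na} as a chain of elementary linear-algebra equivalences, exploiting two structural facts: (i) all $\B{M}_{ij}$ are diagonal and invertible (because the $m_{ij}(\B{x})$ are nonzero), so they mutually commute; (ii) the subspace conditions $\SR{A}_1,\SR{A}_2,\SR{A}_3$ can be restated as explicit matrix factorizations, which is exactly what $\SR{A}'_1,\SR{A}'_2,\SR{A}'_3$ do. I would prove the ``if'' and ``only if'' directions simultaneously by showing that each transformation is reversible under the requirement that $\B{V}_2,\B{V}_3$ have full column rank (a necessary consequence of $\SR{B}_2$ and $\SR{B}_3$).

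First, I would translate the span conditions. For $\SR{A}_1$, since $\B{M}_{12}\B{V}_2$ and $\B{M}_{13}\B{V}_3$ are $L(n)\times L_2(n)$ matrices of full column rank $L_2(n)$, their column spans coincide if and only if there exists a unique invertible $L_2(n)\times L_2(n)$ matrix $\B{A}$ with $\B{M}_{12}\B{V}_2=\B{M}_{13}\B{V}_3\B{A}$, i.e.\ $\SR{A}'_1$. For $\SR{A}_2$ and $\SR{A}_3$, column-span inclusion together with the required rank $L_2(n)$ of the smaller space forces the factorizations $\SR{A}'_2$ and $\SR{A}'_3$ for matrices $\B{B},\B{C}$ of rank $L_2(n)$.

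Next, I would derive Eq.~(\ref{eq_v1_align}). From $\SR{A}'_3$, $\B{V}_2=\B{M}_{32}^{-1}\B{M}_{31}\B{V}_1\B{C}$. Plugging into $\SR{A}'_1$ and solving gives $\B{V}_3=\B{M}_{13}^{-1}\B{M}_{12}\B{M}_{32}^{-1}\B{M}_{31}\B{V}_1\B{C}\B{A}^{-1}$. Substituting this into $\SR{A}'_2$ and using that diagonal matrices commute yields $\B{V}_1\B{B}\B{A}=\B{M}_{21}^{-1}\B{M}_{23}\B{M}_{13}^{-1}\B{M}_{12}\B{M}_{32}^{-1}\B{M}_{31}\B{V}_1\B{C}=\B{T}\B{V}_1\B{C}$, which is Eq.~(\ref{eq_v1_align}). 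Conversely, given Eq.~(\ref{eq_v1_align}) one can \emph{define} $\B{V}_2,\B{V}_3$ by the two displayed formulas and directly verify $\SR{A}'_1,\SR{A}'_2,\SR{A}'_3$, hence $\SR{A}_1,\SR{A}_2,\SR{A}_3$.

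Finally, I would reduce $\SR{B}_i$ to $\SR{B}'_i$. Take $\SR{B}_1$ as an example: premultiplying the block matrix $(\B{M}_{11}\B{V}_1\ \B{M}_{12}\B{V}_2)$ by the invertible diagonal $\B{M}_{11}^{-1}$ and inserting the expression for $\B{V}_2$ gives $(\B{V}_1\ \B{M}_{11}^{-1}\B{M}_{12}\B{M}_{32}^{-1}\B{M}_{31}\B{V}_1\B{C}) = (\B{V}_1\ \B{P}_1\B{V}_1\B{C})$, because the diagonal factors commute to reassemble $\B{P}_1$. Full rank of a square matrix being equivalent to a nonzero determinant, this is exactly $\SR{B}'_1$. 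An identical computation, premultiplying by $\B{M}_{21}^{-1}$ and $\B{M}_{31}^{-1}$ respectively, handles $\SR{B}_2\Leftrightarrow\SR{B}'_2$ and $\SR{B}_3\Leftrightarrow\SR{B}'_3$. I do not expect a substantial obstacle: the only care needed is in the forward direction to note that $\B{V}_2,\B{V}_3$ must have full column rank (forced by $\SR{B}_2,\SR{B}_3$) so that $\B{A}$ is well-defined and invertible, and to track that each rewriting uses only invertible diagonal factors, preserving ranks and determinants up to nonzero scalars.
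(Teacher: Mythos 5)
Your proposal is correct and follows essentially the same route as the paper, which presents Lemma~\ref{lemma_na} as an immediate consequence of the reformulation $\SR{A}_i\Leftrightarrow\SR{A}'_i$, the condensation into Eq.~(\ref{eq_v1_align}), and the rewriting $\SR{B}_i\Leftrightarrow\SR{B}'_i$ via commuting invertible diagonal factors, without spelling out the details. Your only addition is the explicit observation that $\SR{B}_2,\SR{B}_3$ force $\B{V}_2,\B{V}_3$ to have full column rank, which justifies the invertibility of $\B{A}$ and the rank-$L_2(n)$ requirement on $\B{B},\B{C}$ — a point the paper leaves implicit.
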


Form Lemma \ref{lemma_na}, we see that a solution to PBNA consists of four matrices, i.e., $\B{V}_1$, $\B{A}$, $\B{B}$ and $\B{C}$.
We use vector $\B{\Gamma}$ to represent such a solution:
\begin{flalign}
\label{eq_na_solution}
\B{\Gamma} = (\B{V}_1, \B{A}, \B{B}, \B{C})
\end{flalign}

The fundamental design problem in PBNA is to find $\B{\Gamma}$ such that all the conditions in Lemma \ref{lemma_na} are satisfied. 
Indeed, the major restriction comes from Eq. (\ref{eq_v1_align}). 
As shown in \cite{Das2010}, the construction of $\B{\Gamma}$ depends on whether $\eta(\B{x})$ is constant. 
When $\eta(\B{x})$ is constant, and thus $\B{T}$ is an identity matrix, we set $\B{C}=\B{BA}$. 
Therefore, any \textit{arbitrary} $\B{V}_1$ can satisfy Eq. (\ref{eq_v1_align}). 
In fact, for this case, as we will see in Section V-A that all the interferences can be perfectly aligned such that the we can achieve one half rate for each unicast session in exactly two time slots.

In contrast, when $\eta(\B{x})$ is not constant, we can no longer choose $\B{V}_1$ freely. 
\cite{Cadambe2008} proposed the following solution, which has also been used by most of recent work \cite{Das2010}\cite{Ramakrishnan2010}\cite{Han2011}. Let $L_1(n)=n+1$ and $L_2(n)=n$, and define the precoding matrix
\begin{flalign}
\label{eq_v1}
\B{V}^*_1=(\B{w} \hspace*{8pt} \B{T}\B{w} \hspace*{6pt} \cdots \B{T}^n\B{w})
\end{flalign}
where $\B{w}$ is a column vector of $2n+1$ ones. 
Meanwhile, we set $\B{A}=\B{I}_n$, $\B{C}$ consists of the left $n$ columns of $\B{I}_{n+1}$, and $\B{B}$ the right $n$ columns of $\B{I}_{n+1}$; this construction satisfies Eq. (\ref{eq_v1_align}). 
Note that the form of $\B{V}_1$ is determined by $\B{A},\B{B}$ and $\B{C}$. 
With different $\B{A},\B{B}$ and $\B{C}$, we can derive different $\B{V}_1$; therefore the choice of $\B{V}_1$ is not limited to just $\B{V}^*_1$.
Using this solution, we can achieve the following rate tuple through PBNA:
\begin{flalign}
\label{eq_rate_star}
\B{R}^*_n = \bigg( \frac{n+1}{2n+1}, \frac{n}{2n+1}, \frac{n}{2n+1} \bigg)
\end{flalign}

As observed in \cite{Ramakrishnan2010}, graphs can introduce dependence between transfer functions\footnote{Dependence here means that one transfer function (namely $m_{ii}(\B{x})$, corresponding to signal for the $i$th unicast flow) can be written as a rational function of other transfer (interference) functions. The exact functional form is dictated by Eq. (\ref{eq_big_cond}) or Eq. (\ref{eq_small_cond_1})-(\ref{eq_small_cond_3}).}  so that PBNA may be infeasible. 
This is a fundamental difference compared to wireless interference channel, where channel gains can change independently and interference alignment is always feasible. 
Fig. \ref{fig_counter_example}  depicts some examples of graphs where PBNA is infeasible.  
In Fig. \ref{fig_counter_example}(a),  $p_i(\B{x})=\eta(\B{x})=1$ for $i\in \{1,2,3\}$, thus $\B{P}_i=\B{I}_{2n+1}$, implying $\SR{B}'_1,\SR{B}'_2,\SR{B}'_3$ are all violated. 
In Fig. \ref{fig_counter_example}(b), $p_1(\B{x})=\frac{\eta(\B{x})}{\eta(\B{x})+1}$, which also violates $\SR{B}'_1$. 
This example shows that the conjecture proposed by Ramakrishnan et al. \cite{Ramakrishnan2010} doesn't hold beyond three symbol extensions. 
Moreover, by exchanging $s_1\leftrightarrow s_2$ and $d_1\leftrightarrow d_2$, we obtain another counter example, where  $p_2(\B{x})=1+\eta(\B{x})$, violating $\SR{B}'_2$.

\begin{figure}
\centering
\subfloat[$p_1(\B{x})=p_2(\B{x})=p_3(\B{x})=\eta(\B{x})=1$]{\includegraphics[width=4cm]{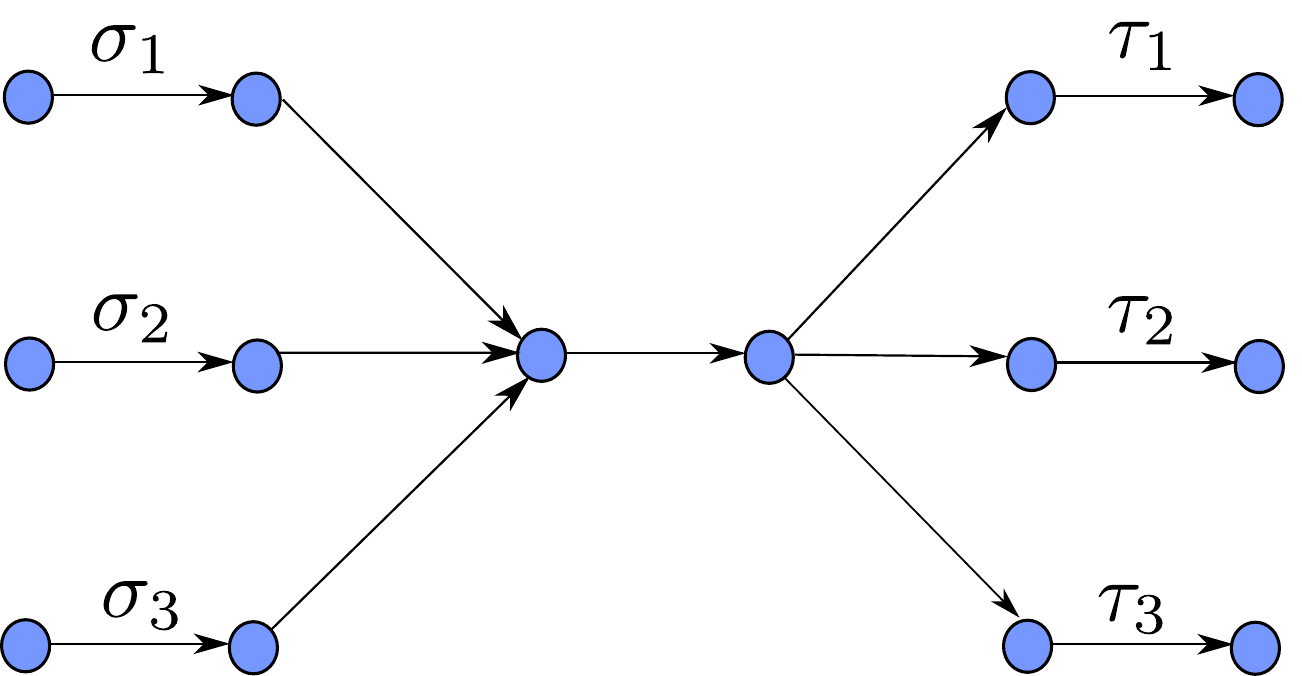}} \hspace*{10pt}
\subfloat[$p_1(\B{x})=\frac{\eta(\B{x})}{\eta(\B{x})+1}$]{\includegraphics[width=4cm]{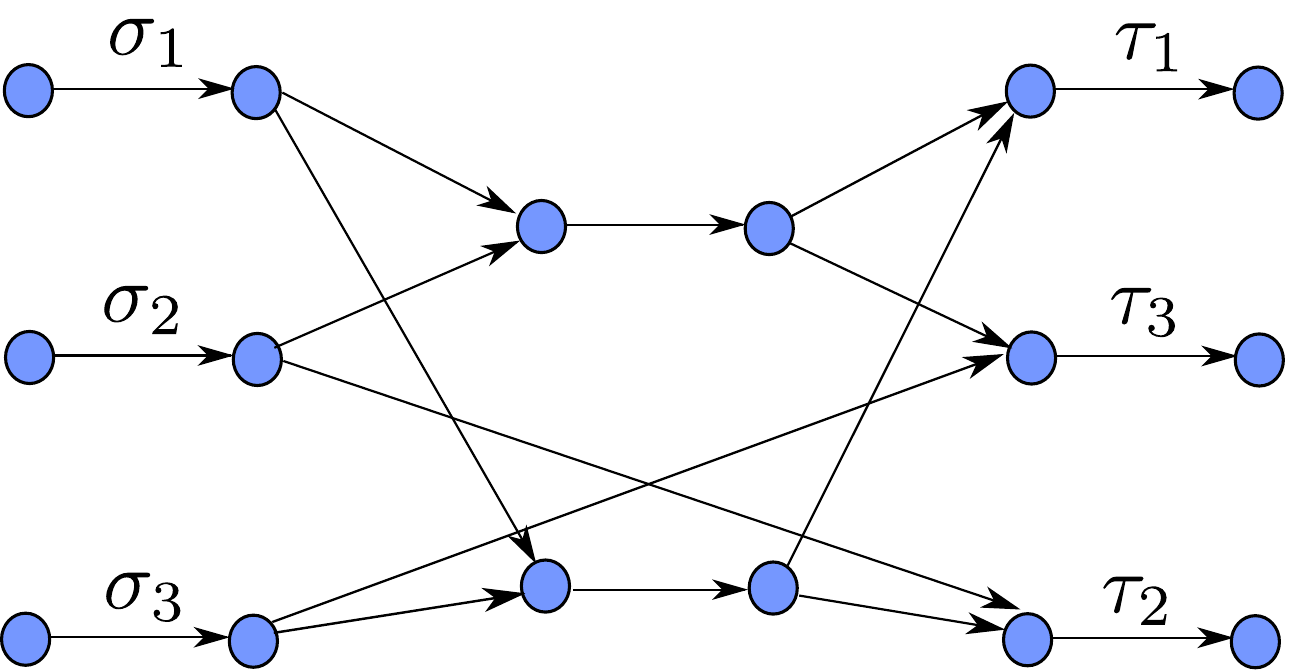}}
\caption{\small{Examples of graphs where PBNA is infeasible}\label{fig_counter_example}} \vspace{-15pt}
\end{figure}

As a first step, \cite{Das2010} proposed the following set of conditions for PBNA.\footnote{There is actually a small difference between Eq. (\ref{eq_big_cond_1}) and the original formulation in \cite{Das2010}, in which $p_1(\B{x})$ is replaced by $1/p_1(\B{x})$. It is easy to see that the two are equivalent.} 
For $i\in \{1,2,3\}$,
\begin{align}
\label{eq_big_cond_1}
p_i(\B{x})\notin\left\{\frac{f(\eta(\B{x}))}{g(\eta(\B{x}))}: f(z),g(z)\in \BF_{2^m}[z], g(z)\neq 0\right\} 
\end{align}
In \cite{Das2010}, it was proved that if Eq. (\ref{eq_big_cond_1}) is satisfied, we can use $\B{V}^*_1$ to asymptotically achieve half rate in an infinite number of time slots. 
Unfortunately, since Eq. (\ref{eq_big_cond_1}) contains an infinite number of conditions, it is impractical to verify. 
Moreover, since only one particular matrix was considered in \cite{Das2010}, Eq. (\ref{eq_big_cond_1}) was only shown to be sufficient for PBNA.

\section{Overview of Main Results\label{sec_main}}
We now state our main results; proofs are deferred to Section V and to the appendices.
Since the construction of $\B{V}_1$ depends on whether $\eta(\B{x})$ is constant, we distinguish  two cases.  

\subsection{$\eta(\B{x})$ Is Constant}

In  this case, we can choose $\B{V}_1$ freely, and thus the feasibility condition of PBNA can be significantly simplified.
Moreover, we can achieve one half rate in exactly two time slots, as stated in the following theorem:
\begin{theorem}
\label{th_na_eta_trivial}
Assume $\eta(\B{x})$ is constant. The rate tuple $(\frac{1}{2}, \frac{1}{2}, \frac{1}{2})$ is feasible through PBNA if and only if $p_i(\B{x})$ is not constant for each $i\in\{1,2,3\}$.
\end{theorem}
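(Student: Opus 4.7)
The plan is to treat the two directions separately, leveraging the fact that when $\eta(\B{x})$ is constant the matrix $\B{T}$ reduces to a scalar multiple of the identity and the alignment condition in Eq.~(\ref{eq_v1_align}) decouples entirely from $\B{V}_1$. This turns the problem into a small computation at $n=1$ for sufficiency and a rank-counting argument for necessity.

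For the \emph{sufficiency} direction, assume each $p_i(\B{x})$ is non-constant. Fix $n=1$, so $L_1(1)=L_2(1)=1$ and $L(1)=2$, and take the scalars $\B{A}=\B{B}=1$, $\B{C}=1/\eta$, where $\eta$ denotes the (non-zero) constant value of $\eta(\B{x})$; non-zero because every $m_{ij}(\B{x})$ is assumed non-zero. Then Eq.~(\ref{eq_v1_align}) is automatically satisfied for any $2\times 1$ choice $\B{V}_1=(v_1,v_2)^T$. Writing $\B{P}_i=\mathrm{diag}(p_i(\B{x}^1),p_i(\B{x}^2))$ and expanding the $2\times 2$ determinants in $\SR{B}'_1,\SR{B}'_2,\SR{B}'_3$, each $\psi_i(\xi)$ equals a non-zero scalar (namely $\B{C}$ or $\B{C}\B{A}^{-1}=1/\eta$) times $v_1 v_2\bigl(p_i(\B{x}^2)-p_i(\B{x}^1)\bigr)$. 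Since $p_i$ is non-constant and $\B{x}^1,\B{x}^2$ are disjoint variable sets, each difference is a non-zero rational function, so $\psi(\xi)=\prod_{i=1}^{3}\psi_i(\xi)$ is non-zero. The assumption that $\BF_{2^m}$ is sufficiently large then yields an assignment $\xi_0$ with $\psi(\xi_0)\neq 0$, and Lemma~\ref{lemma_na} delivers feasibility of $(1/2,1/2,1/2)$.

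For the \emph{necessity} direction, suppose some $p_{i_0}(\B{x})$ is identically a constant $c\in\BF_{2^m}$, necessarily non-zero. Then $\B{P}_{i_0}=c\,\B{I}_{L(n)}$ for every $n$ and every choice of the coding vectors $\B{x}^k$. Because the target rate $(1/2,1/2,1/2)$ forces $L_1(n)=L_2(n)$, the matrix $\B{V}_1$ is $2L_2(n)\times L_2(n)$. The column span of $\B{P}_{i_0}\B{V}_1\B{C}$ (or $\B{P}_{i_0}\B{V}_1\B{C}\B{A}^{-1}$ when $i_0=3$, with $\B{A}$ invertible) is contained in the column span of $\B{V}_1$, which has dimension at most $L_2(n)$. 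Hence the $2L_2(n)\times 2L_2(n)$ matrix inside $\SR{B}'_{i_0}$ has rank at most $L_2(n)<2L_2(n)$ and is singular, so $\psi_{i_0}\equiv 0$; by Lemma~\ref{lemma_na}, PBNA is infeasible.

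The main \emph{obstacle} is really just bookkeeping: tracking the extra factor $\B{A}^{-1}$ inside $\SR{B}'_3$ in both directions (it only contributes an invertible scalar to a determinant, so it affects neither the rank bound in necessity nor the non-vanishing in sufficiency), and confirming that $1/\eta$ is well-defined, which follows from the standing assumption that all $m_{ij}(\B{x})$ are non-zero. Once those points are in hand, the argument is genuinely short because the $n=1$ reduction on the sufficiency side, and the column-span containment on the necessity side, do all the work.
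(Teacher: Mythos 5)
Your proof is correct and follows essentially the same route as the paper: the same $n=1$ construction with a $2\times 1$ precoding vector of free variables, the same $2\times 2$ determinant $v_1v_2\bigl(p_i(\B{x}^1)-p_i(\B{x}^2)\bigr)$ for sufficiency, and the column-span rank argument for necessity that the paper states in one line. The only (harmless) difference is that you treat $\eta$ as a general non-zero constant and absorb it via $\B{C}=1/\eta$, whereas the paper takes $\B{T}=\B{I}$ and sets $\B{A}=\B{B}=\B{C}=1$.
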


\subsection{$\eta(\B{x})$ Is Not Constant}

In this case, we cannot choose $\B{V}_1$ freely. Using similar technique as in \cite{Das2010}, we can rewrite Eq. (\ref{eq_big_cond_1}) as follows: 
\footnote{Notation: For two polynomials $f(x)$ and $g(x)$, let $\mygcd(f(x),g(x))$ denote their greatest common divisor, and $d_f$ the degree of $f(x)$.} 
\begin{flalign}
\begin{split}
\label{eq_big_cond}
p_i(\B{x}) \notin & \C{S}_n = \bigg\{ \frac{f(\eta(\B{x}))}{g(\eta(\B{x}))}: f(z),g(z) \in \BF_{2^m}[z], \\
& f(z)g(z)\neq 0, \gcd(f(z),g(z))=1, \\
& d_f\le n, d_g \le n-1 \bigg\} \hspace*{16pt} \forall i\in \{1,2,3\}
\end{split}
\end{flalign}
Note that, in contrast to Eq. (\ref{eq_big_cond_1}), the above set of conditions guarantee that $\B{R}^*_n$ is NA-feasible for a fixed value of $n$.

Next, we show that Eq. (\ref{eq_big_cond}) is also necessary for the feasibility of PBNA with respect to any $\B{V}_1$ satisfying the conditions of Lemma \ref{lemma_na}.
\begin{theorem}
\label{th_big_cond}
Assume $\eta(\B{x})$ is not constant. $\B{R}^*_n$ is feasible through PBNA if and only if for each $i\in\{1,2,3\}$, $p_i(\B{x})\notin \C{S}_n$.
\end{theorem}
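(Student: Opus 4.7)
The plan is to treat the two directions of the ``iff'' separately. The forward (sufficiency) direction is essentially the analysis of \cite{Das2010}: take $\B{V}_1 = \B{V}_1^*$ from Eq. (\ref{eq_v1}), $\B{A} = \B{I}_n$, and let $\B{B}, \B{C}$ be the right and left $n$ columns of $\B{I}_{n+1}$, so that Eq. (\ref{eq_v1_align}) holds by construction. A direct degree-counting calculation in the single Krylov chain $\{\B{T}^k \B{w}\}$ then shows that, for this particular solution, $\psi_i$ is a non-zero rational function in $\xi$ precisely when $p_i$ cannot be written as $f(\eta)/g(\eta)$ with $d_f \le n$, $d_g \le n-1$.

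The new content is the necessity direction, which must rule out every precoding matrix. Suppose $p_i \in \C{S}_n$, so there are coprime polynomials $f, g$ with $fg \not\equiv 0$, $d_f \le n$, $d_g \le n-1$, and $p_i(\B{x})\, g(\eta(\B{x})) = f(\eta(\B{x}))$. Since $\B{P}_i$ and $\B{T}$ are both diagonal, $\B{P}_i\, g(\B{T}) = f(\B{T})$, and left-multiplying the block matrix in the definition of $\psi_i$ by $g(\B{T})$ yields
\begin{equation*}
\det\bigl(g(\B{T})\B{V}_1 \;\big|\; f(\B{T})\B{V}_1\B{C}\bigr) = \det g(\B{T}) \cdot \psi_i.
\end{equation*}
Because $g \not\equiv 0$ and the entries $\eta(\B{x}^k)$ are algebraically independent over $\BF_{2^m}$, $\det g(\B{T})$ is a non-zero rational function, so it suffices to show that the $(2n+1) \times (2n+1)$ matrix $M = (g(\B{T})\B{V}_1 \mid f(\B{T})\B{V}_1\B{C})$ has vanishing determinant for every $\B{V}_1, \B{A}, \B{B}, \B{C}$ satisfying Eq. (\ref{eq_v1_align}).

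I would do this with a filtration argument. Let $U = \myspan(\B{V}_1)$ (dimension $n+1$), $W = \myspan(\B{V}_1\B{C}) \subseteq U$ (dimension $n$), and fix any $v \in U \setminus W$, so $U = W + \langle v \rangle$. The alignment equation gives $\B{T}W \subseteq U$, whence $\B{T}U = \B{T}W + \langle \B{T}v \rangle \subseteq U + \langle \B{T}v \rangle$. Setting $Q_k = \sum_{j=0}^{k} \B{T}^j U$, a short induction on $k$ yields $Q_k \subseteq U + \langle \B{T}v, \B{T}^2 v, \ldots, \B{T}^k v \rangle$, so $\dim Q_{n-1} \le (n+1) + (n-1) = 2n$. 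Each column of $g(\B{T})\B{V}_1$ is a polynomial of degree at most $d_g \le n-1$ in $\B{T}$ applied to a vector of $U$, hence lies in $\sum_{k=0}^{n-1} \B{T}^k U = Q_{n-1}$. For a column $f(\B{T}) w_j$ of $f(\B{T})\B{V}_1\B{C}$ with $w_j \in W$, I can use $\B{T} W \subseteq U$ to absorb one power of $\B{T}$: for $k \ge 1$, $\B{T}^k w_j = \B{T}^{k-1}(\B{T} w_j) \in \B{T}^{k-1} U$; since $d_f \le n$, this places $f(\B{T}) w_j$ in $U + \B{T} U + \cdots + \B{T}^{n-1} U = Q_{n-1}$ as well. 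Thus all $2n+1$ columns of $M$ lie in a subspace of dimension at most $2n$, forcing $\det M = 0$ and $\psi_i \equiv 0$. The main obstacle I anticipate is that an arbitrary $\B{V}_1$ satisfying the alignment carries no Krylov generator, so the single-chain degree count that trivially closes the sufficiency analysis for $\B{V}_1^*$ is unavailable; the filtration $\{Q_k\}$, leveraging only the one-step inclusion $\B{T}W \subseteq U$ handed over by the alignment equation, is the uniform substitute I would use.
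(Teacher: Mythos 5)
Your proposal is correct, and the necessity direction is a genuinely different argument from the paper's. The paper first characterizes \emph{all} admissible precoding matrices (Lemma \ref{lemma_v1}: any $\B{V}_1$ satisfying Eq. (\ref{eq_v1_align}) equals $\B{G}\B{V}^*_1\B{F}$, proved in Appendix \ref{app_v1} by explicitly solving $\B{r}(z)(z\B{C}-\B{BA})=0$ via Cramer's rule), and then exhibits an explicit vanishing combination of the columns of $(\B{V}_1 \;\; \B{P}_i\B{V}_1\B{C})$ using the vectors $\B{a}'=\B{F}^{-1}\B{a}$ and $\B{b}'=\B{H}^{-1}\B{b}$ built from the coefficients of $f$ and $g$. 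You instead clear denominators with $g(\B{T})$ and run a coordinate-free dimension count on the filtration $Q_k=\sum_{j\le k}\B{T}^jU$, using only the one-step inclusion $\B{T}W\subseteq U$ that the alignment equation hands you; the degree bounds $d_g\le n-1$, $d_f\le n$ then appear exactly as the pigeonhole count $(n+1)+n=2n+1>2n\ge\dim Q_{n-1}$. Your route is shorter and self-contained: it bypasses Lemma \ref{lemma_v1} and the Appendix \ref{app_v1} machinery entirely, and it handles $\SR{B}'_3$ (where the block is $\B{P}_3\B{V}_1\B{C}\B{A}^{-1}$) and rank-deficient $\B{V}_1$ with no extra work, since $\myspan(\B{V}_1\B{C}\B{A}^{-1})=W$ and a smaller $U$ only shrinks $Q_{n-1}$. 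What the paper's route buys is the explicit general form of $\B{V}_1$, which is of independent interest and is the structural content of Lemma \ref{lemma_v1}. Two small points to tighten: the decomposition $U=W+\langle v\rangle$ needs $\dim U\le\dim W+1$, which follows from Sylvester's rank inequality because $\B{C}$ is $(n+1)\times n$ of rank $n$ (or dispose of $\myrank(\B{V}_1)<n+1$ separately, as the paper does); and $\det g(\B{T})=\prod_k g(\eta(\B{x}^k))\neq 0$ needs only that $\eta(\B{x})$ is a non-constant (hence transcendental) element of $\BF_{2^m}(\B{x})$ and $g\not\equiv 0$, not algebraic independence of the $\eta(\B{x}^k)$.
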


Finally, we greatly reduce $\C{S}_n$ to just four rational functions:
\begin{theorem}[The Main Theorem]
\label{th_main}
Assume $\eta(\B{x})$ is not constant. For $n>1$, $\B{R}^*_n$ is feasible through PBNA if and only if the following conditions are satisfied:
\begin{flalign}
m_{11}(\B{x}) &\neq a_1\frac{m_{21}(\B{x})m_{13}(\B{x})}{m_{23}(\B{x})} + b_1\frac{m_{31}(\B{x})m_{12}(\B{x})}{m_{32}(\B{x})} \label{eq_small_cond_1} \\
m_{22}(\B{x}) &\neq a_2\frac{m_{32}(\B{x})m_{21}(\B{x})}{m_{31}(\B{x})} + b_2\frac{m_{12}(\B{x})m_{23}(\B{x})}{m_{13}(\B{x})} \label{eq_small_cond_2} \\
m_{33}(\B{x}) &\neq a_3\frac{m_{13}(\B{x})m_{32}(\B{x})}{m_{12}(\B{x})} + b_3\frac{m_{23}(\B{x})m_{31}(\B{x})}{m_{21}(\B{x})} \label{eq_small_cond_3}
\end{flalign}
where for $i\in \{1,2,3\}$, $a_i, b_i$ are constants in $\{0, 1\}$ and cannot be zeros at the same time.
\end{theorem}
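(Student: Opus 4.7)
The plan is to combine Theorem~\ref{th_big_cond} with a reduction that collapses the infinite family $\C{S}_n$ down to a three-element list of M\"obius transformations of $\eta(\B{x})$. By Theorem~\ref{th_big_cond}, $\B{R}^*_n$ is feasible iff $p_i(\B{x})\notin\C{S}_n$ for every $i\in\{1,2,3\}$, so it suffices to prove, for each $i$, that $p_i(\B{x})\notin\C{S}_n$ is equivalent to the corresponding inequality in (\ref{eq_small_cond_1})--(\ref{eq_small_cond_3}). Since the three cases differ only by relabelling of sessions, I would treat $i=1$ in full and invoke symmetry for $i=2,3$.

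For the algebraic translation, dividing both sides of (\ref{eq_small_cond_1}) by $m_{31}(\B{x})m_{12}(\B{x})/m_{32}(\B{x})$ and applying (\ref{def_pi}) gives the equivalent form $1/p_1(\B{x})\neq a_1/\eta(\B{x})+b_1$; as $(a_1,b_1)$ ranges over $\{(0,1),(1,0),(1,1)\}$ the three forbidden equalities are $p_1=1$, $p_1=\eta$ and $p_1=\eta/(\eta+1)$. Each is of the form $f(\eta)/g(\eta)$ with coprime $f,g$ of degree at most $1$, and therefore lies in $\C{S}_n$ for every $n\ge 2$. Consequently, violating any case of (\ref{eq_small_cond_1}) immediately forces $p_1\in\C{S}_n$, supplying the ``only if'' direction.

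For the converse, suppose $p_1(\B{x})=f(\eta(\B{x}))/g(\eta(\B{x}))$ with $\gcd(f,g)=1$, $d_f\le n$ and $d_g\le n-1$, and assume for contradiction that $\max(d_f,d_g)\ge 2$. The strategy is to single out a coding coefficient $x=x_{e'e}$ furnished by the first graph-related property of Section~IV and to regard $p_1,\eta$ as rational functions of $x$ alone, with the other coefficients frozen at generic values. Since each $m_{ij}(\B{x})$ is multilinear (hence affine in any individual coefficient), the reduced forms $p_1=P(x)/Q(x)$ and $\eta=N(x)/D(x)$ satisfy $\deg_x P,\deg_x Q\le 2$ and $\deg_x N,\deg_x D\le 3$. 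Homogenising $f,g$ into forms $F(N,D),G(N,D)$ and clearing denominators in $p_1=f(\eta)/g(\eta)$ yields
\[
P(x)\,D(x)^{d_f-d_g}\,G(N(x),D(x))=Q(x)\,F(N(x),D(x))
\]
(with $F,G$ interchanged when $d_g>d_f$). The second graph-related property of Section~IV is exactly what is needed both to rule out spurious cancellations in $F(N,D),G(N,D)$ along the chosen coordinate and to guarantee that $N(x),D(x)$ attain their maximal degrees in $x$; a degree-in-$x$ count on the two sides then becomes incompatible unless $\max(d_f,d_g)\le 1$, contradicting the assumption.

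It therefore remains to eliminate the remaining M\"obius candidates $(\alpha z+\beta)/(\gamma z+\delta)$ of $\eta$ not already on the list $\{1,\eta,\eta/(\eta+1)\}$. Each such candidate, after clearing denominators, becomes a polynomial identity among products of the $m_{ij}(\B{x})$, and the same two graph-related properties render each such identity impossible by one final low-degree count in a single coordinate. The main obstacle I anticipate is the degree-counting step in the previous paragraph: carefully controlling the homogenisations $F(N,D)$ and $G(N,D)$ requires that the distinguished coordinate $x$ make $N$ and $D$ genuinely attain their stated degrees and share no common factor, and establishing such a coordinate for \emph{every} graph where the transfer functions are nondegenerate is precisely what the graph-related properties from Section~IV are engineered to provide.
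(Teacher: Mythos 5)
Your overall architecture matches the paper's: invoke Theorem~\ref{th_big_cond}, translate (\ref{eq_small_cond_1})--(\ref{eq_small_cond_3}) into $p_i\notin\{1,\eta,\dots\}$, note these lie in $\C{S}_n$ for the ``only if'' direction, and then collapse $\C{S}_n$ by restricting to a single coding variable. But the central step --- forcing $\max(d_f,d_g)\le 1$ --- is not actually established by your sketch, and as written it would fail. After clearing denominators, both sides of your identity $P(x)D(x)^{d_f-d_g}G(N(x),D(x))=Q(x)F(N(x),D(x))$ have the \emph{same} upper bound on $\deg_x$ (roughly $2+3d_f$), so a raw degree-in-$x$ comparison yields no contradiction. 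The paper's argument works for a different reason: Lemma~\ref{lemma_relative_prime_multivar} shows that $\alpha(\B{x})=f(\eta)t^d$ and $\beta(\B{x})=g(\eta)t^d$ are \emph{coprime}, hence equal to $cu(\B{x})$ and $cv(\B{x})$ where $p_1=u/v$ in lowest terms; the Linearization Property then pins the reduced forms of $u$ and $v$ to exact degrees $1$ and $0$ in $x_{ee'}$, so $\alpha$ and $\beta$ themselves have degrees exactly $1$ and $0$ after the partial assignment. Only then does expanding $\alpha=\sum_j a_j t^{k-j}s^j$, $\beta=\sum_j b_j t^{k-j}s^j$ in terms of $r=\deg_x s$ and $r'=\deg_x t$ produce the case analysis that forces a common root or common factor of $f$ and $g$ whenever $d\ge 2$. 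Your proposal never identifies $f(\eta)t^d/g(\eta)t^d$ with the coprime representation of $p_1$, which is the lemma that makes the degree count bite; without it the ``incompatible degree count'' claim is unsupported. You have also misassigned the role of the Square-Term Property: it does not prevent cancellations or guarantee that $N,D$ attain maximal degree; it states that the coefficient of $x_{ee'}^2$ in $m_{ab}m_{pq}$ equals that in $m_{aq}m_{pb}$, and it is used only later, in the degree-one case analysis.

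The endgame is also underspecified in a way that matters for correctness of the statement. The degree-one Möbius candidates cannot all be dispatched by ``one final low-degree count'': the paper needs seven explicit cases, the Square-Term Property applied to $q_1=\eta/p_1$ (showing $x_{ee'}^2$ would appear in one of numerator/denominator but not the other), and crucially the fact that every monomial in a product of transfer functions has coefficient $1$, which is what forces $a_0=a_1=1$ in the cases $p_1=a_0+a_1\eta$ and $p_1=a_1\eta/(b_0+b_1\eta)$. Even after all that, the reduction only reaches the four-element set $\{1,\eta,1+\eta,\eta/(1+\eta)\}$; the theorem's three-element lists per $p_i$ additionally require the structural results of Han et al.\ (that $p_1\neq 1+\eta$ and $p_2,p_3\neq\eta/(1+\eta)$ always hold), which your proposal does not account for. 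You should either add that citation or prove those exclusions separately, since the graph-related properties and degree counting alone do not eliminate, e.g., $p_1=1+\eta$.
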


Note that Eq. (\ref{eq_small_cond_1})-(\ref{eq_small_cond_3}) correspond to the following conditions respectively:
\begin{flalign}
p_1(\B{x}) \notin \bigg\{1, \eta(\B{x}), \frac{\eta(\B{x})}{1+\eta(\B{x})} \bigg\} \label{eq_small_cond_11}\\
p_2(\B{x}) \notin \{1, \eta(\B{x}), 1+\eta(\B{x}) \} \label{eq_small_cond_21} \\
p_3(\B{x}) \notin \{1, \eta(\B{x}), 1+\eta(\B{x}) \} \label{eq_small_cond_31}
\end{flalign}

As shown in the Main Theorem, the feasibility conditions for PBNA are irrelevant to $n$ for $n>1$.
This indicates that if PBNA is feasible for $n=2$, then it is feasible for any arbitrary $n>1$, and thus we can use PBNA to achieve half rate asymptotically.
Otherwise, if PBNA is not feasible for some $n>1$, PBNA doesn't even allow us to achieve any rate greater than $(\frac{2}{3},\frac{1}{3},\frac{1}{3})$.

The basic idea behind the Main Theorem is that we can compare the degree of a variable in $p_i(\B{x})$ with that of a rational function in $\C{S}_n$. This technique enables us to reduce $\C{S}_n$ to the form $\{\frac{a_0+a_1\eta(\B{x})}{b_0+b_1\eta(\B{x})}\}$. Thus, we only need to consider a finite number of rational functions, namely Eq. (\ref{eq_small_cond_1})-(\ref{eq_small_cond_3}). This enables an efficient algorithm for checking the feasibility of PBNA. The key for enabling this reduction lies in two graph-related properties, which we refer to as Linearization Property and Square-Term Property, as described in the next section.

\section{Graph-Related Properties}
Our key intuition is that $p_i(\B{x})$ is not an arbitrary function but depends on transfer functions, as specified in Eq. (\ref{def_pi}). 
Therefore, $p_i(\B{x})$ has special algebraic properties, which can be exploited to simplify Eq. (\ref{eq_big_cond}).

First note that all $p_i(\B{x})$'s are of the following general form: 
\begin{align*}
h(\B{x})=\frac{m_{ab}(\B{x})m_{pq}(\B{x})}{m_{aq}(\B{x})m_{pb}(\B{x})}, \quad  a, b, p, q \in\{1,2,3\},  a \neq p, b \neq q
\end{align*}

Furthermore, each path pair in $\C{P}_{ab}\times\C{P}_{pq}$ contributes a term in  $m_{ab}(\B{x})m_{pq}(\B{x})$, and each path pair in $\C{P}_{aq}\times\C{P}_{pb}$ contributes a term in $m_{aq}(\B{x})m_{pb}(\B{x})$:
\begin{flalign*}
\begin{split}
m_{ab}(\B{x})m_{pq}(\B{x})=\sum_{(P_1,P_2)\in \C{P}_{ab}\times\C{P}_{pq}}\nolimits t(P_1)t(P_2)\\
m_{aq}(\B{x})m_{pb}(\B{x})=\sum_{(P_3,P_4)\in \C{P}_{aq}\times\C{P}_{pb}}\nolimits t(P_3)t(P_4)
\end{split}
\end{flalign*}

\subsection{Linearization Property}

First, consider the following lemma, which provides an easy way to check whether $p_i(\B{x})\notin \{1,\eta(\B{x})\}$ (as in Section VI). 
The intuition is that we can multicast two symbols from $s'_b,s'_q$ to $d'_a,d'_p$ by network coding if and only if the minimum cut separating $s'_b, s'_q$ from $d'_a, d'_p$ is greater than one \cite{Koetter2003}.
\begin{lemma}
\label{lemma_transfer_inequal}
$m_{ab}(\B{x})m_{pq}(\B{x}) \neq m_{aq}(\B{x})m_{pb}(\B{x})$ if and only if there is disjoint path pair $(P_1,P_2)\in \C{P}_{ab}\times \C{P}_{pq}$ or $(P_3,P_4)\in \C{P}_{aq}\times \C{P}_{pb}$.
\end{lemma}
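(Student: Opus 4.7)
The plan is to prove the equivalence via a path-swapping (Lindstr\"om-Gessel-Viennot / Koetter-M\'edard style) involution on the set of path pairs. The key observation is that, since the base field $\BF_{2^m}$ has characteristic $2$, the inequality $m_{ab}(\B{x}) m_{pq}(\B{x}) \neq m_{aq}(\B{x}) m_{pb}(\B{x})$ coincides with $m_{ab}(\B{x}) m_{pq}(\B{x}) + m_{aq}(\B{x}) m_{pb}(\B{x}) \neq 0$, which is the determinant of the $2 \times 2$ transfer matrix from the sources $\{s'_b, s'_q\}$ to the sinks $\{d'_a, d'_p\}$. Expanding both products path-by-path reduces the problem to showing that this sum vanishes precisely when no edge-disjoint pair exists in either $\C{P}_{ab} \times \C{P}_{pq}$ or $\C{P}_{aq} \times \C{P}_{pb}$.

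Next I would partition the path pairs in $\C{P}_{ab} \times \C{P}_{pq}$ and in $\C{P}_{aq} \times \C{P}_{pb}$ into \textit{edge-disjoint} and \textit{intersecting} ones, and define an involution $\phi$ on the intersecting pairs from both sides as follows: given an intersecting pair $(P_1, P_2)$, let $e^*$ be the first shared edge in a fixed topological ordering of $G'$, and set $\phi(P_1, P_2) = (P'_1, P'_2)$ by swapping the portions of $P_1$ and $P_2$ after $e^*$. Since swapping exchanges the terminal sinks, $\phi$ sends a pair in $\C{P}_{ab} \times \C{P}_{pq}$ to one in $\C{P}_{aq} \times \C{P}_{pb}$ and vice versa. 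Acyclicity of $G'$ ensures that no shared edge of $(P_1, P_2)$ can lie before $e^*$ on one path and after $e^*$ on the other (this would violate any topological ordering), so the set of shared edges is preserved by the swap, $e^*$ remains the topologically first shared edge of $(P'_1, P'_2)$, and $\phi$ is an involution. A direct check shows that the multiset of edge-transition coefficients $x_{e'e}$ used by $(P_1, P_2)$ equals that used by $(P'_1, P'_2)$, so $t(P_1) t(P_2) = t(P'_1) t(P'_2)$. Summing the products over all pairs and pairing up intersecting contributions via $\phi$, which cancel in characteristic $2$, yields
\begin{flalign*}
m_{ab}(\B{x}) m_{pq}(\B{x}) + m_{aq}(\B{x}) m_{pb}(\B{x}) = \sum_{\substack{(P_1, P_2) \in \C{P}_{ab} \times \C{P}_{pq} \\ \text{disjoint}}} t(P_1) t(P_2) + \sum_{\substack{(P_3, P_4) \in \C{P}_{aq} \times \C{P}_{pb} \\ \text{disjoint}}} t(P_3) t(P_4).
\end{flalign*}

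Finally I would argue that each edge-disjoint pair contributes a squarefree monomial in the $x_{e'e}$ variables, whereas every monomial arising from an intersecting pair has a repeated variable, so there is no cancellation between disjoint and intersecting contributions in the expanded sum. Distinct disjoint pairs, whether within the same sum or across the two sums, yield distinct monomials, since the monomial determines the multiset of edge transitions used, from which the underlying pair of edge-disjoint paths is uniquely reconstructed given the fixed endpoints in $\{s'_b, s'_q\}$ and $\{d'_a, d'_p\}$. Hence the right-hand side vanishes if and only if there is no edge-disjoint pair in $\C{P}_{ab} \times \C{P}_{pq}$ or in $\C{P}_{aq} \times \C{P}_{pb}$, which is exactly the stated equivalence. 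The main obstacle is the careful verification that $\phi$ is a well-defined involution between the intersecting pairs on the two sides; this hinges on the acyclicity of $G'$ to guarantee that the set of shared edges is preserved under swapping, after which monomial-preservation and the distinctness of disjoint monomials follow from the definition of $t(P)$.
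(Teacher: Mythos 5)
This is correct and is essentially the paper's own argument: your tail-swap at a shared edge is exactly the construction in Lemma~\ref{lemma_equal_term}, and your converse rests on the same observation that a disjoint pair's monomial occurs exactly once across both products; the involution-in-characteristic-2 packaging merely makes the multiplicity bookkeeping more explicit than the paper's term-by-term matching. One incidental claim is false but harmless --- an intersecting pair need not yield a repeated variable, since the two paths may share an edge while entering and leaving it through different transitions --- yet nothing is lost because those contributions have already been cancelled exactly by $\phi$ before that claim is invoked.
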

\begin{proof}
See Appendix \ref{app_graph}.
\end{proof}

The first graph-related property states that $p_i(\B{x})$ can be transformed into its simplest non-trivial form (i.e., a linear function or the inverse of a linear function).
The key to Lemma \ref{lemma_nontrivial} is to find a subgraph $H$ and consider $h(\B{x})$ restricted to $H$, i.e., $h(\B{x}_H)=\frac{m_{ab}(\B{x}_H)m_{pq}(\B{x}_H)}{m_{aq}(\B{x_H})m_{pb}(\B{x}_H)}$, where $\B{x}_{H}$ represents the coding vector of $H$. 
Due to the graph structure induced by Lemma \ref{lemma_transfer_inequal}, we can always find $H$ such that some variable $x_{ee'}$ appears exclusively in the numerator or the denominator of $h(\B{x}_H)$. 
Thus, by assigning values to $\B{x}_H$ other than $x_{ee'}$, we can transform $h(\B{x}_H)$ into a linear function or the inverse of a linear function in terms of $x_{ee'}$.
 Since $h(\B{x}_H)$ can be acquired through a partial assignment to $\B{x}$, this transformation also holds for the complete graph $G$.

\begin{figure}[t]
\centering
\subfloat[$o(e_2)>o(e_3)$ and $o(e_1)<o(e_4)$]{\includegraphics[width=2.5cm]{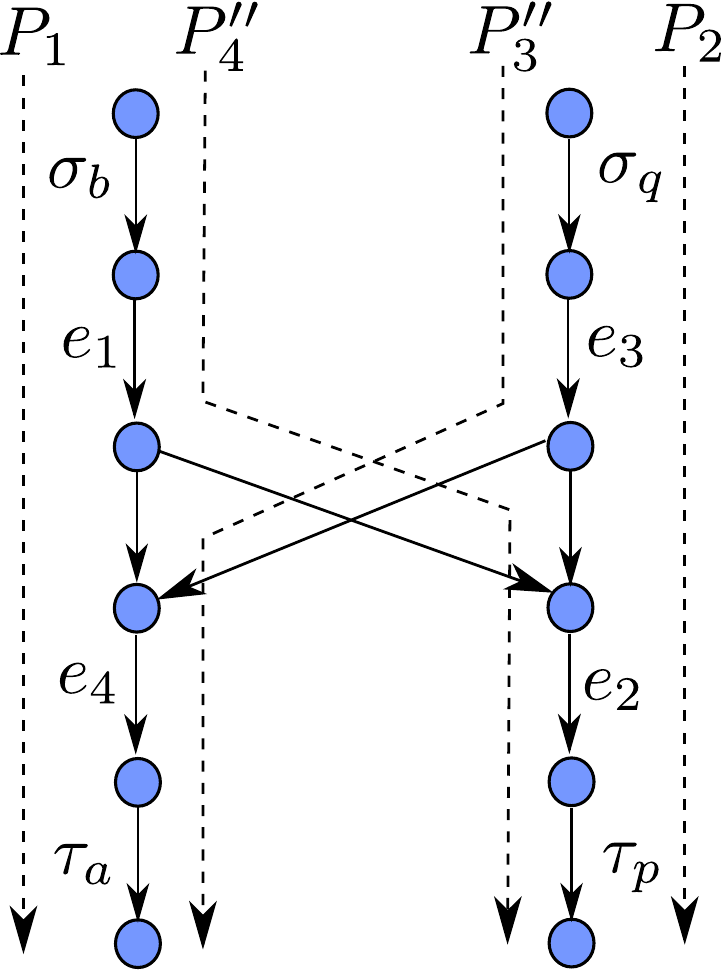}} \hspace*{10pt}
\subfloat[$o(e_2)>o(e_3)$ and $o(e_1)>o(e_4)$]{\includegraphics[width=2.5cm]{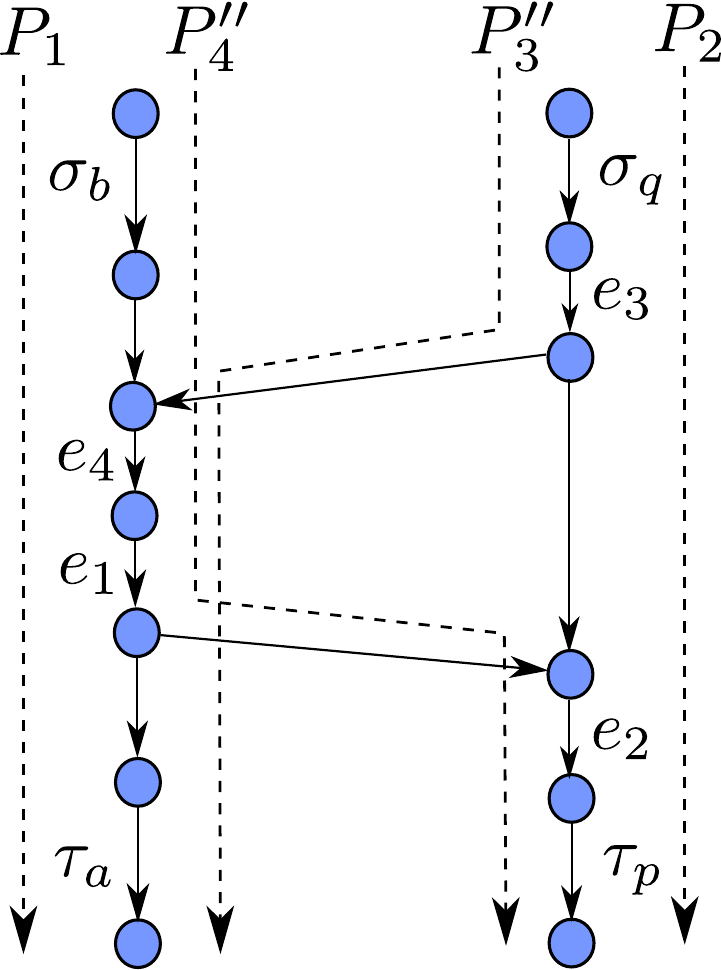}} \hspace*{10pt}
\subfloat[$o(e_2)<o(e_3)$]{\includegraphics[width=2.5cm]{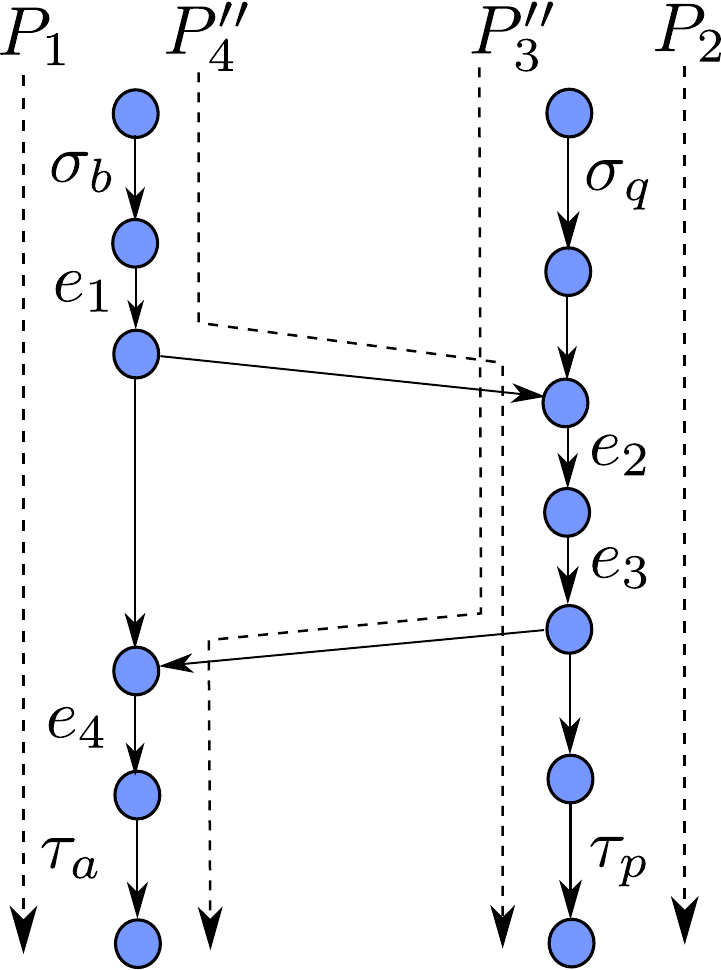}}
\caption{\small{The construction of $H$ (in the proof of the Linearization Property) enabled by Lemma \ref{lemma_transfer_inequal} ($P_1$ is disjoint with $P_2$)} \label{fig_h}}
\end{figure}

\begin{lemma}[Linearization Property]
\label{lemma_nontrivial}
Let $h(\B{x})= \frac{m_{ab}(\B{x})m_{pq}(\B{x})}{m_{aq}(\B{x})m_{pb}(\B{x})}= \frac{u(\B{x})}{v(\B{x})}$ such that $\mygcd(u(\B{x}),v(\B{x}))=1$. Assume $h(\B{x})$ is not constant. Then, for sufficiently large $m$, we can assign values to $\B{x}$ other than a variable $x_{ee'}$ such that $u(\B{x})$ and $v(\B{x})$ are transformed into either $u(x_{ee'}) = c_1x_{ee'} + c_0$, $v(x_{ee'})=c_2$ or $u(x_{ee'}) = c_2, v(x_{ee'})=c_1x_{ee'} + c_0$, where $c_0, c_1, c_2$ are constants in $\BF_{2^m}$, and $c_1c_2 \neq 0$.
\end{lemma}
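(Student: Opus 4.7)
The plan is to leverage Lemma \ref{lemma_transfer_inequal} and then perform a combinatorial analysis on a carefully chosen subgraph. Since $h(\B{x})$ is non-constant we have $m_{ab}(\B{x})m_{pq}(\B{x}) \ne m_{aq}(\B{x})m_{pb}(\B{x})$, so Lemma \ref{lemma_transfer_inequal} supplies a disjoint path pair. Without loss of generality I assume it lies in $\C{P}_{ab}\times \C{P}_{pq}$, say $(P_1,P_2)$ (the other case is symmetric and produces the alternative branch $u(x_{ee'})=c_2,\ v(x_{ee'})=c_1 x_{ee'}+c_0$ of the conclusion). Choose any $P_3 \in \C{P}_{aq}$ and $P_4 \in \C{P}_{pb}$ and let $H = P_1 \cup P_2 \cup P_3 \cup P_4$. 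The overall strategy is: find an edge variable $x_{ee'}$ inside $H$ that appears in some monomial expansion of $u(\B{x}_H)$ but in no monomial expansion of $v(\B{x}_H)$ (or the reverse); assign the remaining variables of $H$ via Schwartz--Zippel so the surviving polynomial is linear and nonzero in $x_{ee'}$; finally extend the assignment arbitrarily to coefficients outside $H$.

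Next I would study the topology of $H$. Because $P_1,P_2$ are vertex-disjoint while $P_3$ must connect $s'_q$ (start of $P_2$) to $d'_a$ (end of $P_1$), $P_3$ is forced to ``switch'' from the $P_2$-side to the $P_1$-side at some designated edges; an analogous statement holds for $P_4$ in the opposite direction. Let $e_1,e_2$ be the switching edges of $P_4$ (on $P_1$ and on $P_2$ respectively) and $e_3,e_4$ the switching edges of $P_3$ (on $P_2$ and on $P_1$). Fixing a topological order $o(\cdot)$ on the DAG, the relative orders of $\{o(e_1),o(e_4)\}$ and $\{o(e_2),o(e_3)\}$ split into the three canonical configurations drawn in Figure \ref{fig_h}. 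In each case, the only $(\C{P}_{ab},\C{P}_{pq})$ and $(\C{P}_{aq},\C{P}_{pb})$ path pairs available inside $H$ form a short, enumerable list, and one can point to a specific coefficient $x_{ee'}$ on a segment traversed by every pairing of one kind but by no pairing of the other kind.

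Once such an $x_{ee'}$ is isolated, the rest is routine. Since any edge coefficient appears at most once along any path, $u(\B{x}_H)$ has degree at most one in $x_{ee'}$, so $u(\B{x}_H) = \alpha\, x_{ee'} + \beta$ while $v(\B{x}_H) = \gamma$, with $\alpha,\beta,\gamma \in \BF_{2^m}[\B{x}_H \setminus \{x_{ee'}\}]$ and both $\alpha$ and $\gamma$ nonzero as formal polynomials ($\alpha$ because the monopolized edge genuinely contributes a monomial to $u$, and $\gamma$ because a $(\C{P}_{aq},\C{P}_{pb})$ pair exists in $H$). A Schwartz--Zippel argument then produces a value of $\B{x}_H \setminus \{x_{ee'}\}$ at which $\alpha\gamma \ne 0$ provided $\BF_{2^m}$ is large enough, and extending by any fixed assignment to coefficients outside $H$ delivers the required form $u(x_{ee'}) = c_1 x_{ee'} + c_0$, $v(x_{ee'}) = c_2$ with $c_1 c_2 \ne 0$.

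The hard part will be the combinatorial Step 2, i.e., verifying that in each of the three configurations in Figure \ref{fig_h} a monopolizing edge actually exists. The difficulty is that $P_3$ and $P_4$ are arbitrary, so one must argue uniformly in terms of the induced configuration rather than a specific diagram; what makes the argument go through is precisely that the disjointness of $P_1, P_2$ guaranteed by Lemma \ref{lemma_transfer_inequal} prevents the numerator and denominator pairings from being symmetric on the ``switching'' segments. Once that case analysis is settled, the Schwartz--Zippel invocation and the extension to the rest of $G$ are standard and require no new ideas.
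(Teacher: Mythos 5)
Your overall strategy matches the paper's: invoke Lemma \ref{lemma_transfer_inequal} to obtain a disjoint pair $(P_1,P_2)$, restrict to a subgraph $H$, isolate a coding variable appearing on only one side of $h$, and finish with a Schwartz--Zippel assignment extended to the rest of $G$. But the step you yourself flag as ``the hard part'' is precisely the content of the lemma, and your setup makes it unworkable. The paper does not take $H=P_1\cup P_2\cup P_3\cup P_4$ for arbitrary $P_3,P_4$; it first \emph{reroutes} them. From an arbitrary $(P'_3,P'_4)\in\C{P}_{aq}\times\C{P}_{pb}$ it extracts a single crossover segment $P'_4[e_1:e_2]$ (with $e_1\in P_1\cap P'_4$, $e_2\in P_2\cap P'_4$, and the segment between them disjoint from $P_1\cup P_2$), similarly $P'_3[e_3:e_4]$, and replaces $P'_4,P'_3$ by $P''_4=P_1[\sigma_b:e_1]\cup P'_4[e_1:e_2]\cup P_2[e_2:\tau_p]$ and $P''_3=P_2[\sigma_q:e_3]\cup P'_3[e_3:e_4]\cup P_1[e_4:\tau_a]$. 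Only after this surgery does $H$ consist of $P_1\cup P_2$ plus exactly two crossover segments, which is what makes the path pairs in $H$ enumerable and yields the concrete answer your proposal lacks: if $o(e_2)>o(e_3)$ the variables on $P_2[e_3:e_2]$ are absent from $m_{aq}(\B{x}_H)m_{pb}(\B{x}_H)$, and if $o(e_2)<o(e_3)$ the same holds for $P_1[e_1:e_4]$. Without the rerouting, $P_3$ and $P_4$ may weave in and out of $P_1\cup P_2$ arbitrarily often, there is no single well-defined pair of ``switching edges'' per path, and the claim that the available pairings ``form a short, enumerable list'' fails in general.

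Two smaller omissions. First, your justification that $u(\B{x}_H)$ is linear in $x_{ee'}$ (``any edge coefficient appears at most once along any path'') only bounds the degree within a single path gain $t(P)$; the numerator is a sum of products $t(P)t(P')$ over \emph{pairs} of paths, so a priori $x^2_{ee'}$ could occur, and ruling this out again relies on the restricted structure of the rerouted $H$. Second, the lemma concerns the reduced fraction $u/v$ with $\mygcd(u,v)=1$, not $h_1=m_{ab}m_{pq}$ and $h_2=m_{aq}m_{pb}$ themselves; the paper writes $u=c\,h_1/d$ and $v=c\,h_2/d$ with $d=\mygcd(h_1,h_2)$, observes that $x_{ee'}$ cannot occur in $d$, and folds $d(\B{r})\neq 0$ into the nonvanishing condition of the final assignment. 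Your proposal never mentions this gcd, so the passage from your analysis of the unreduced numerator and denominator to the stated conclusion about $u$ and $v$ is not closed.
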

\begin{proof}
In this proof, given a path $P$ and $e,e'\in P$, let  $P[e:e']$ denote the path segment along $P$ between $e$ and $e'$, including $e,e'$. 
We arrange the edges of $G'$ in topological order, and for $e\in E'$, let $o(e)$ denote $e$'s position in this ordering. 
Moreover, denote $h_1(\B{x})=m_{ab}(\B{x})m_{pq}(\B{x})$, $h_2(\B{x})=m_{aq}(\B{x})m_{pb}(\B{x})$ and $d(\B{x})=\gcd(h_1(\B{x}),h_2(\B{x}))$. 
Let $s_1(\B{x})=\frac{h_1(\B{x})}{d(\B{x})}$ and $s_2(\B{x})=\frac{h_2(\B{x})}{d(\B{x})}$. 
Hence $\gcd(s_1(\B{x}),s_2(\B{x}))=1$. 
It follows $u(\B{x}) = cs_1(\B{x}) \hspace*{15pt} v(\B{x}) = cs_2(\B{x})$, where $c$ is a non-zero constant in $\BF_{2^m}$. 
By Lemma \ref{lemma_transfer_inequal}, there exists disjoint path pair $(P_1,P_2)\in \C{P}_{ab}\times \C{P}_{pq}$ or $(P_3,P_4)\in \C{P}_{aq}\times \C{P}_{pb}$. Now we consider the first case.

Let $(P'_3,P'_4)\in \C{P}_{aq}\times \C{P}_{pb}$. 
Since $P_1,P'_4$ both originate at $\sigma_b$, and $P_2, P'_4$ both terminate at $\tau_p$, there exist $e_1 \in P_1 \cap P'_4$ and $e_2 \in P_2 \cap P'_4$ such that the path segment along $P'_4$ between $e_1$ and $e_2$ is disjoint with $P_1\cup P_2$. 
Similarly, there exist $e_3 \in P_2 \cap P'_3$ and $e_4 \in P_1 \cap P'_3$ such that the path segment between $e_3$ and $e_4$ along $P'_3$ is disjoint with $P_1\cup P_2$. 
Construct the following two paths: $P''_4=P_1[\sigma_b:e_1] \cup P'_4[e_1:e_2] \cup P_2[e_2:\tau_p]$ and $P''_3=P_2[\sigma_q:e_3]\cup P'_3[e_3:e_4]\cup P_1[e_4:\tau_a]$ (see Fig. \ref{fig_h}). 
Let $H$ denote the subgraph of $G'$ induced by $P_1\cup P_2 \cup P''_3 \cup P''_4$, and $\B{x}_H$ the coding vector of $H$. 
We will prove that the theorem holds for $H$. Note that since $h_1(\B{x}_H)$ and $h_2(\B{x}_H)$ are both non-zeros, $d(\B{x}_H) \neq 0$.

If $o(e_2)>o(e_3)$ (Fig. \ref{fig_h}(a)-(b)), the variables in $t(P_2[e_3:e_2])$ are absent in $h_2(\B{x}_H)$.
 We then arbitrarily select a variable $x_{ee'}$ from $t(P_2[e_3:e_2])$, and write $h_1(\B{x}_H)$ as $f(\B{x}'_H)x_{ee'} + g(\B{x}'_H)$, where $\B{x}'_H$ includes all the variables in $\B{x}_H$ other than $x_{ee'}$, and $f(\B{x}'_H), g(\B{x}'_H) \in \BF_{2^m}[\B{x}'_H]$. 
Meanwhile, $h_2(\B{x}_H)$ can be written as $h_2(\B{x}'_H) \in \BF_{2^m}[\B{x}'_H]$. Clearly, $x_{ee'}$ will not show up in $d(\B{x}_H)$ and thus it can also be written as $d(\B{x}'_H)\in \BF_{2^m}[\B{x}'_H]$. 
We then find values for $\B{x}'_H$, denoted by $\B{r}$, such that $f(\B{r})h_2(\B{r})d(\B{r}) \neq 0$. 
Finally, denote $c_0=cg(\B{r})d^{-1}(\B{r})$, $c_1=cf(\B{r})d^{-1}(\B{r})$ and $c_2=ch_2(\B{r})d^{-1}(\B{r})$ and the theorem holds.

On the other hand, if $o(e_2)<o(e_3)$ (see Fig. \ref{fig_h}(c)), the variables in $t(P_1[e_1:e_4])$ are absent in $h_2(\B{x}_H)$. 
We then select a variable $x_{ee'}$ from $t(P_1[e_1:e_4])$. 
Similar to above, it's easy to see that $u(\B{x})$ and $v(\B{x})$ can be transformed into $c_1x_{ee'}+c_0$ and $c_2$ respectively.

For the case where there exists disjoint path pair $(P_3,P_4)\in \C{P}_{aq}\times\C{P}_{pb}$, we can show that $u(\B{x})$ and $v(\B{x})$ can be transformed into $c_2$ and $c_1x_{ee'}+c_0$ respectively.
\end{proof}

\subsection{Square-Term Property}
\begin{figure}[t]
\centering
\includegraphics[width=4cm]{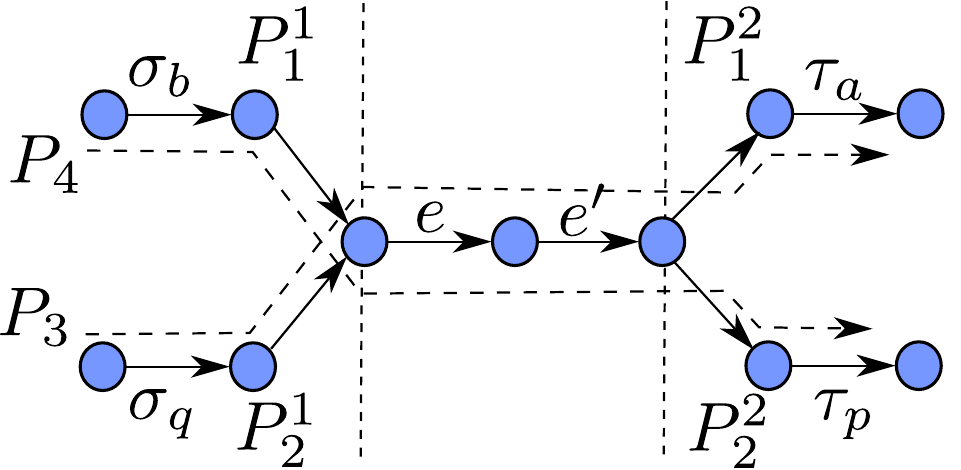}
\caption{\small{Illustration of Square-Term Property. A term with $x^2_{ee'}$ introduced by $(P_1,P_2)$ in the numerator of $h(\B{x})$ equals another term introduced by $(P_3,P_4)$ in the denominator of $h(\B{x})$.}} \label{fig_square_term}\vspace*{-0.6cm}
\end{figure}

The second graph-related property is stated in Lemma \ref{lemma_x2}: the coefficient of $x^2_{ee'}$ in the numerator of $h(\B{x})$ equals its counter-part in the denominator of $h(\B{x})$. 
Thus, if $x^2_{ee'}$ appears in the numerator of $h(\B{x})$ under some assignment to $\B{x}$, it must also appear in the denominator of $h(\B{x})$, and vice versa.
\begin{lemma}[Square-Term Property]
\label{lemma_x2}
Given a coding variable $x_{ee'}$, let $f_1(\B{x})$ and $f_2(\B{x})$ be the coefficients of $x^2_{ee'}$ in $m_{ab}(\B{x})m_{pq}(\B{x})$ and $m_{aq}(\B{x})m_{pb}(\B{x})$ respectively. Then $f_1(\B{x})=f_2(\B{x})$.
\end{lemma}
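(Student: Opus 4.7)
The plan is to prove the equality by constructing a weight-preserving involution on path-pair pairs. Expanding $m_{ab}(\B{x})m_{pq}(\B{x})=\sum_{(P_1,P_2)\in\C{P}_{ab}\times\C{P}_{pq}} t(P_1)t(P_2)$, the coefficient $f_1(\B{x})$ of $x_{ee'}^2$ is supported exactly on those pairs $(P_1,P_2)$ in which \emph{both} $P_1$ and $P_2$ traverse the consecutive edge pair $(e,e')$, so that each path contributes one factor of $x_{ee'}$. The goal is to match each such pair with a pair $(P_3,P_4)\in\C{P}_{aq}\times\C{P}_{pb}$ that contributes the identical monomial to $f_2(\B{x})$.

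The matching is a classical tail-swap at the shared junction. Given $(P_1,P_2)$ with $P_1$ passing through $(e,e')$ and $P_2$ passing through $(e,e')$, I define $P_4$ to be the prefix of $P_1$ up to and including $e'$ concatenated with the suffix of $P_2$ that follows $e'$, and $P_3$ to be the prefix of $P_2$ up to $e'$ concatenated with the suffix of $P_1$ after $e'$ (see Fig.~\ref{fig_square_term}). Endpoint bookkeeping yields $P_3\in\C{P}_{aq}$ and $P_4\in\C{P}_{pb}$, and both new paths still contain $(e,e')$, so $(P_3,P_4)$ contributes a monomial with $x_{ee'}^2$ to $m_{aq}(\B{x})m_{pb}(\B{x})$. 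The key algebraic check is that the multiset of coding coefficients along $P_1$ and $P_2$ is preserved by the swap, so that
\begin{equation*}
t(P_1)\,t(P_2)=t(P_3)\,t(P_4).
\end{equation*}
Since applying the swap a second time returns $(P_1,P_2)$, the map is an involution and therefore a bijection between the contributing pairs of the two products. Summing yields $f_1(\B{x})=f_2(\B{x})$.

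The main obstacle to watch for is well-definedness of the swap: I must verify that $P_3,P_4$ are legitimate paths rather than merely edge-walks, and that $(e,e')$ is not traversed multiple times within a single $P_1$ or $P_2$ (which would make the ``position'' at which to swap ambiguous). Both issues are handled by acyclicity of $G'$: any edge or vertex repetition in a walk would force a directed closed walk, which is impossible in a DAG. Hence $e'$ occurs at a single well-defined position in each of $P_1,P_2$, the concatenated walks $P_3,P_4$ are simple paths, and the involution is unambiguously defined on the set of pairs contributing to $f_1$ and $f_2$, completing the argument.
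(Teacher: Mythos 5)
Your proposal is correct and is essentially the paper's own argument: the same tail-swap of path suffixes at the shared edge pair $(e,e')$, producing $(P_3,P_4)\in\C{P}_{aq}\times\C{P}_{pb}$ with $t(P_1)t(P_2)=t(P_3)t(P_4)$, and the same bijection between contributing pairs (the paper asserts the map is one-to-one directly, while you obtain it via the involution property, and you add the explicit DAG-based well-definedness check that the paper leaves implicit).
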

\begin{proof}
For any $x_{ee'}$, define $\C{Q}_1 = \{(P_1,P_2)\in \C{P}_{ab}\times \C{P}_{pq}: x^2_{ee'} \mid t(P_1)t(P_2)\}$ and $\C{Q}_2 = \{(P_3,P_4)\in \C{P}_{aq}\times \C{P}_{pb}: x^2_{ee'} \mid t(P_3)t(P_4)\}$. Consider a path pair $(P_1,P_2)\in \C{Q}_1$. Since the degree of $x_{ee'}$ in $t(P_1)$ and $t(P_2)$ is at most one, we must have $x_{ee'} \mid t(P_1)$ and $x_{ee'} \mid t(P_2)$. Thus $e,e'\in P_1\cap P_2$. Let $P^1_1, P^2_1$ be the parts of $P_1$ before $e$ and after $e'$ respectively. Similarly, define $P^1_2$ and $P^2_2$. Then construct two new paths: $P_3=P^1_2 \cup \{e,e'\} \cup P^2_1$ and $P_4=P^1_1\cup \{e,e'\} \cup P^2_2$ (see Fig. \ref{fig_square_term}). Clearly, $t(P_1)t(P_2)=t(P_3)t(P_4)$, and thus $(P_3,P_4) \in \C{Q}_2$. The above method establishes a one-to-one mapping $\phi:\C{Q}_1\rightarrow\C{Q}_2$, such that for $\phi((P_1, P_2))=(P_3,P_4)$, $t(P_1)t(P_2)=t(P_3)t(P_4)$. Hence, $f_1(\B{x})=\frac{1}{x^2_{ee'}}\sum_{(P_1,P_2)\in\C{Q}_1}t(P_1)t(P_2)=\frac{1}{x^2_{ee'}}\sum_{(P_3,P_4)\in \C{Q}_2}t(P_3)t(P_4)=f_2(\B{x})$.
\end{proof}

\section{Feasibility Condition of \NA \label{sec_cond_na}}
In this section, we provide the proofs of Theorems \ref{th_na_eta_trivial}, \ref{th_big_cond} and 3 (Main Theorem).

\subsection{$\eta(\B{x})$ Is Constant}
\begin{proof}[Proof of Theorem \ref{th_na_eta_trivial}]
In this case, $\B{T}$ is identity matrix. We set $L_1(n)=L_2(n)=1$ and $\B{V}_1=(\theta_1 \hspace*{5pt} \theta_2)^T$, where $\theta_1, \theta_2$ are arbitrary variables, and $\B{A}, \B{B}, \B{C}$ are all scalar  ones. It is easy to see that Eq. (\ref{eq_v1_align}) is satisfied. Moreover, if $p_i(\B{x})$ is not constant, we have 
\begin{flalign*}
\psi_i(\xi) = \det\begin{pmatrix}
\theta_1 & p_i(\B{x}^1)\theta_1 \\
\theta_2 & p_i(\B{x}^2)\theta_2
\end{pmatrix} = \theta_1\theta_2(p_i(\B{x}^1)-p_i(\B{x}^2)) \neq 0
\end{flalign*}
and $\SR{B}'_i$ is satisfied. Thus $(\frac{1}{2}, \frac{1}{2}, \frac{1}{2})$ is feasible through PBNA. Conversely, if $p_i(\B{x})$ is constant, $\SR{B}'_i$ is violated, and thus $(\frac{1}{2}, \frac{1}{2}, \frac{1}{2})$ is not feasible through PBNA.
\end{proof}

\subsection{$\eta(\B{x})$ Is Not Constant}
Due to the importance of $\B{V}_1$, we first consider how to construct $\B{V}_1$ which satisfies (\ref{eq_v1_align}). 
The construction of $\B{V}_1$ involves solving a system of linear equations:
\begin{flalign}
\label{eq_v1_align_2}
\B{r}(z)(z\B{C}-\B{BA})=0
\end{flalign}
where $\B{r}(z)=(r_1(z),\cdots,r_{n+1}(z))\in \BF^{n+1}_{2^m}(z)$. 
It is easy to see that $z\B{C}-\B{BA}$ is a matrix on $\BF_{2^m}(z)$. 
Assume $\B{r}_0(z)$ is a non-zero solution to (\ref{eq_v1_align_2}). 
Substitute $z$ with $\eta(\B{x}^i)$, and we have
\begin{flalign*}
\eta(\B{x}^i)\B{r}_0(\eta(\B{x}^i))\B{C} = \B{r}_0(\eta(\B{x}^i))\B{BA}
\end{flalign*}
Finally, construct the following precoding matrix
\begin{flalign*}
\B{V}^T_1=(\B{r}^T_0(\eta(\B{x}^1)) \hspace*{6pt} \B{r}^T_0(\eta(\B{x}^2)) \hspace*{6pt} \cdots \hspace*{6pt} \B{r}^T_0(\eta(\B{x}^{2n+1})))
\end{flalign*}
Apparently, $\B{V}_1$ satisfies (\ref{eq_v1_align}). 
Hence, each non-zero solution to (\ref{eq_v1_align_2}) corresponds to a row of $\B{V}_1$ satisfying (\ref{eq_v1_align}). 
Conversely, it is straightforward to see that each row of $\B{V}_1$ satisfying (\ref{eq_v1_align}) corresponds to a solution to (\ref{eq_v1_align_2}).

\begin{example}
\label{ex_v1_not_unique}
As an example, consider the case where $n=2$ and $m=2$. 
Let $\alpha$ be the primitive element of $\BF_4$ such that $\alpha^3=1$ and $\alpha^2+\alpha+1=0$. Moreover, let $\B{A}=\B{I}_2$ and
\begin{flalign*}
\B{C}=\begin{pmatrix}
1 & \alpha \\
\alpha & 1 \\
\alpha^2 & 1
\end{pmatrix} \hspace*{10pt}
\B{B}=\begin{pmatrix}
\alpha^2 & \alpha \\
1 & 1 \\
1 & \alpha
\end{pmatrix}
\end{flalign*}
Apparently, $\myrank(\B{C})=\myrank(\B{B})=2$. 
It's easy to verify that $\B{r}(z)=(\alpha^2z^2+\alpha, z+\alpha, z^2+\alpha z+\alpha^2)$ satisfies equation (\ref{eq_v1_align_2}). 
Thus, we substitute $z$ with $\eta(\B{x}^j)$ and construct $\B{V}^T_1=(\B{r}^T(\eta(\B{x}^1)) \hspace*{8pt} \B{r}^T(\eta(\B{x}^2)) \hspace*{6pt}\cdots\hspace*{6pt} \B{r}^T(\eta(\B{x}^5)))$. 
According to the above discussion, equation (\ref{eq_v1_align}) is satisfied. \myqed
\end{example}

Using (\ref{eq_v1_align_2}), we can derive the general form of $\B{V}_1$ which satisfies $\B{V}_1$.
\begin{lemma}
\label{lemma_v1}
Any $\B{V}_1$ satisfying (\ref{eq_v1_align}) has the form $\B{V}_1=\B{G}\B{V}^*_1\B{F}$, where $\B{V}^*_1$ is defined in (\ref{eq_v1}), $\B{F}$ is an $(n+1)\times(n+1)$ matrix, and $\B{G}$ is a $(2n+1)\times (2n+1)$ diagonal matrix, with the $(i,i)$ element being $f_i(\eta(\B{x}^i))$, where $f_i(z)$ is a non-zero rational function in $\BF_{2^m}(z)$. Moreover, the $(n+1)$th row of $\B{FC}$ and the 1st row of $\B{FBA}$ are both zero vectors.
\end{lemma}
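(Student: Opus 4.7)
The plan is to leverage the characterization established in the paragraph preceding the lemma: a matrix $\B{V}_1$ satisfies (\ref{eq_v1_align}) if and only if each of its rows $\B{v}_i$ lies in the left null space of $\eta(\B{x}^i)\B{C}-\B{BA}$. My first step will be to analyze the $(n+1)\times n$ matrix $z\B{C}-\B{BA}$ over the rational function field $\BF_{2^m}(z)$. Since $\B{C}$ has column rank $n$, there is an $n\times n$ submatrix $\B{C}'$ of $\B{C}$ with $\det\B{C}'\neq 0$; the determinant of the corresponding submatrix of $z\B{C}-\B{BA}$ has leading term $z^n\det\B{C}'$ and is therefore a non-zero polynomial in $z$. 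Hence $z\B{C}-\B{BA}$ has rank $n$, and its left null space is one-dimensional. By Cramer's rule, a basis vector is $\B{r}(z)=(r_1(z),\ldots,r_{n+1}(z))$ where $r_j(z)$ is the signed determinant of the $n\times n$ submatrix obtained by deleting row $j$; each $r_j(z)$ is then a polynomial of degree at most $n$, and I can collect its coefficients as $r_j(z)=\sum_{k=0}^{n}F_{k+1,j}z^k$, defining a constant $(n+1)\times(n+1)$ matrix $\B{F}$ with $\B{r}(z)=(1,z,\ldots,z^n)\B{F}$.

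Next I will specialize $z\mapsto\eta(\B{x}^i)$. Here the standing assumption that $\eta(\B{x})$ is not constant is essential: any non-zero polynomial $p(z)\in\BF_{2^m}[z]$ yields a non-zero rational function $p(\eta(\B{x}^i))$, since a non-constant $\eta(\B{x}^i)$ cannot be a root of $p$ identically. Applied to the non-vanishing maximal minor found above, this shows that $\eta(\B{x}^i)\B{C}-\B{BA}$ still has rank $n$ over $\BF_{2^m}(\B{x}^i)$, so its left null space remains one-dimensional, spanned by $\B{r}(\eta(\B{x}^i))$. Consequently every row $\B{v}_i$ of $\B{V}_1$ is a scalar multiple of $\B{r}(\eta(\B{x}^i))$; call the scalar $f_i(\eta(\B{x}^i))$. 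In matrix form $\B{v}_i = f_i(\eta(\B{x}^i))\,(1,\eta(\B{x}^i),\ldots,\eta(\B{x}^i)^n)\,\B{F}$, which is exactly the $i$-th row of $\B{G}\B{V}^*_1\B{F}$ once $\B{G}$ is declared diagonal with $(i,i)$ entry $f_i(\eta(\B{x}^i))$.

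Finally, to extract the boundary conditions on $\B{F}\B{C}$ and $\B{F}\B{B}\B{A}$, I will rewrite the null-space identity $\B{r}(z)(z\B{C}-\B{BA})=0$ in terms of $\B{F}$. Substituting $\B{r}(z)=(1,z,\ldots,z^n)\B{F}$ and transposing the two summands gives
\begin{flalign*}
(z,z^2,\ldots,z^{n+1})\,\B{F}\B{C} \;=\; (1,z,\ldots,z^n)\,\B{F}\B{B}\B{A}.
\end{flalign*}
Matching the coefficient of $z^0$ on both sides: the left contributes nothing, so the first row of $\B{F}\B{B}\B{A}$ must vanish. Matching the coefficient of $z^{n+1}$: the right contributes nothing, so the $(n+1)$-th row of $\B{F}\B{C}$ must vanish. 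These are precisely the two boundary conditions claimed by the lemma.

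The step I expect to be the main obstacle is the rank-preservation argument under the substitution $z\mapsto\eta(\B{x}^i)$; this is exactly where the non-constancy of $\eta$ enters, since otherwise the $n\times n$ minors of $z\B{C}-\B{BA}$ could vanish simultaneously, enlarging the null space and destroying the uniqueness-up-to-scalar that underlies the factorization $\B{V}_1=\B{G}\B{V}^*_1\B{F}$. The remaining steps are then straightforward polynomial book-keeping driven by the structure of $\B{r}(z)$.
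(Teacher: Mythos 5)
Your proof is correct and follows essentially the same route as the paper's (Lemma~\ref{lemma_get_v1_1} in Appendix~\ref{app_v1} plus the coefficient-matching step): establish that $z\B{C}-\B{BA}$ has rank $n$ over $\BF_{2^m}(z)$, obtain the degree-$\le n$ polynomial basis vector of the one-dimensional left null space via cofactors, write it as $(1,z,\ldots,z^n)\B{F}$, and read off the two boundary conditions from the coefficients of $z^0$ and $z^{n+1}$. The only differences are cosmetic: your rank argument uses the leading term $z^n\det\B{C}'$ of a maximal minor rather than the paper's column-independence computation, and you make explicit the preservation of rank under the substitution $z\mapsto\eta(\B{x}^i)$ (via non-constancy of $\eta$), a point the paper leaves implicit.
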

\begin{proof}
See Appendix \ref{app_v1}.
\end{proof}

Lemma \ref{lemma_v1} indicates that there is a direct relation between $\B{V}^*_1$ and the general form of $\B{V}_1$, which we use to prove that Eq. (\ref{eq_big_cond}) is also necessary for the feasibility of PBNA.

\begin{proof}[Proof of Theorem \ref{th_big_cond}]
The sufficiency of (\ref{eq_big_cond}) was proved in \cite{Das2010}. 
Now assume $p_i(\B{x})=\frac{f(\eta(\B{x}))}{g(\eta(\B{x}))} \in \C{S}_n$, where $f(z)=\sum^n_{k=0}a_kz^k$ and $g(z)=\sum^{n-1}_{k=0}b_kz^k$. 
We will prove that for any $\B{V}_1$ satisfying (\ref{eq_v1_align}), $\SR{B}'_i$ cannot be satisfied, thus $\B{R}^*_n$ is not NA-feasible. 
Apparently, if $\myrank(\B{V}_1)< n+1$, $\SR{B}'_i$ is violated. Thus, in the rest of this proof, we assume $\myrank(\B{V}_1)=n+1$. 

By Lemma \ref{lemma_v1}, $\B{V}_1=\B{G}\B{V}^*_1\B{F}$, where $\B{F}$ is an $(n+1)\times (n+1)$ invertible matrix. 
The $j$th row of $\B{V}_1$ is $\B{r}_j=f_j(\eta(\B{x}^j))(1, \eta(\B{x}^j), \cdots, \eta^n(\B{x}^j))\B{F}$. 
Since the $(n+1)$th row of $\B{FC}$ is zero, we have 
\begin{align*}
\B{r}_j\B{C} = f_j(\eta(\B{x}^j))(1, \eta(\B{x}^j), \cdots, \eta^{n-1} (\B{x}^j))\B{H}
\end{align*}
where $\B{H}$ consists of the top $n$ rows of $\B{FC}$ and $\myrank(\B{H})=n$. 
Let $\B{a}=(a_0,a_1,\cdots,a_n)^T$ and $\B{b}=(b_0,b_1,\cdots,b_{n-1})^T$. 
For $i=1,2$, we define $\B{a}'=\B{F}^{-1}\B{a}$ and $\B{b}'=\B{H}^{-1}\B{b}$. 
It follows
\begin{flalign*}
\B{r}_j\B{a}' &= f_j(\eta(\B{x}^j))(1, \eta(\B{x}^j), \cdots, \eta^n(\B{x}^j))\B{F}\B{a}' \\
&= f_j(\eta(\B{x}^j))(1, \eta(\B{x}^j), \cdots, \eta^n(\B{x}^j))\B{a} \\
&= f_j(\eta(\B{x}^j))f(\eta(\B{x}^j)) \\
&= f_j(\eta(\B{x}^j))p_i(\B{x}^j)g(\eta(\B{x}^j)) \\
&= p_i(\B{x}^j)f_j(\eta(\B{x}^j))(1, \eta(\B{x}^j), \cdots, \eta^{n-1}(\B{x}^j))\B{b} \\
&= p_i(\B{x}^j)f_j(\eta(\B{x}^j))(1, \eta(\B{x}^j), \cdots, \eta^{n-1}(\B{x}^j))\B{Hb}' \\
&= p_i(\B{x}^j)\B{r}_j\B{Cb}'
\end{flalign*}
Hence, the columns of $(\B{V}_1 \hspace*{6pt} \B{P}_i\B{V}_1\B{C})$ are linearly dependent, violating $\SR{B}'_i$. 
Similarly, we can prove the case of $i=3$.
\end{proof}

For the proof of the Main Theorem, we need to rearrange the ratio of rational functions $\frac{f(\eta(\B{x}))}{g(\eta(\B{x}))}$ in Eq. (\ref{eq_big_cond}) to a ratio of coprime polynomials with variables $\B{x}$. To this end, we use a property of polynomials stated in the following lemma. 
\begin{lemma}
\label{lemma_relative_prime_multivar}
Let $\BF$ be a field. $z$ is a variable and $\B{y}=(y_1,y_2,\cdots,y_k)$ is a vector of variables. 
Consider four non-zero polynomials $f(z),g(z)\in \BF[z]$  and $s(\B{y}), t(\B{y})\in \BF[\B{y}]$, such that $\mygcd(f(z),g(z))=1$ and $\mygcd(s(\B{y}), t(\B{y}))=1$. 
Denote $d=\max\{d_f,d_g\}$. 
Define two polynomials in $\BF[\B{y}]$: $\alpha(\B{y})=f(\frac{s(\B{y})}{t(\B{y})})t^d(\B{y})$ and $\beta(\B{y})=g(\frac{s(\B{y})}{t(\B{y})})t^d(\B{y})$. 
Then $\mygcd(\alpha(\B{y}),\beta(\B{y}))=1$.
\end{lemma}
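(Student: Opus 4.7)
The plan is to assume for contradiction that there is a non-constant irreducible $p(\B{y}) \in \BF[\B{y}]$ dividing both $\alpha(\B{y})$ and $\beta(\B{y})$, and then to deduce $p(\B{y}) \mid t(\B{y})$ and $p(\B{y}) \mid s(\B{y})$, contradicting $\mygcd(s(\B{y}), t(\B{y}))=1$.

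First I would rewrite $\alpha$ and $\beta$ in explicit polynomial form. Writing $f(z)=\sum_{i=0}^{d_f} a_i z^i$ and $g(z)=\sum_{i=0}^{d_g} b_i z^i$, the identities
\begin{align*}
\alpha(\B{y}) = \sum_{i=0}^{d_f} a_i\, s(\B{y})^i\, t(\B{y})^{d-i}, \qquad
\beta(\B{y}) = \sum_{i=0}^{d_g} b_i\, s(\B{y})^i\, t(\B{y})^{d-i}
\end{align*}
hold because $d \ge d_f, d_g$, so every exponent of $t$ is non-negative. Next, because $\mygcd(f(z),g(z))=1$ in the PID $\BF[z]$, Bezout gives $u(z), v(z) \in \BF[z]$ with $u(z)f(z)+v(z)g(z)=1$. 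The key step is to clear denominators after the substitution $z \mapsto s(\B{y})/t(\B{y})$: letting $\tilde u(\B{y}) := u(s/t)\, t^{d_u}$ and $\tilde v(\B{y}) := v(s/t)\, t^{d_v}$ (polynomials by the same reasoning), and setting $N := d + \max(d_u, d_v)$, a direct computation yields a polynomial identity of the form
\begin{align*}
A(\B{y})\, \alpha(\B{y}) + B(\B{y})\, \beta(\B{y}) = t(\B{y})^{N},
\end{align*}
where $A = \tilde u\, t^{\max(d_u,d_v)-d_u}$ and $B = \tilde v\, t^{\max(d_u,d_v)-d_v}$. Hence any common irreducible divisor $p(\B{y})$ of $\alpha, \beta$ must divide $t(\B{y})^{N}$, and so $p(\B{y}) \mid t(\B{y})$.

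It remains to derive $p(\B{y}) \mid s(\B{y})$. Reducing the explicit formulas for $\alpha$ and $\beta$ modulo $p$, every term with a positive power of $t$ vanishes, so $\alpha \equiv a_{d_f}\, s^{d_f}\, [t^{\,d-d_f} \bmod p]$ and similarly for $\beta$. Since $d = \max(d_f, d_g)$, at least one of $d_f = d$ or $d_g = d$ holds; in that case the corresponding reduction becomes $a_d s^d \pmod p$ (resp.\ $b_d s^d$) with $a_d \ne 0$ (resp.\ $b_d \ne 0$). Irreducibility of $p$ then forces $p \mid s$, which together with $p \mid t$ contradicts $\mygcd(s,t)=1$.

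The routine bookkeeping is the main pitfall: one must choose the exponents carefully so that $\tilde u$, $\tilde v$, $A$, $B$ are genuine polynomials (not Laurent expressions) and so that the total power of $t$ on the right is correct; using $d = \max(d_f, d_g)$ together with $N = d + \max(d_u, d_v)$ handles both $d_f < d$ and $d_g < d$ uniformly and avoids a case split in the Bezout step. Everything else is formal.
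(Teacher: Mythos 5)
Your proof is correct, but it takes a genuinely different route from the paper's. The paper proves the claim by first establishing a univariate version (its Lemma \ref{lemma_relative_prime_univar}, argued by passing to an extension field where a hypothetical common factor has a root $x_0$, showing $t(x_0)\neq 0$, and exhibiting $z-\frac{s(x_0)}{t(x_0)}$ as a common divisor of $f$ and $g$), and then lifting it to several variables through a fairly long chain of auxiliary results (Lemma \ref{lemma_partial_div}, Corollaries \ref{cor_partial_div}--\ref{cor_relative_prime_2}, Theorems \ref{th_multivar_div} and \ref{th_multivar_gcd}) that relate divisibility and gcd in $\BF[\B{y}]$ to divisibility and gcd in $\BF(\B{y}_i)[y_i]$ for each $i$. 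Your argument instead works directly in the UFD $\BF[\B{y}]$: the B\'ezout identity $u(z)f(z)+v(z)g(z)=1$, evaluated at $z=s/t$ in the fraction field and cleared of denominators, yields $A(\B{y})\alpha(\B{y})+B(\B{y})\beta(\B{y})=t(\B{y})^N$ with $A,B$ genuine polynomials (your exponent bookkeeping with $N=d+\max(d_u,d_v)$ checks out), so any common irreducible (hence prime) factor $p$ of $\alpha,\beta$ divides $t$; reducing $\alpha=\sum_i a_i s^i t^{d-i}$ modulo $p$ then kills every term except possibly $i=d$, and since at least one of $d_f,d_g$ equals $d$ with nonzero leading coefficient, $p\mid s$ as well, contradicting $\mygcd(s,t)=1$. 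This is substantially shorter and more elementary: it avoids extension fields and the entire univariate-over-$\BF(\B{y}_i)$ apparatus of Appendix C, and it handles the degenerate cases ($t$ constant, or $\alpha$ or $\beta$ vanishing identically) without extra care, since $p\mid t^N$ is already absurd for constant $t$. What the paper's longer route buys is the reusable machinery of Theorems \ref{th_multivar_div} and \ref{th_multivar_gcd}, which are of independent interest; but for the lemma itself your B\'ezout argument is the cleaner proof. The only points worth making explicit in a final write-up are that $\BF[\B{y}]$ is a UFD (so a nonconstant common divisor has an irreducible factor, and irreducibles are prime) and that the evaluation $z\mapsto s/t$ is legitimate because $t\neq 0$ as a polynomial.
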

\begin{proof}
See Appendix \ref{app_polynomial}.
\end{proof}

The proof of the Main Theorem consists of three steps. 
In the first step, we use degree-counting technique and Linearization Property to reduce $\C{S}_n$ to the form $\{\frac{a_0+a_1\eta(\B{x})}{b_0+b_1\eta(\B{x})}\}$. 
In the second step, we use Linearization Property and Square Term Property to further reduce $\C{S}_n$ to the four rational functions in $\C{S}'=\{1, \eta(\B{x}), 1+\eta(\B{x}), \frac{\eta(\B{x})}{1+\eta(\B{x})}\}$.
Finally, we use the results from \cite{Han2011} to rule out the remaining redundant conditions.

\begin{proof}[Proof of the Main Theorem]
Clearly, the necessity of (\ref{eq_small_cond_1})-(\ref{eq_small_cond_3}) (or Eq. (\ref{eq_small_cond_11})-(\ref{eq_small_cond_31})) follows directly from Theorem \ref{th_big_cond}. 
Now assume for $i\in \{1,2,3\}$,  $p_i(\B{x}) \notin \C{S}'$. 
We will prove that $p_i(\B{x}) \notin \C{S}_n$ and thus $\B{R}^*_n$ is NA-feasible by Theorem \ref{th_big_cond}. 
We only prove $p_1(\B{x}) \notin \C{S}_n$. The other cases follow similar lines. 
By contradiction, assume there exists $p_1(\B{x})=\frac{f(\eta(\B{x}))}{g(\eta(\B{x}))}\in \C{S}_n$, where $f(z) = \sum^k_{i=0} a_iz^i$ and $g(z) = \sum^l_{i=0} b_iz^i$ such that $a_lb_k \neq 0$ and $\gcd(f(z),g(z))=1$. 
Moreover, let $p_1(\B{x})=\frac{u(\B{x})}{v(\B{x})}$ and $\eta(\B{x})=\frac{s(\B{x})}{t(\B{x})}$, where $\gcd(u(\B{x}),v(\B{x}))=\gcd(s(\B{x}),t(\B{x}))=1$. 
Let $d=\max\{k,l\}$. 
Define the following two polynomials $\alpha(\B{x})=f(\eta(\B{x}))t^d(\B{x})$ and $\beta(\B{x})=g(\eta(\B{x}))t^d(\B{x})$. 
According to Lemma \ref{lemma_relative_prime_multivar}, $\gcd(\alpha(\B{x}),\beta(\B{x}))=1$. 
Thus, we have $\alpha(\B{x}) = cu(\B{x})$, and $\beta(\B{x}) = cv(\B{x})$, where $c\in \BF_{2^m}$ and $c\neq 0$.

According to Lemma \ref{lemma_nontrivial}, there exists an assignment to $\B{x}$ under which $u(\B{x})$ and $v(\B{x})$ are transformed into either $u(x_{ee'})=c_1x_{ee'}+c_0$, $v(x_{ee'})=c_2$ or $u(x_{ee'})=c_2$, $v(x_{ee'})=c_1x_{ee'}+c_0$. 
We only consider the first case. The proof for the other case is similar. 
In this case, $\alpha(\B{x})$ and $\beta(\B{x})$ are transformed into $\alpha(x_{ee'}) = cc_1x_{ee'}+cc_0$ and $\beta(x_{ee'}) = cc_2$ respectively.

First, we prove that both $t(x_{ee'})$ and $s(x_{ee'})$ are non-zeros. Assume $t(x_{ee'})=0$. 
If $k\neq l$, at least one of $\alpha(x_{ee'})$ and $\beta(x_{ee'})$ equals zero, which is impossible. 
On the other hand, if $k=l$, we have $\alpha(x_{ee'})=a_ks^k(x_{ee'})$ and $\beta(x_{ee'})=b_ks^k(x_{ee'})$. 
It follows that $cc_1x_{ee'}+cc_0 = a_kb^{-1}_kcc_2$, which is impossible. 
Thus we have proved that $t(x_{ee'})\neq 0$. 
Similarly, we can also prove that $s(x_{ee'})\neq 0$. 

We then prove that $d=1$. 
By contradiction, assume $d\ge 2$. 
We first consider the case where $l\le k$ and thus $d=k$. 
In this case, we have 
\begin{align*}
&\alpha(x_{ee'}) = \sum^k_{j=0}\nolimits a_jt^{k-j}(x_{ee'})s^j(x_{ee'}) = cc_1x_{ee'}+cc_0 \\
&\beta(x_{ee'}) = \sum^l_{j=0}\nolimits b_jt^{k-j}(x_{ee'})s^j(x_{ee'}) = cc_2
\end{align*}
Assume $s(x_{ee'})=\sum^r_{j=0}s_jx^j_{ee'}$ and $t(x_{ee'})=\sum^{r'}_{j=0}t_jx^j_{ee'}$, where $s_rt_{r'} \neq 0$. 
Thus $r=d_s$ and $r'=d_t$ and $\max\{r,r'\}\ge 1$. 
Note that the degree of $x_{ee'}$ in $t^{k-j}(x_{ee'})s^j(x_{ee'})$ is $kr'+j(r-r')$. 
We consider the following two cases:

Case I: $r \neq r'$. 
If $r > r'$, $d_{\alpha}=kr \ge 2$, contradicting that $d_{\alpha}=1$. 
Now assume $r<r'$. Let $l_1$ and $l_2$ be the minimum exponents of $z$ in $f(z)$ and $g(z)$ respectively. 
It follows that $d_{\alpha} = kr'-l_1(r'-r) = 1$ and $d_{\beta} = kr'-l_2(r'-r) = 0$. 
Clearly, $l_2>0$ due to $d_{\beta}=0$. 
If $r>0$, $kr'-l_2(r'-r) > kr'-l_2r' \ge 0$, contradicting $d_{\beta} = 0$. 
Hence, $r=0$, and $l_2=k$ due to $d_{\beta}=0$. 
Meanwhile, $d_{\alpha} = (k-l_1)r' = 1$, which implies that $l_1=k-1$ and $r'=1$. 
Thus, $z^{k-1}$ is a common divisor of $f(z)$ and $g(z)$, contradicting $\gcd(f(z),g(z))=1$.

Case II: $r=r'$. Since $d_{\alpha}=1$ and $d_{\beta}(x_{ee'})=0$, all the terms in $\alpha(x_{ee'})$ and $\beta(x_{ee'})$ containing $x^{kr}_{ee'}$ must be cancelled out, implying that 
\begin{align*}
& \sum^k_{j=0} a_jt^{k-j}_rs^j_r = t^k_r \sum^k_{j=0} a_j\left(\frac{s_r}{t_r}\right)^j = t^k_r f\left(\frac{s_r}{t_r}\right) = 0 \\
& \sum^l_{j=0} b_jt^{k-j}_rs^j_r = t^k_r \sum^l_{j=0} b_j\left(\frac{s_r}{t_r}\right)^j = t^k_r g\left(\frac{s_r}{t_r}\right) = 0
\end{align*}
Hence $z-\frac{s_r}{t_r}$ is a common divisor of $f(z)$ and $g(z)$, contradicting $\gcd(f(z),g(z))=1$. 

Therefore, we have proved $d=1$ when $l\le k$. Using similar technique, we can prove that $d=1$ when $l\ge k$. 

Define $q_1(\B{x})=\frac{\eta(\B{x})}{p_1(\B{x})}=\frac{m_{11}(\B{x})m_{23}(\B{x})}{m_{13}(\B{x})m_{21}(\B{x})}$. 
For $d=1$, we consider the following cases.

Case I: $\frac{f(z)}{g(z)}=\frac{a_0+a_1z}{b_0+b_1z}$, where $a_1a_0b_1b_0 \neq 0$, and $a_0b_1 \neq a_1b_0$. 
For this case, we have $p_1(x_{ee'}) = \frac{a_0+a_1p_1(x_{ee'})q_1(x_{ee'})}{b_0+b_1p_1(x_{ee'})q_1(x_{ee'})}$. It immediately follows
\begin{flalign*}
q_1(x_{ee'}) = \frac{a_0c^2_2-b_0c_0c_2-b_0c_1c_2x_{ee'}}{b_1c^2_1x^2_{ee'}-a_1c_1c_2x_{ee'}+b_1c^2_0-a_1c_0c_2}
\end{flalign*}
Denote $u_1(x_{ee'})=a_0c^2_2-b_0c_0c_2-b_0c_1c_2x_{ee'}$ and $v_1(x_{ee'})=b_1c^2_1x^2_{ee'}-a_1c_1c_2x_{ee'}+b_1c^2_0-a_1c_0c_2$. 
Assume $u_1(x_{ee'}) \mid v_1(x_{ee'})$ and thus $x_{ee'}=\frac{a_0c_2-b_0c_0}{b_0c_1}$ is a solution to $v_1(x_{ee'})=0$. 
However, $v_1(\frac{a_0c_2-b_0c_0}{b_0c_1}) = \frac{a_0c^2_2}{b^2_0}(a_0b_1 + a_1b_0) \neq 0$. 
Hence, $u_1(x_{ee'}) \nmid v_1(x_{ee'})$. Thus, by the definition of $q_1(\B{x})$ and Lemma \ref{lemma_x2}, $x^2_{ee'}$ must appear in $u_1(x_{ee'})$, which contradicts the formulation of $u_1(x_{ee'})$.

Case II: $\frac{f(z)}{g(z)}=\frac{a_0+a_1z}{b_1z}$, where $a_0a_1b_0 \neq 0$. 
Similar to Case I, we can derive 
\begin{align*}
q_1(x_{ee'}) = \frac{a_0c^2_2}{b_1c^2_1x^2_{ee'}-a_1c_1c_2x_{ee'}+b_1c^2_0-a_1c_0c_2}
\end{align*}
which contradicts Lemma \ref{lemma_x2}.

Case III: $\frac{f(z)}{g(z)}=\frac{a_1z}{b_0+b_1z}$, where $a_1b_0b_1 \neq 0$. Thus 
\begin{align*}
q_1(\B{x})=\frac{a_1}{b_1} - \frac{b_0}{b_1}\frac{m_{13}(\B{x})m_{21}(\B{x})}{m_{11}(\B{x})m_{23}(\B{x})}
\end{align*}
Since the coefficient of each monomial in $m_{11}(\B{x})m_{23}(\B{x})$ and $m_{13}(\B{x})m_{21}(\B{x})$ equals one, it directly follows $\frac{a_1}{b_1}=-\frac{b_0}{b_1}=\frac{b_0}{b_1}=1$. 
This indicates that $p_1(\B{x})=\frac{\eta(\B{x})}{\eta(\B{x})+1}$, contradicting $p_1(\B{x}) \notin \C{S}'$.

Case IV: $\frac{f(z)}{g(z)}=\frac{a_0}{b_0+b_1z}$, where $a_0b_0b_1\neq 0$. 
It follows that 
\begin{align*}
q_1(x_{ee'}) = \frac{a_0c^2_2-b_0c_0c_2-b_0c_1c_2x_{ee'}}{b_1c^2_0+b_1c^2_1x^2_{ee'}}
\end{align*}
Similar to Case I, this also contradicts Lemma \ref{lemma_x2}.

Case V: $\frac{f(z)}{g(z)}=\frac{a_0}{z}$, where $a_0\neq 0$. 
Hence, $q_1(x_{ee'})=\frac{a_0c^2_2}{c^2_1x^2_{ee'}+c^2_0}$, contradicting Lemma \ref{lemma_x2}.

Case VI: $\frac{f(z)}{g(z)}=a_0+a_1z$, where $a_0a_1 \neq 0$. 
Thus, it follows 
\begin{align*}
p_1(\B{x})=a_0 + a_1\frac{m_{31}(\B{x})m_{12}(\B{x})m_{23}(\B{x})}{m_{21}(\B{x})m_{32}(\B{x})m_{13}(\B{x})}
\end{align*}
Similar to Case III, $a_1=a_0=1$, contradicting $p_1(\B{x}) \notin \C{S}'$.

Case VII: $\frac{f(z)}{g(z)}=a_1z$, where $a_1\neq 0$. 
Similar to Case III, $p_1(\B{x})=\eta(\B{x})$, contradicting $p_1(\B{x}) \notin \C{S}'$.

Thus, we have proved that if $p_i(\B{x})\notin\C{S}'$, $p_i(\B{x})\notin \C{S}_n$ and hence $\B{R}^*_n$ is NA-feasible by Theorem \ref{th_big_cond}.
We note that in \cite{Han2011} the authors proved that $p_1(\B{x})\neq 1+\eta(\B{x})$, $p_2(\B{x})\neq \frac{\eta(\B{x})}{1+\eta(\B{x})}$ and $p_3(\B{x})\neq \frac{\eta(\B{x})}{1+\eta(\B{x})}$. Combined with the above results, we have proved that if Eq. (\ref{eq_small_cond_1})-(\ref{eq_small_cond_3}) are satisfied, $\B{R}^*_n$ is feasible through PBNA.
\end{proof}

\subsection{Some $m_{ij}(\B{x})=0$ $(i\neq j)$}
In this case, since the number of interference terms is reduced, at least one of $\SR{A}_1,\SR{A}_2,\SR{A}_1$ is removed, and thus the restriction on $\B{V}_1$ imposed by Eq. (\ref{eq_v1_align}) vanishes. 
Therefore, we can choose $\B{V}_1$ freely, and the feasibility condition of PBNA is greatly simplified.
For example, assume $m_{23}(\B{x})=0$ and all other transfer functions are non-zeros.
Hence $\SR{A}_2$ is removed.
Meanwhile, $\SR{B}'_1,\SR{B}'_2,\SR{B}'_3$ remain the same.
Similarly to Theorem \ref{th_na_eta_trivial}, we can set $\B{V}_1=(\theta_{ij})_{(2n+1)\times (n+1)}$, where $\theta_{ij}$ is an arbitrary variable.
It is easy to see that $\B{R}^*_n$ is feasible through PBNA if and only if $p_i(\B{x})$ is not constant for every $i\in \{1,2,3\}$.
Using similar arguments, we can discuss other cases.

\section{Checking the Feasibility of PBNA}
For a given graph, checking the feasibility of PBNA is now reduced to checking whether Eq. (\ref{eq_small_cond_11})-(\ref{eq_small_cond_31}).  This is a multivariate polynomial identity testing problem. To check whether $p_i(\B{x})\neq 1$, we use Ford-Fulkerson Algorithm, as per  Lemma \ref{lemma_transfer_inequal}. To check whether $p_i(\B{x})\neq \eta(\B{x})$, we define $q_i(\B{x})=\frac{\eta(\B{x})}{p_i(\B{x})}$ and consider $q_i(\B{x})\neq 1$. Therefore, Ford-Fulkerson Algorithm can be used to check this condition as well. For the other conditions ($p_1(\B{x})\neq \frac{\eta(\B{x})}{1+\eta(\B{x})}$ and $p_2(\B{x}), p_3(\B{x}) \neq 1 + \eta(\B{x})$),  it is still not clear what is their interpretation in terms of graph structure. A counter example is shown in Fig. \ref{fig_counter_example}(b). Nevertheless, we can still check the conditions by evaluating the rational functions through $T$ random tests:

\vspace*{5pt}\hrule
\begin{small}
\begin{algorithmic}
\FOR{$k=1$ to $T$}
\STATE Assign random values to $\B{x}$, denoted by $\B{x}_0$
\STATE If $p_1(\B{x}_0)\neq \frac{\eta(\B{x}_0)}{1+\eta(\B{x}_0)}$, return success
\ENDFOR
\STATE Return failure (i.e., $\SR{B}'_i$ is violated)
\end{algorithmic}
\end{small}
\hrule
\vspace*{5pt}

Let $L$ denote the maximum distance from any sender to any receiver in the network. 
Using Lemma 4 of \cite{TraceyHo2006}, we can upper-bound the probability of error as follows.
We consider the case of $i=1$. Other cases follows along similar lines.
Note that Eq. (\ref{eq_small_cond_1}) is equivalent to the following equation:
\begin{align*}
f(\B{x}) &= m_{11}(\B{x})m_{32}(\B{x})m_{23}(\B{x}) + m_{21}(\B{x})m_{32}(\B{x})m_{12}(\B{x}) + \\
&\hspace*{5pt} m_{31}(\B{x})m_{12}(\B{x})m_{23}(\B{x}) =0
\end{align*}
Since the maximum degree of any variable $x_{ee'}$ in a transfer function is at most one, the total degree of each term in $f(\B{x})$ is at most $3L$.
For each random test, the probability of error in checking if Eq. (\ref{eq_small_cond_1}), denoted by $\delta_1$, can be upper bounded by using Lemma 4 of \cite{TraceyHo2006}: $\delta_1 = Pr(f(\B{x}_0)=0\mid f(\B{x})\neq 0) \le 1-\left(1-\frac{3}{2^m}\right)^L$.
Hence, the total probability of error in checking if $p_1(\B{x})\neq 1+\eta(\B{x})$ is $P_1(Error) = \delta^T_1 \le [1-(1-\frac{3}{2^m})^L]^T$. 
Thus, the error can be made arbitrarily small for sufficiently large $m$ and $T$. 
The running time of the algorithm is $O(T|E|D_{in})$, where $D_{in}$ is the maximum in-degree of any node in the network.

\section{Conclusion}
In this paper, we study the feasibility of PBNA for three unicast sessions. We first prove that the set of conditions proposed by \cite{Das2010} are also necessary for the feasibility of PBNA with respect to any valid precoding matrix. Then, we reduce this set of conditions to just four conditions, using two graph-related properties along with a simple degree-counting technique. This reduction enables an efficient algorithm for checking the feasibility of PBNA.

\appendices
\section{Proofs of Graph Properties \label{app_graph}}
The following lemma is used in the proof of Lemma \ref{lemma_transfer_inequal}.
\begin{lemma}
\label{lemma_equal_term}
Let $(P_1,P_2)\in \C{P}_{ab}\times \C{P}_{pq}$. Then, there exists $(P_3,P_4)\in \C{P}_{aq}\times\C{P}_{pb}$ such that $t(P_1)t(P_2)=t(P_3)t(P_4)$ if and only if $P_1 \cap P_2 \neq \emptyset$.
\end{lemma}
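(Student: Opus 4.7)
I plan to prove the two directions of the biconditional separately, reading ``$P_1 \cap P_2$'' as the set of edges shared by the two paths (consistent with the usage in the proof of Lemma~\ref{lemma_nontrivial}). The reverse direction will be an explicit splicing construction, and the forward direction an argument on a ``transition digraph'' that encodes the monomial $t(P_1)t(P_2)$ graph-theoretically.

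For the ``if'' direction, given an edge $e \in P_1 \cap P_2$, I would splice the paths at $e$: set $P_3 = P_2[\sigma_q:e] \cup P_1[e:\tau_a]$ and $P_4 = P_1[\sigma_b:e] \cup P_2[e:\tau_p]$. Because $G'$ is acyclic, each concatenation respects topological order and is therefore a simple path, so $(P_3,P_4)\in \C{P}_{aq}\times \C{P}_{pb}$. The identity $t(P_3)t(P_4)=t(P_1)t(P_2)$ then reduces to routine bookkeeping of consecutive-edge transitions: the transitions of $P_1$ prior to $e$ migrate into $P_4$, those of $P_1$ after $e$ into $P_3$, and symmetrically for $P_2$, so both sides are products over the same multiset of variables $x_{e'e}$.

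For the ``only if'' direction I would argue the contrapositive. Assume $P_1\cap P_2=\emptyset$, and suppose for contradiction that some $(P_3,P_4)\in \C{P}_{aq}\times \C{P}_{pb}$ satisfies $t(P_3)t(P_4)=t(P_1)t(P_2)$. I would introduce a \emph{transition digraph} $T$ whose vertex set is $E(G')$ and whose arc multiset places one arc $e'\to e$ per factor $x_{e'e}$ appearing in $t(P_1)t(P_2)$. Edge-disjointness of $P_1$ and $P_2$ forces the arcs contributed by $P_1$ and those contributed by $P_2$ to live on disjoint vertex subsets of $T$, so $T$ decomposes into two vertex-disjoint directed paths, one traced out by $P_1$ and one by $P_2$. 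Since $t(P_3)t(P_4)$ generates exactly the same arcs, the walks corresponding to $P_3$ and $P_4$ must each be contained in one of these two components. But $P_3$ begins at $\sigma_q\in E(P_2)$ and ends at $\tau_a\in E(P_1)$, so its endpoints land in different components of $T$ --- a contradiction.

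The main obstacle I anticipate is getting the edge-versus-vertex distinction right when interpreting ``$P_1\cap P_2\neq \emptyset$'': a purely shared vertex does not suffice, because rerouting at such a vertex would force a transition variable $x_{e'e}$ absent from the original monomial. The transition-digraph formulation finesses this issue because the vertices of $T$ already are the edges of $G'$, so only a genuinely shared edge creates shared structure in $T$. The remaining work --- checking that the spliced paths are simple and that the component decomposition of $T$ really is forced --- should be routine once the setup is in place.
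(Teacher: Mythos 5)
Your proposal is correct and takes essentially the same approach as the paper: the ``if'' direction uses the identical splice at a shared edge, and your transition-digraph argument for the ``only if'' direction is a repackaging of the paper's observation that $P_3$ or $P_4$ must contain a transition $x_{ee'}$ with $e\in P_1$ and $e'\in P_2$, a variable that cannot divide $t(P_1)t(P_2)$ when the two paths are edge-disjoint. Your explicit attention to reading $P_1\cap P_2$ as an edge intersection is consistent with how the paper uses it.
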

\begin{proof}
First, Assume $P_1 \cap P_2 \neq \emptyset$. 
Pick an arbitrary edge $e\in P_1\cap P_2$. 
Let $P^1_1$ and $P^2_1$ be the path segments along $P_1$ before and after $e$ respectively. Similarly, we can define $P^1_2$ and $P^2_2$. 
Construct $P_3 = P^1_2 \cup \{e\} \cup P^2_1$ and $P_4= P^1_1 \cup \{e\} \cup P^2_2$. 
Hence, it is easy to see that $(P_3,P_4)\in \C{P}_{aq}\times\C{P}_{pb}$ and $t(P_1)t(P_2)=t(P_3)t(P_4)$.

Now assume $P_1 \cap P_2 = \emptyset$. 
By contradiction, assume there exists $(P_3,P_4)\in \C{P}_{aq}\times\C{P}_{pb}$ such that $t(P_1)t(P_2)=t(P_3)t(P_4)$. 
Clearly, $P_1 \cup P_2 = P_3 \cup P_4$. Then, there exist $e,e'\in P_4$ such that $head(e)=tail(e')$ and $e\in P_1$, $e'\in P_2$. 
Hence, $x_{ee'} \mid t(P_3)t(P_4)$ but $x_{ee'} \nmid t(P_1)t(P_2)$, contradicting our assumption.
\end{proof}

\begin{proof}[Proof of Lemma \ref{lemma_transfer_inequal}]
Assume $m_{ab}(\B{x})m_{pq}(\B{x}) \neq m_{aq}(\B{x})m_{pb}(\B{x})$. 
Thus there exists $(P_1,P_2)\in \C{P}_{ab}\times \C{P}_{pq}$ such that for any $(P_3,P_4) \in \C{P}_{aq}\times \C{P}_{pb}$, $t(P_1)t(P_2) \neq t(P_3)t(P_4)$, or vice versa. 
By Lemma \ref{lemma_equal_term}, $P_1 \cap P_2= \emptyset$ ($P_3 \cap P_4=\emptyset$ for the other case). 
On the other hand, if there exists disjoint path pair $(P_1,P_2)\in \C{P}_{ab} \times \C{P}_{pq}$, $t(P_1)t(P_2)$ is absent from $m_{aq}(\B{x})m_{pb}(\B{x})$. 
Moreover, there is only one term in $m_{ab}(\B{x})m_{pq}(\B{x})$ which equals $t(P_1)t(P_2)$. 
Thus $t(P_1)t(P_2)$ doesn't vanish from $m_{ab}(\B{x})m_{pq}(\B{x})$. 
Hence $m_{ab}(\B{x})m_{pq}(\B{x}) \neq m_{aq}(\B{x})m_{pb}(\B{x})$. 
Similarly, the theorem holds for the other case.
\end{proof}

\section{General Form of $\B{V}_1$ \label{app_v1}}
The following lemma shows that given any full-rank matrices $\B{A}$, $\B{B}$ and $\B{C}$ as defined in $\SR{A}'_1,\SR{A}'_2,\SR{A}'_3$, we can always find a non-zero solution to (\ref{eq_v1_align_2}), and thus construct a precoding matrix $\B{B}_1$ which satisfies (\ref{eq_v1_align_2}).
\begin{lemma}
\label{lemma_get_v1_1}
Equation (\ref{eq_v1_align_2}) has a non-zero solution in $\BF^{n+1}_{2^m}[z]$ in the form of $\B{r}(z)=(1, z, z^2, \cdots, z^n)\B{F}$, where $\B{F}$ is an $(n+1)\times (n+1)$ matrix in $\BF_{2^m}$. 
Moreover, any solution to (\ref{eq_v1_align_2}) is linearly dependent on $(1,z,\cdots,z^n)\B{F}$.
\end{lemma}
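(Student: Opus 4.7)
The plan is to exhibit an explicit non-zero element of the left null space of $M(z)\DEF z\B{C}-\B{BA}$ via cofactors, verify the claimed form, and then argue uniqueness by computing the rank of $M(z)$ over $\BF_{2^m}(z)$.

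First, I would observe that $M(z)$ is an $(n+1)\times n$ matrix whose entries lie in $\BF_{2^m}[z]$ and have degree at most one in $z$, since $\B{C}$ and $\B{BA}$ are constant matrices over $\BF_{2^m}$. For each $i\in\{1,\ldots,n+1\}$, let $M_i(z)$ be the $n\times n$ submatrix obtained by deleting the $i$-th row of $M(z)$, and define
\begin{equation*}
r_i(z) \DEF (-1)^{i}\det(M_i(z)).
\end{equation*}
A standard Laplace expansion identity (applied to the $(n+1)\times(n+1)$ matrix formed by adjoining any column of $M(z)$ to itself) gives $\B{r}(z)M(z)=0$, so $\B{r}(z)=(r_1(z),\ldots,r_{n+1}(z))$ is a left null vector.

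Next, since each $M_i(z)$ is $n\times n$ with entries of degree at most one, each $r_i(z)$ is a polynomial of degree at most $n$. Writing $r_i(z)=\sum_{k=0}^{n}f_{ki}z^{k}$ with $f_{ki}\in\BF_{2^m}$ and setting $\B{F}=(f_{ki})$ yields exactly $\B{r}(z)=(1,z,\ldots,z^{n})\B{F}$, as required. To see that $\B{r}(z)\neq 0$, I would extract the leading ($z^{n}$) coefficient of $r_i(z)$: this is $(-1)^{i}\det(\B{C}_i)$, where $\B{C}_i$ is $\B{C}$ with its $i$-th row deleted. Since $\B{C}$ is an $(n+1)\times n$ matrix of full column rank $n$, some $n\times n$ minor $\det(\B{C}_{i^*})$ is non-zero, so $r_{i^*}(z)$ has a non-zero leading term of degree $n$, giving $\B{r}(z)\neq 0$.

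Finally, for the uniqueness claim, the same minor shows that the $n\times n$ submatrix $M_{i^*}(z)$ has determinant $(-1)^{i^*}r_{i^*}(z)$, which is a non-zero element of $\BF_{2^m}[z]\subset \BF_{2^m}(z)$. Consequently the rank of $M(z)$ over the field $\BF_{2^m}(z)$ equals $n$, so its left null space has dimension exactly $(n+1)-n=1$. Therefore every solution of (\ref{eq_v1_align_2}) over $\BF_{2^m}(z)$ is a scalar multiple of $\B{r}(z)$, and in particular any polynomial solution is linearly dependent on $(1,z,\ldots,z^{n})\B{F}$.

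I do not anticipate a serious obstacle: the only delicate point is making the degree bound match the target form $(1,z,\ldots,z^{n})\B{F}$ rather than a form with higher powers of $z$, and this is handled by the observation that the entries of $M(z)$ are affine in $z$ so that an $n\times n$ determinant has degree at most $n$. The non-vanishing and uniqueness reduce cleanly to the hypothesis that $\B{C}$ has full column rank.
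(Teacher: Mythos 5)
Your proof is correct, and the construction is essentially the paper's: the vector $(\det\B{E}_1,\cdots,\det\B{E}_n,-\det\B{E}_{n+1})$ that the paper obtains via Cramer's rule (after fixing $r_{n+1}(z)=-1$) is exactly your signed-cofactor vector, and both arguments then use the degree bound on $n\times n$ minors of an affine-in-$z$ matrix plus the one-dimensionality of the left null space. The one place you genuinely diverge is in establishing that $\myrank(z\B{C}-\B{BA})=n$ and that the candidate solution is non-zero: the paper proves the rank statement first, by positing a rational dependence among the columns, clearing denominators, and matching coefficients of powers of $z$ against the linear independence of the columns of $\B{C}$ and $\B{BA}$; you instead read off the $z^n$-coefficient of $\det(M_i(z))$ as $\det(\B{C}_i)$ and invoke the full column rank of $\B{C}$ alone, which simultaneously yields non-vanishing of $\B{r}(z)$ and the rank. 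Your version is shorter, avoids the paper's ``without loss of generality, the invertible submatrix is the top $n$ rows'' step, and makes explicit that only the rank of $\B{C}$ is needed at this point; the paper's version is more pedestrian but requires no appeal to the Laplace-expansion identity for the adjoined-column matrix.
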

\begin{proof}
Denote $\B{D}=\B{BA}$. 
First, we will prove that $\myrank(z\B{C}-\B{D})=n$. 
Let $\B{c}_i$ and $\B{d}_i$ denote the $i$th column of $\B{C}$ and $\B{D}$ respectively. 
Hence, $\B{c}_1,\cdots,\B{c}_n$ are linearly independent and so are $\B{d}_1,\cdots,\B{d}_n$. 
Assume there exist $f_1(z),\cdots,f_n(z)\in \BF_{2^m}(z)$ such that $\sum^n_{i=1}f_i(z)(z\B{c}_i-\B{d}_i)=0$. 
Without loss of generality, assume $f_i(z)=\frac{g_i(z)}{h(z)}$ for $i\in \{1,2,\cdots,n\}$, where $g_i(z),h(z)\in \BF_{2^m}[z]$. 
Thus, $\sum^n_{i=1}g_i(z)(z\B{c}_i-\B{d}_i)=0$. 
Let $k=\max_{i\in \{1,2,\cdots,n\}}\{d_{g_i}\}$ and assume $g_i(z)=\sum^k_{l=0}a_{l,i}z^l$. 
Then, it follows
\begin{flalign*}
& \sum^n_{i=1}g_i(z)(z\B{c}_i-\B{d}_i) = \sum^k_{l=0}\sum^n_{i=1}(a_{l,i}z^{l+1}\B{c}_i-a_{l,i}z^l\B{d}_i) \\
= & z^{k+1}\sum^n_{i=1}a_{k,i}\B{c}_i + \sum^{k-1}_{l=0}z^{l+1}\sum^n_{i=1}(a_{l,i}\B{c}_i-a_{l+1,i}\B{d}_i) \\
& \hspace*{10pt} - \sum^n_{i=1}a_{0,i}\B{d}_i = \B{0}
\end{flalign*}
Therefore, the following equations must hold:
\begin{flalign*}
& \sum^n_{i=1}a_{k,i}\B{c}_i=0 \hspace*{10pt} \sum^n_{i=1}a_{0,i}\B{d}_i=0 \\ 
& \sum^n_{i=1}(a_{l,i}\B{c}_i-a_{l+1,i}\B{d}_i)=0 \hspace*{15pt} \forall l\in \{0,\cdots,k-1\}
\end{flalign*}
Thus $a_{l,i}=0$ for any $i\in \{1,\cdots,n\},l\in \{0,\cdots,k\}$, implying $f_i(z)=0$. 
Hence, $\myrank(z\B{C}-\B{D})=n$. 

Then, there must be an $n\times n$ invertible submatrix in $z\B{C}-\B{D}$. 
Without loss of generality, assume this submatrix consists of the top $n$ rows of $z\B{C}-\B{D}$ and denote this submatrix by $\B{E}_{n+1}$. 
Let $\B{b}$ denote the $(n+1)$th row of $z\B{C}-\B{D}$. 
In order to get a non-zero solution to equation (\ref{eq_v1_align_2}), we first fix $r_{n+1}(z)=-1$. 
Therefore, equation (\ref{eq_v1_align_2}) is transformed into $(r_1(z),\cdots,r_n(z))\B{E}_{n+1}=\B{b}$.
For $i\in \{1,2,\cdots,n\}$, let $\B{E}_i$ denote the submatrix acquired by replacing the $i$th row of $\B{E}_{n+1}$ with $\B{b}$. 
Hence, we get a non-zero solution to (\ref{eq_v1_align_2}):
\begin{flalign*}
\B{r}(z)=(\frac{\det\B{E_1}}{\det\B{E}_{n+1}}, \cdots, \frac{\det\B{E}_n}{\det\B{E}_{n+1}},-1)
\end{flalign*}
Moreover, $\bar{\B{r}}(z)=(\det\B{E}_1,\cdots,\det\B{E}_n,-\det\B{E}_{n+1})$ is also a solution. 
Also note that the degree of $z$ in each $\det\B{E}_i$ ($i\in \{1,2,\cdots,n+1\}$) is at most $n$. 
Thus, $\bar{\B{r}}(z)$ can be formulated as $(1,z,z^2,\cdots,z^n)\B{F}$, where $\B{F}$ is an $(n+1)\times (n+1)$ matrix in $\BF_{2^m}$. 
Since $\myrank(z\B{C}-\B{D})=n$, all the solutions to equation (\ref{eq_v1_align_2}) form a one-dimensional linear space. 
Thus, all solutions must be linearly dependent on $\bar{\B{r}}(z)$.
\end{proof}

\begin{proof}[Proof of Lemma \ref{lemma_v1}]
Let $\B{r}_i$ be the $i$th row of $\B{V}_1$, which satisfies equation (\ref{eq_v1_align}). 
According to Lemma \ref{lemma_get_v1_1}, $\B{r}_i$ must have the form $f_i(\eta(\B{x}^i))(1,\eta(\B{x}^i),\cdots,\eta^n(\B{x}^i))\B{F}$, where $f_i(z)$ is a non-zero rational function in $\BF_{2^m}(z)$. 
Hence, $\B{V}_1$ can be written as $\B{G}\B{V}^*_1\B{F}$.

According to Lemma \ref{lemma_get_v1_1}, equation (\ref{eq_v1_align_2}) can be rewritten as follows:
\begin{flalign*}
(z,z^2,\cdots,z^{n+1})\B{FC} = (1,z,\cdots,z^n)\B{FBA}
\end{flalign*}
The right side of the above equation contains no $z^{n+1}$, and thus the $(n+1)$th row of $\B{FC}$ must be zero. 
Similarly, there is no constant term on the left side of the above equation, implying that the 1st row of $\B{FBA}$ is zero.
\end{proof}

\section{Results on Multivariate Polynomial\label{app_polynomial}}
Let $\B{y}=(y_1,y_2,\cdots,y_k)$ be a vector of variables. 
For any $i\in \{1,2,\cdots,k\}$, define $\B{y}_i=(y_1,\cdots,y_{i-1},y_{i+1},\cdots,y_k)$, i.e., the vector consisting of all variables in $\B{y}$ other than $y_i$. 
Note that any polynomial $f(\B{y}) \in \BF[\B{y}]$ can be formulated as
\begin{flalign*}
f(\B{y}) = f_0(\B{y}_i) + f_1(\B{y}_i)y_i + \cdots + f_p(\B{y}_i)y^p_i
\end{flalign*}
where $f_j(\B{y}_i)\in \BF[\B{y}_i]$ for $j\in \{0,1,\cdots,p\}$ and $f_p(\B{y}_i)\neq 0$. 
Let $\BF(\B{y}_i)$ denote the field consisting of all rational functions in the form of $\frac{u(\B{y}_i)}{v(\B{y}_i)}$, where $u(\B{y}_i), v(\B{y}_i) \in \BF[\B{y}_i]$. 
Because $\BF[\B{y}_i]$ is a subset of $\BF(\B{y}_i)$, $f(\B{y})$ can also be viewed as a univariate polynomial in the ring $\BF(\B{y}_i)[y_i]$. 
For any $h(\B{y})\in \BF[\B{y}]$, we use $h(y_i)$ to denote its equivalent counterpart in $\BF(\B{y}_i)[y_i]$. 
To differentiate these two concepts, we reserve the notations, such as ``$\mid$'', ``$\mygcd$'' and ``$\mylcm$''\footnote{We use $\mylcm(f(x),g(x))$ to denote the least common multiple of two polynomials $f(x)$ and $g(x)$.}, for field $\BF$, and append ``1'' as a subscript to these notations to suggest they are specific to field $\BF(\B{y}_i)$. 
For example, for $f(\B{y}),g(\B{y})\in \BF[\B{y}]$ and $u(y_i),v(y_i)\in \BF(\B{y}_i)[y_i]$, $g(\B{y}) \mid f(\B{y})$ means that there exists $h(\B{y})\in \BF[\B{y}]$ such that $f(\B{y})=h(\B{y})g(\B{y})$, and $u(y_i)\mid_1 v(y_i)$ means that there exists $w(y_i)\in \BF[\B{y}_i](y_i)$ such that $v(y_i)=w(y_i)u(y_i)$. 
Similarly, $\mygcd(f(\B{y}),g(\B{y}))$ is the greatest common divisor of $f(\B{y})$ and $g(\B{y})$ within $\BF[\B{y}]$, and $\mygcd_1(u(y_i),v(y_i))$ is the greatest common divisor of $u(y_i)$ and $v(y_i)$ in $\BF(\B{y}_i)[y_i]$.

In general, each polynomial $h(y_i) \in \BF(\B{y}_i)[y_i]$ is of the following form
\begin{flalign*}
h(y_i) = \frac{a_0(\B{y}_i)}{b_0(\B{y}_i)} + \frac{a_1(\B{y}_i)}{b_1(\B{y}_i)}y_i + \cdots + \frac{a_p(\B{y}_i)}{b_p(\B{y}_i)}y^p_i
\end{flalign*}
In the above formula, for any $j\in \{0,1,\cdots,p\}$, $a_j(\B{y}_i),b_j(\B{y}_i)\in \BF[\B{y}_i]$, $b_j(\B{y}_i)\neq 0$, $\mygcd(a_j(\B{y}_i),b_j(\B{y}_i))=1$, and $a_p(\B{y}_i)\neq 0$. 
Note that for any $y^j_i$ which is absent in $h(y_i)$, we let $a_j(\B{y}_i)=0$ and $b_j(\B{y}_i)=1$. Define the following polynomial
\begin{flalign*}
\mu_h(\B{y}_i) = \mylcm(b_0(\B{y}_i), b_1(\B{y}_i), \cdots, b_p(\B{y}_i))
\end{flalign*}
Thus, $\mu_h(\B{y}_i)\in \BF[\B{y}_i]$ and $\mu_h(\B{y}_i)h(y_i)\in \BF[\B{y}]$.

\begin{lemma}
\label{lemma_partial_div}
Assume $g(\B{y}_i)\in \BF[\B{y}_i]$ and $f(\B{y}) \in \BF[\B{y}]$ is of the form $f(\B{y})=\sum^p_{j=0}f_j(\B{y}_i)y^j_i$, where $f_j(\B{y}_i)\in \BF[\B{y}_i]$. 
Then $g(\B{y}_i) \mid f(\B{y})$ if and only if $g(\B{y}_i) \mid f_j(\B{y}_i)$ for any $j\in \{0,1,\cdots,p\}$.
\end{lemma}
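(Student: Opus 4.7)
The plan is to prove both implications by exploiting the natural isomorphism $\BF[\B{y}] \cong \BF[\B{y}_i][y_i]$ and the uniqueness of the representation of a polynomial in $y_i$ with coefficients in $\BF[\B{y}_i]$.

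For the easy direction ($\Leftarrow$): if $g(\B{y}_i) \mid f_j(\B{y}_i)$ for every $j\in\{0,1,\cdots,p\}$, write $f_j(\B{y}_i) = g(\B{y}_i)h_j(\B{y}_i)$ with $h_j(\B{y}_i)\in\BF[\B{y}_i]$, factor out $g(\B{y}_i)$, and obtain
\[
f(\B{y}) \;=\; \sum^p_{j=0} g(\B{y}_i) h_j(\B{y}_i) y^j_i \;=\; g(\B{y}_i)\sum^p_{j=0} h_j(\B{y}_i)y^j_i,
\]
which exhibits $g(\B{y}_i) \mid f(\B{y})$ in $\BF[\B{y}]$.

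For the forward direction ($\Rightarrow$): suppose $g(\B{y}_i)\mid f(\B{y})$, so that $f(\B{y}) = g(\B{y}_i) h(\B{y})$ for some $h(\B{y})\in\BF[\B{y}]$. Expanding $h(\B{y})$ as a polynomial in $y_i$ with coefficients in $\BF[\B{y}_i]$, write $h(\B{y}) = \sum^q_{j=0} h_j(\B{y}_i)y^j_i$, and (padding with zero coefficients if $q\neq p$) collect by powers of $y_i$ to obtain
\[
f(\B{y}) \;=\; \sum_{j} g(\B{y}_i)h_j(\B{y}_i)\,y^j_i.
\]
Since $\BF[\B{y}]=\BF[\B{y}_i][y_i]$ is a polynomial ring in the single variable $y_i$ over $\BF[\B{y}_i]$, the representation of $f(\B{y})$ as a polynomial in $y_i$ is unique. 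Comparing coefficients of $y^j_i$ with the given expansion $f(\B{y})=\sum_j f_j(\B{y}_i) y^j_i$ yields $f_j(\B{y}_i) = g(\B{y}_i) h_j(\B{y}_i)$ for each $j$, so $g(\B{y}_i) \mid f_j(\B{y}_i)$ in $\BF[\B{y}_i]$, as required.

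There is no real obstacle here; the only subtlety is the appeal to uniqueness of coefficients, which relies on $\BF[\B{y}_i]$ being an integral domain so that $\BF[\B{y}_i][y_i]$ is a genuine polynomial ring in $y_i$. The lemma is essentially a coefficient-matching argument and should be presented succinctly.
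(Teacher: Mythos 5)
Your proof is correct and follows essentially the same route as the paper's: the reverse direction factors out $g(\B{y}_i)$, and the forward direction writes $f(\B{y})=g(\B{y}_i)h(\B{y})$, expands $h$ in powers of $y_i$, and matches coefficients. Your explicit remark on the uniqueness of the coefficient representation in $\BF[\B{y}_i][y_i]$ is a small but welcome addition that the paper leaves implicit.
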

\begin{proof}
Apparently, if $g(\B{y}_i) \mid f_j(\B{y}_j)$ for any $j\in \{0,1,\cdots,p\}$, $g(\B{y}_i) \mid f(\B{y})$. 
Now assume $g(\B{y}_i) \mid f(\B{y})$. 
Thus there exists $h(\B{y})\in \BF[\B{y}]$ such that $f(\B{y})=g(\B{y}_i)h(\B{y})$. 
Let $h(\B{y})=\sum^p_{j=0}h_j(\B{y}_i)y^j_i$. 
Hence, it follows that $f_j(\B{y}_i)=h_j(\B{y}_i)g(\B{y}_i)$ and thus $g(\B{y}_i) \mid f_j(\B{y}_i)$.
\end{proof}

The following result follows immediately from Lemma \ref{lemma_partial_div}.
\begin{corollary}
\label{cor_partial_div}
Let $g(\B{y}_i)$ and $f(\B{y})$ be defined as Lemma \ref{lemma_partial_div}. 
Then $\gcd(g(\B{y}_i), f(\B{y})) = \gcd(g(\B{y}_i), f_0(\B{y}_i), \cdots, f_p(\B{y}_i))$.
\end{corollary}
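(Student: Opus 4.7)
The plan is to derive the identity directly from Lemma \ref{lemma_partial_div}, which already characterizes exactly when a polynomial in $\B{y}_i$ divides a polynomial in $\B{y}$ in terms of its coefficients $f_j(\B{y}_i)$. Set $A = \gcd(g(\B{y}_i), f(\B{y}))$ and $B = \gcd(g(\B{y}_i), f_0(\B{y}_i), \ldots, f_p(\B{y}_i))$. Because GCDs in $\BF[\B{y}]$ are defined only up to a unit of $\BF$, the aim is to show that $A \mid B$ and $B \mid A$, which will yield the claimed equality.

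For the direction $B \mid A$, I would start from the observation that $B$ divides $g(\B{y}_i)$, hence $B \in \BF[\B{y}_i]$, and that $B$ divides each $f_j(\B{y}_i)$ by the very definition of $B$. Invoking the ``if'' direction of Lemma \ref{lemma_partial_div} with $B$ playing the role of the divisor $g(\B{y}_i)$ in that lemma, one concludes that $B \mid f(\B{y})$. So $B$ is a common divisor of $g(\B{y}_i)$ and $f(\B{y})$, and therefore divides their GCD $A$.

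For the reverse direction $A \mid B$, I would first note that since $A \mid g(\B{y}_i)$ and any divisor of a polynomial lying in $\BF[\B{y}_i]$ must itself lie in $\BF[\B{y}_i]$ (unique factorization in $\BF[\B{y}]$ contributes no $y_i$-dependent factors when none are present), we have $A \in \BF[\B{y}_i]$. Then $A \mid f(\B{y})$ together with the ``only if'' direction of Lemma \ref{lemma_partial_div} gives $A \mid f_j(\B{y}_i)$ for every $j\in\{0,1,\ldots,p\}$. Combined with $A \mid g(\B{y}_i)$, this makes $A$ a common divisor of every polynomial whose GCD is $B$, so $A \mid B$.

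Putting both directions together, $A$ and $B$ are associates in $\BF[\B{y}]$, and hence equal as GCDs in the usual up-to-a-unit sense. There is essentially no obstacle here, since the corollary is a one-step application of Lemma \ref{lemma_partial_div} in each direction; the only subtle point that warrants a sentence is the remark that a divisor of a polynomial in $\BF[\B{y}_i]$ lies again in $\BF[\B{y}_i]$, which is what allows Lemma \ref{lemma_partial_div} to be applied in the reverse direction.
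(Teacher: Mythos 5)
Your proof is correct and follows essentially the same route as the paper's: both establish mutual divisibility of the two gcds by noting that any divisor of $g(\B{y}_i)$ lies in $\BF[\B{y}_i]$ and then applying the two directions of Lemma \ref{lemma_partial_div}. No gaps.
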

\begin{proof}
Note that any divisor of $g(\B{y}_i)$ must be a polynomial in $\BF[\B{y}_i]$. 
Let $d(\B{y}_i)=\gcd(g(\B{y}_i),f(\B{y}))$ and $d'(\B{y}_i)=\gcd(g(\B{y}_i), f_0(\B{y}_i), \cdots, f_p(\B{y}_i))$. 
By Lemma \ref{lemma_partial_div}, $d(\B{y}_i)\mid f_j(\B{y}_i)$ for any $j\in \{0,1,\cdots,p\}$, implying that $d(\B{y}_i) \mid d'(\B{y}_i)$. 
On the other hand, $d'(\B{y}_i) \mid f(\B{y})$, and thus $d'(\B{y}_i) \mid d(\B{y}_i)$. 
Hence, $d(\B{y}_i)=d'(\B{y}_i)$.
\end{proof}

\begin{corollary}
\label{cor_partial_div_2}
For $t\in \{1,2,\cdots,s\}$, let $f_t(\B{y})\in \BF[\B{y}]$ be defined as $f_t(\B{y})=\sum^{p_t}_{j=0}f_{tj}(\B{y}_i)y^j_i$, where $f_{tj}(\B{y}_i) \in \BF[\B{y}_i]$. 
Let $g(\B{y}_i)\in \BF[\B{y}_i]$. 
It follows
\begin{flalign*}
& \mygcd(g(\B{y}_i), f_1(\B{y}), \cdots, f_t(\B{y})) \\
=& \mygcd(g(\B{y}_i), f_{10}(\B{y}_i), \cdots, f_{1p_1}(\B{y}_i), \cdots,\\ 
& \hspace*{40pt} f_{s0}(\B{y}_i), \cdots, f_{sp_s}(\B{y}_i))
\end{flalign*}
\end{corollary}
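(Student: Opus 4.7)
The plan is to prove Corollary \ref{cor_partial_div_2} by a straightforward induction on $s$, using Corollary \ref{cor_partial_div} as the workhorse. The key observation that makes the induction go through is that consuming one polynomial at a time preserves the property that the intermediate gcd lies in $\BF[\B{y}_i]$, which is precisely the hypothesis needed to invoke Corollary \ref{cor_partial_div} at the next step.

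For the base case $s=1$, the statement is literally Corollary \ref{cor_partial_div}. For the inductive step, assume the identity holds for $s-1$ polynomials. I would invoke associativity of the gcd in $\BF[\B{y}]$ to split off the last polynomial:
\begin{equation*}
\mygcd(g, f_1, \ldots, f_s) = \mygcd\bigl(\mygcd(g, f_1, \ldots, f_{s-1}),\, f_s\bigr).
\end{equation*}
By the inductive hypothesis, the inner gcd equals
\begin{equation*}
h(\B{y}_i) \DEF \mygcd(g, f_{10}, \ldots, f_{1p_1}, \ldots, f_{(s-1)0}, \ldots, f_{(s-1)p_{s-1}}),
\end{equation*}
which again lies in $\BF[\B{y}_i]$ since every entry in that list already lies in $\BF[\B{y}_i]$ and the gcd of polynomials in $\BF[\B{y}_i]$ stays in $\BF[\B{y}_i]$. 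Now I can apply Corollary \ref{cor_partial_div} with $h(\B{y}_i)$ playing the role of $g(\B{y}_i)$ and $f_s(\B{y})$ playing the role of $f(\B{y})$, obtaining
\begin{equation*}
\mygcd(h(\B{y}_i), f_s(\B{y})) = \mygcd(h(\B{y}_i), f_{s0}, \ldots, f_{sp_s}).
\end{equation*}
Unfolding the definition of $h$ and using associativity of the gcd once more delivers the full right-hand side of the claim.

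I do not expect any real obstacle here. The only point requiring a line of justification is that the partial gcd $h(\B{y}_i)$ actually lies in $\BF[\B{y}_i]$, which is immediate. As a non-inductive alternative, one could mirror the argument given for Corollary \ref{cor_partial_div}: let $d$ and $d'$ denote the left- and right-hand sides of the identity. Since $d$ divides $g \in \BF[\B{y}_i]$ and each $f_t$, Lemma \ref{lemma_partial_div} forces $d \mid f_{tj}$ for every $t, j$, giving $d \mid d'$; conversely, $d'$ divides every coefficient $f_{tj}$, hence divides each $f_t = \sum_j f_{tj} y_i^j$ and also $g$, yielding $d' \mid d$. Either route delivers $d = d'$.
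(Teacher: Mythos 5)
Your proof is correct and rests on exactly the same ingredients as the paper's: repeated application of Corollary \ref{cor_partial_div} together with associativity and idempotence of the gcd. The paper organizes this as a single chain of equalities (duplicating $g(\B{y}_i)$ next to each $f_t(\B{y})$ and collapsing each pair via Corollary \ref{cor_partial_div}) rather than as an explicit induction on $s$, but the two arguments are essentially identical in content.
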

\begin{proof}
We have the following equations
\begin{flalign*}
& \mygcd(g(\B{y}_i), f_1(\B{y}), \cdots, f_t(\B{y})) \\
=& \mygcd(g(\B{y}_i), f_1(\B{y}), \cdots, g(\B{y}_i), f_t(\B{y})) \\
=& \mygcd(\mygcd(g(\B{y}_i), f_1(\B{y})), \cdots, \mygcd(g(\B{y}_i), f_s(\B{y}))) \\
=& \mygcd(g(\B{y}_i), f_{10}(\B{y}_i), \cdots, f_{1p_1}(\B{y}_i), \cdots, \\
& \hspace*{40pt} g(\B{y}_i), f_{s0}(\B{y}_i), \cdots, f_{sp_s}(\B{y}_i)) \\
=& \mygcd(g(\B{y}_i), f_{10}(\B{y}_i), \cdots, f_{1p_1}(\B{y}_i), \cdots,\\ 
& \hspace*{40pt} f_{s0}(\B{y}_i), \cdots, f_{sp_s}(\B{y}_i))
\end{flalign*}
\end{proof}

\begin{lemma}
\label{lemma_relative_prime_1}
For $t\in \{1,2,\cdots,s\}$, let $a_t(\B{y}),b_t(\B{y})\in \BF[\B{y}]$ such that $b_t(\B{y})\neq 0$ and $\mygcd(a_t(\B{y}), b_t(\B{y}))=1$.
For $t\in \{1,2,\cdots,s\}$, let $v_t(\B{y})=\mylcm(b_1(\B{y}),\cdots,b_t(\B{y}))$.
Then we have
\begin{flalign*}
\mygcd\left(a_1(\B{y})\frac{v_s(\B{y})}{b_1(\B{y})}, \cdots, a_s(\B{y})\frac{v_s(\B{y})}{b_s(\B{y})}, v_s(\B{y})\right) = 1
\end{flalign*}
\end{lemma}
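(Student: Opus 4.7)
The plan is to argue by contradiction, using unique factorization in $\BF[\B{y}]$ and $p$-adic valuations at a carefully chosen irreducible factor. Let $d(\B{y})$ denote the $\mygcd$ in question, and suppose for contradiction that $d(\B{y}) \neq 1$. Since $\BF[\B{y}]$ is a unique factorization domain, $d(\B{y})$ admits an irreducible factor $p(\B{y})$, and since $d(\B{y})$ divides every argument of the $\mygcd$ — in particular, $d(\B{y}) \mid v_s(\B{y})$ — we have $p(\B{y}) \mid v_s(\B{y})$.

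Next, I would invoke the defining property of the least common multiple in the UFD $\BF[\B{y}]$: if $v_p$ denotes the $p$-adic valuation (the exponent of $p$ in the irreducible factorization), then
$$v_p(v_s(\B{y})) \;=\; \max_{1 \le t \le s} v_p(b_t(\B{y})).$$
Let $k = v_p(v_s(\B{y})) \ge 1$ and pick an index $t^\ast$ that saturates this maximum, so $v_p(b_{t^\ast}(\B{y})) = k$. The crucial valuation computation is then immediate: from $\mygcd(a_{t^\ast},b_{t^\ast})=1$ and $p \mid b_{t^\ast}$ we get $v_p(a_{t^\ast}) = 0$, and from $v_p(v_s) = v_p(b_{t^\ast}) = k$ we get $v_p(v_s/b_{t^\ast}) = k-k = 0$. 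Hence
$$v_p\!\left(a_{t^\ast}(\B{y}) \cdot \frac{v_s(\B{y})}{b_{t^\ast}(\B{y})}\right) \;=\; 0,$$
so $p(\B{y})$ fails to divide this particular argument of the $\mygcd$. This contradicts $p \mid d$ and forces $d(\B{y}) = 1$.

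I do not foresee any serious obstacle; the only care needed is the one I have flagged: the index $t^\ast$ must be chosen so that $v_p(b_{t^\ast})$ actually \emph{equals} (not merely contributes to) the valuation of $v_s$. For an arbitrary $t$ with $p \mid b_t$ but $v_p(b_t) < k$, the quotient $v_s/b_t$ still carries a positive power of $p$, and the valuation of $a_t \cdot v_s/b_t$ would be strictly positive — consistent with, rather than contradicting, $p \mid d$. Once the saturating index is isolated, the whole argument collapses to $\mygcd(a_{t^\ast}, b_{t^\ast}) = 1$ together with elementary valuation arithmetic.
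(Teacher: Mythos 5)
Your proof is correct, and it takes a genuinely different route from the paper. The paper proves this lemma by induction on $s$: it peels off the last denominator using a chain of $\mygcd$ identities (e.g.\ $\mygcd(a_s(\B{y}),b_s(\B{y}))=1$, the fact that $\frac{v_s(\B{y})}{b_s(\B{y})}$ divides $v_{s-1}(\B{y})$, and $v_s(\B{y})=\frac{v_{s-1}(\B{y})b_s(\B{y})}{\mygcd(v_{s-1}(\B{y}),b_s(\B{y}))}$), reducing the $s$-term statement to the $(s-1)$-term one. You instead argue in one shot via unique factorization in $\BF[\B{y}]$: an irreducible common factor $p$ would have to divide $v_s(\B{y})$, the valuation characterization $v_p(v_s)=\max_t v_p(b_t)$ produces a saturating index $t^\ast$, and then $v_p(a_{t^\ast}\,v_s/b_{t^\ast})=0$ kills the factor. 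Your key caveat --- that one must pick $t^\ast$ achieving the maximum rather than an arbitrary $t$ with $p\mid b_t$ --- is exactly the right point to flag, and you handle it correctly. The trade-off: your argument is shorter and more conceptual but explicitly invokes the UFD structure of $\BF[\B{y}]$ and the valuation description of $\mylcm$; the paper's induction stays at the level of formal $\mygcd$/$\mylcm$ identities (which of course also rest on unique factorization) and meshes stylistically with the surrounding corollaries that are proved by the same kind of identity-chasing. Either proof is acceptable.
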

\begin{proof}
We use induction on $s$ to prove this lemma. Apparently, the lemma holds for $s=1$ due to $\mygcd(a_1(\B{y}),b_1(\B{y}))=1$. Assume it holds for $s-1$. Thus it follows
\begin{flalign*}
& \mygcd\bigg(a_1(\B{y})\frac{v_s(\B{y})}{b_1(\B{y})}, \cdots, a_s(\B{y})\frac{v_s(\B{y})}{b_s(\B{y})}, v_s(\B{y})\bigg) \\
=& \mygcd\bigg(a_1(\B{y})\frac{v_s(\B{y})}{b_1(\B{y})}, \cdots, a_s(\B{y})\frac{v_s(\B{y})}{b_s(\B{y})}, b_s(\B{y})\frac{v_s(\B{y})}{b_s(\B{y})}\bigg) \\
=& \mygcd\bigg(a_1(\B{y})\frac{v_s(\B{y})}{b_1(\B{y})}, \cdots,\mygcd(a_s(\B{y}), b_s(\B{y})) \frac{v_s(\B{y})}{b_s(\B{y})}\bigg) \\
\overset{(a)}{=} & \mygcd\bigg(a_1(\B{y})\frac{v_s(\B{y})}{b_1(\B{y})}, \cdots, a_{s-1}(\B{y})\frac{v_s(\B{y})}{b_{s-1}(\B{y})}, \frac{v_s(\B{y})}{b_s(\B{y})}\bigg) \\
\overset{(b)}{=} & \mygcd\bigg(a_1(\B{y})\frac{v_s(\B{y})}{b_1(\B{y})}, \cdots, a_{s-1}(\B{y})\frac{v_s(\B{y})}{b_{s-1}(\B{y})},  \\
& \hspace*{3cm} \mygcd\bigg(v_{s-1}(\B{y}),\frac{v_s(\B{y})}{b_s(\B{y})}\bigg) \bigg) \\
=& \mygcd\bigg(a_1(\B{y})\frac{v_s(\B{y})}{b_1(\B{y})}, \cdots, a_{s-1}(\B{y})\frac{v_s(\B{y})}{b_{s-1}(\B{y})}, v_{s-1}(\B{y}),\frac{v_s(\B{y})}{b_s(\B{y})}\bigg) \\
=& \mygcd\bigg( \frac{v_s(\B{y})}{v_{s-1}(\B{y})}\mygcd\bigg(a_1(\B{y})\frac{v_{s-1}(\B{y})}{b_1(\B{y})}, \cdots, a_{s-1}(\B{y})\frac{v_{s-1}(\B{y})}{b_{s-1}(\B{y})}\bigg), \\
& \hspace*{3cm} v_{s-1}(\B{y}),\frac{v_s(\B{y})}{b_s(\B{y})}\bigg) \\
\overset{(c)}{=}& \mygcd\bigg(\frac{v_s(\B{y})}{v_{s-1}(\B{y})}, v_{s-1}(\B{y}), \frac{v_s(\B{y})}{b_s(\B{y})}\bigg) \\
\overset{(d)}{=} & \mygcd\bigg(\frac{b_s(\B{y})}{\mygcd(v_{s-1}(\B{y}),b_s(\B{y}))}, v_{s-1}(\B{y}), \frac{v_{s-1}(\B{y})}{\mygcd(v_{s-1}(\B{y}),b_s(\B{y}))} \bigg) \\
=& \mygcd(1, v_{s-1}(\B{y})) = 1
\end{flalign*}
In the above equations, (a) is due to $\mygcd(a_s(\B{y}),b_s(\B{y}))=1$; 
(b) follows from the fact that $\frac{v_s(\B{y})}{b_s(\B{y})} \mid v_{s-1}(\B{y})$ and thus $\frac{v_s(\B{y})}{b_s(\B{y})} = \mygcd(v_{s-1}(\B{y}), \frac{v_s(\B{y})}{b_s(\B{y})})$; 
(c) follows from the inductive assumption; 
(d) is due to the equality: $v_s(\B{y})=\mylcm(v_{s-1}(\B{y}),b_s(\B{y}))=\frac{v_{s-1}(\B{y})b_s(\B{y})}{\mygcd(v_{s-1}(\B{y}), b_s(\B{y}))}$.
\end{proof}

\begin{corollary}
\label{cor_relative_prime_2}
For $j\in \{1,2,\cdots,s\}$, let $f_j(y_i) \in \BF(\B{y}_i)[y_i]$. 
Define $v(\B{y}_i)=\mylcm(\mu_{f_1}(\B{y}_i), \cdots, \mu_{f_s}(\B{y}_i))$ and $\bar{f}_j(\B{y})=v(\B{y}_i)f_j(y_i)$. 
Thus  $\mygcd(v(\B{y}_i), \bar{f}_1(\B{y}), \cdots, \bar{f}_s(\B{y}))=1$
\end{corollary}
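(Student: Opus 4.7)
The plan is to unfold the definitions, reduce the $y_i$-dependence by applying Corollary \ref{cor_partial_div_2}, and then finish with a direct application of Lemma \ref{lemma_relative_prime_1}. The corollary is essentially a packaging statement that lifts Lemma \ref{lemma_relative_prime_1} from a flat list of coprime pairs to the doubly-indexed setting that arises when we clear denominators in several polynomials at once.

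First, I would write each $f_j(y_i) = \sum_{k=0}^{p_j} \frac{a_{jk}(\B{y}_i)}{b_{jk}(\B{y}_i)} y_i^k$ in canonical form, so that $b_{jk}(\B{y}_i) \neq 0$ and $\mygcd(a_{jk}(\B{y}_i), b_{jk}(\B{y}_i)) = 1$. By the definition of $\mu_{f_j}$ we have $\mu_{f_j}(\B{y}_i) = \mylcm_k(b_{jk}(\B{y}_i))$, and therefore $v(\B{y}_i) = \mylcm_{j,k}(b_{jk}(\B{y}_i))$. Multiplying out, $\bar f_j(\B{y}) = v(\B{y}_i)\,f_j(y_i) = \sum_k a_{jk}(\B{y}_i)\,\tfrac{v(\B{y}_i)}{b_{jk}(\B{y}_i)}\,y_i^k$, which indeed lies in $\BF[\B{y}]$ since every $b_{jk}$ divides $v$.

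Next, I would apply Corollary \ref{cor_partial_div_2} with $g(\B{y}_i) = v(\B{y}_i)$ and the collection $\{\bar f_j(\B{y})\}_{j=1}^s$ viewed as polynomials in $y_i$ with coefficients in $\BF[\B{y}_i]$. The corollary strips off the $y_i$ dependence and yields $\mygcd(v, \bar f_1, \ldots, \bar f_s) = \mygcd\bigl(v,\, \{a_{jk}\,\tfrac{v}{b_{jk}}\}_{j,k}\bigr)$, a gcd taken entirely inside $\BF[\B{y}_i]$.

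Finally, I would flatten the double index $(j,k)$ into a single index $t = 1, \ldots, N$ with $N = \sum_j (p_j+1)$, relabeling the coprime pairs as $(A_t, B_t)$. Then $v(\B{y}_i) = \mylcm_t(B_t(\B{y}_i))$ and $\mygcd(A_t, B_t) = 1$ for every $t$, so Lemma \ref{lemma_relative_prime_1} applied to these $N$ pairs gives $\mygcd\bigl(\{A_t\,\tfrac{v}{B_t}\}_t,\, v\bigr) = 1$, which is exactly the desired conclusion. The only delicate step is the bookkeeping in this final flattening — checking that $\mylcm$ of the flattened list $\{B_t\}$ agrees with $v$, and that Lemma \ref{lemma_relative_prime_1} is stated for an arbitrary number $s$ of coprime pairs (so taking $s = N$ is legitimate). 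Both points are immediate, so no genuine obstacle arises; the corollary is just the multivariate repackaging of Lemma \ref{lemma_relative_prime_1}.
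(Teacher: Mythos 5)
Your proposal is correct and follows essentially the same route as the paper's own proof: write each $f_j(y_i)$ in canonical form so that $v(\B{y}_i)$ is the least common multiple of all denominators $b_{jk}(\B{y}_i)$, apply Corollary \ref{cor_partial_div_2} to reduce the gcd to one taken over the coefficients $a_{jk}(\B{y}_i)\frac{v(\B{y}_i)}{b_{jk}(\B{y}_i)}$ in $\BF[\B{y}_i]$, and conclude by Lemma \ref{lemma_relative_prime_1}. The flattening of the double index that you flag as the only delicate point is handled implicitly (and equally harmlessly) in the paper.
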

\begin{proof}
Assume $f_j(y_i)$ has the following form:
\begin{flalign*}
f_j(y_i) = \frac{a_{j0}(\B{y}_i)}{b_{j0}(\B{y}_i)} + \frac{a_{j1}(\B{y}_i)}{b_{j1}(\B{y}_i)}y_i + \cdots + \frac{a_{jp_j}(\B{y}_i)}{b_{jp_j}(\B{y}_i)}y^{p_j}_i
\end{flalign*}
where for any $j\in \{1,2,\cdots,s\}$ and $t\in \{0,1,\cdots,p_j\}$, $a_{jt}(\B{y}_i), b_{jt}(\B{y}_i)\in \BF[\B{y}_i]$, $b_{jt}(\B{y}_i)\neq 0$ and $\mygcd(a_{jt}(\B{y}_i),b_{jt}(\B{y}_i))=1$. 
Apparently, $v(\B{y}_i)$ is the least common multiple of all $b_{jt}(\B{y}_i)$'s. 
Define $u_{jt}(\B{y}_i)=\frac{v(\B{y}_i)}{b_{jt}(\B{y}_i)} \in \BF[\B{y}_i]$. 
Hence, we have $\bar{f}_j(\B{y}) = \sum^{p_j}_{t=0}a_{jt}(\B{y}_i)u_{jt}(\B{y}_i)y^t_i$. 
Then it follows
\begin{flalign*}
&\mygcd(v(\B{y}_i), \bar{f}_1(\B{y})), \cdots, \bar{f}_s(\B{y})) \\
\overset{(a)}{=}& \mygcd(v(\B{y}_i), a_{10}(\B{y}_i)u_{10}(\B{y}_i), \cdots, a_{1p_1}(\B{y}_i)u_{1p_1}(\B{y}_i), \cdots, \\
& \hspace*{30pt} a_{s0}(\B{y}_i)u_{s0}(\B{y}_i), \cdots, a_{sp_s}(\B{y}_i)u_{sp_s}(\B{y}_i)) \\
\overset{(b)}{=}& 1
\end{flalign*}
where (a) is due to Corollary \ref{cor_partial_div_2} and (b) follows from Lemma \ref{lemma_relative_prime_1}.
\end{proof}

Generally, the definitions of division in $\BF[\B{y}]$ and $\BF(\B{y}_i)[y_i]$ are different. 
However, the following theorem reveals the two definitions are closely related.
\begin{theorem}
\label{th_multivar_div}
Consider two polynomials $f(\B{y}),g(\B{y})\in \BF[\B{y}]$, where $g(\B{y})\neq 0$. 
Then $g(\B{y})\mid f(\B{y})$ if and only if $g(y_i) \mid_1 f(y_i)$ for every $i\in \{1,2,\cdots,k\}$.
\end{theorem}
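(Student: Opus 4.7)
The forward implication is essentially a tautology: a witness of $g(\B{y})\mid f(\B{y})$ in $\BF[\B{y}]$ is automatically a witness of $g(y_i)\mid_1 f(y_i)$ in each localized ring $\BF(\B{y}_i)[y_i]$, since $\BF[\B{y}]$ sits inside every $\BF(\B{y}_i)[y_i]$. So the content of the theorem is the reverse direction, and my plan is to examine the quotient $f(\B{y})/g(\B{y}) \in \BF(\B{y})$ as a reduced fraction and prove that its denominator is forced to be a non-zero constant, which is precisely the assertion $g(\B{y}) \mid f(\B{y})$.

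Fix an index $i$. From $g(y_i)\mid_1 f(y_i)$ I obtain $h_i(y_i) \in \BF(\B{y}_i)[y_i]$ with $f(y_i) = g(y_i)\,h_i(y_i)$. Setting $\mu_i(\B{y}_i) = \mu_{h_i}(\B{y}_i)$ (the least common multiple of the denominators of the coefficients of $h_i$) and $\bar h_i(\B{y}) = \mu_i(\B{y}_i)\,h_i(y_i) \in \BF[\B{y}]$, I clear denominators to get the genuine polynomial identity
\begin{equation*}
\mu_i(\B{y}_i)\,f(\B{y}) \;=\; g(\B{y})\,\bar h_i(\B{y}) \qquad \text{in } \BF[\B{y}].
\end{equation*}
The key preparatory fact, which is exactly Corollary \ref{cor_relative_prime_2} applied with $s=1$, is that $\mygcd(\mu_i(\B{y}_i),\bar h_i(\B{y})) = 1$ in $\BF[\B{y}]$. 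Thus the equality $f(\B{y})/g(\B{y}) = \bar h_i(\B{y})/\mu_i(\B{y}_i)$ already expresses $f/g$ as a reduced fraction in the UFD $\BF[\B{y}]$.

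Now I write $f(\B{y})/g(\B{y}) = a(\B{y})/b(\B{y})$ in lowest terms, with $\mygcd(a,b)=1$ and $b \neq 0$. By uniqueness of reduced fractions in the UFD $\BF[\B{y}]$ (comparing the two coprime representations $a/b$ and $\bar h_i/\mu_i$ via the usual cross-multiplication argument), $b(\B{y})$ and $\mu_i(\B{y}_i)$ are associates. In particular, $b(\B{y}) \in \BF[\B{y}_i]$, that is, $b$ does not involve the variable $y_i$. Since the hypothesis holds for \emph{every} $i \in \{1,\dots,k\}$, the denominator $b(\B{y})$ involves no variable at all, so $b \in \BF^{*}$. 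Consequently $f/g \in \BF[\B{y}]$, which is exactly $g(\B{y})\mid f(\B{y})$.

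The main obstacle is conceptual rather than computational: one has to carefully pass between the two notions of divisibility, because extra scaling by elements of $\BF[\B{y}_i]$ is permitted in $\BF(\B{y}_i)[y_i]$ but not in $\BF[\B{y}]$. Corollary \ref{cor_relative_prime_2} is the crucial technical lever that removes this gap — without the coprimality of $\mu_i$ and $\bar h_i$, one could not identify the cleared form with the reduced fraction of $f/g$ and the argument would collapse. Once that coprimality is in hand, the ``intersect over all variables'' step is essentially a unique factorization formality.
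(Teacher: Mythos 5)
Your proof is correct and follows essentially the same route as the paper: both clear denominators to get $\mu_{h_i}(\B{y}_i)f(\B{y})=\bar h_i(\B{y})g(\B{y})$, invoke Corollary \ref{cor_relative_prime_2} for $\mygcd(\mu_{h_i}(\B{y}_i),\bar h_i(\B{y}))=1$, identify $g(\B{y})/\mygcd(f(\B{y}),g(\B{y}))$ with $\mu_{h_i}(\B{y}_i)$ up to a unit, and conclude by letting $i$ range over all variables. The only cosmetic difference is that you phrase the last identification via uniqueness of reduced fractions in the UFD $\BF[\B{y}]$, whereas the paper carries out the equivalent $\mygcd$ computation explicitly.
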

\begin{proof}
The division equation between $f(y_i)$ and $g(y_i)$ is as follows
\begin{flalign}
\label{eq_div}
f(y_i) = h_i(y_i)g(y_i) + r_i(y_i)
\end{flalign}
where $h_i(y_i), r_i(y_i) \in \BF(\B{y}_i)[y_i]$, and either $r_i(y_i)=0$ or $d_{r_i} < d_g$. 
Due to the uniqueness of Equation (\ref{eq_div}), $f(\B{y})\mid g(\B{y})$ immediately implies that for any $i\in \{1,2,\cdots,k\}$, $r_i(y_i)=0$ and thus $g(y_i) \mid_1 f(y_i)$.

Conversely, assume for every $i\in \{1,\cdots,k\}$, $g(y_i) \mid_1 f(y_i)$ and hence $r_i(y_i)= 0$. 
Denote $\bar{h}_i(\B{y})=\mu_{h_i}(\B{y}_i)h_i(y_i)$. 
Clearly, $\bar{h}_i(\B{y}) \in \BF[\B{y}]$. 
Then, the following equation holds
\begin{flalign*}
\mu_{h_i}(\B{y}_i)f(\B{y}) = \bar{h}_i(\B{y})g(\B{y})
\end{flalign*}
By Corollary \ref{cor_relative_prime_2}, $\mygcd(\mu_{h_i}(\B{y}_i), \bar{h}_i(\B{y}))=1$. 
Thus, $\mu_{h_i}(\B{y}_i) \mid g(\B{y})$. 
Define $\bar{g}(\B{y})=\frac{g(\B{y})}{\mu_{h_i}(\B{y}_i)}$. 
By Lemma \ref{lemma_partial_div}, $\bar{g}(\B{y})\in \BF[\B{y}]$. 
Define $u(\B{y})=\frac{g(\B{y})}{\mygcd(f(\B{y}),g(\B{y}))} \in \BF[\B{y}]$. 
It follows that
\begin{flalign*}
u(\B{y}) &= \frac{g(\B{y})}{\mygcd(f(\B{y}),g(\B{y}))} \\
&= \frac{\mu_{h_i}(\B{y}_i)\bar{g}(\B{y})}{\mygcd(\bar{h}_i(\B{y})\bar{g}(\B{y}), \mu_{h_i}(\B{y}_i)\bar{g}(\B{y}))} \\
&= \frac{\mu_{h_i}(\B{y}_i)\bar{g}(\B{y})}{\bar{g}(\B{y})\mygcd(\bar{h}_i(\B{y}), \mu_{h_i}(\B{y}_i))} \\
&= \frac{\mu_{h_i}(\B{y}_i)\bar{g}(\B{y})}{\bar{g}(\B{y})} \\
&= \mu_{h_i}(\B{y}_i)
\end{flalign*}
Note that variable $y_i$ is absent in $u(\B{y})$. 
Because $y_i$ can be any arbitrary variable in $\B{y}$, it immediately follows that all the variables in $\B{y}$ must be absent in $u(\B{y})$, implying that $u(\B{y})$ is a constant in $\BF$. 
Hence $g(\B{y})\mid f(\B{y})$.
\end{proof}

\begin{theorem}
\label{th_multivar_gcd}
Let $f(\B{y}),g(\B{y})$ be two non-zero polynomials in $\BF[\B{y}]$. 
Then $\mygcd(f(\B{y}),g(\B{y}))=1$ if and only if $\mygcd_1(f(y_i),g(y_i))=1$ for any $i\in \{1,2,\cdots,k\}$.
\end{theorem}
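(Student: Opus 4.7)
The plan is to reduce both directions to the divisibility equivalence established in Theorem \ref{th_multivar_div}. For the easy direction $(\Leftarrow)$, set $h(\B{y}) = \mygcd(f(\B{y}), g(\B{y}))$ in $\BF[\B{y}]$. Since $h \mid f$ and $h \mid g$ in $\BF[\B{y}]$, the easy direction of Theorem \ref{th_multivar_div} yields $h(y_i) \mid_1 f(y_i)$ and $h(y_i) \mid_1 g(y_i)$ for every $i$; consequently $h(y_i) \mid_1 \mygcd_1(f(y_i), g(y_i)) = 1$, forcing $h(y_i)$ to be a unit in $\BF(\B{y}_i)[y_i]$, i.e., an element of $\BF(\B{y}_i)^{*}$ of degree $0$ in $y_i$. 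Applying this for every $i \in \{1, 2, \ldots, k\}$ shows $h$ has degree $0$ in each variable separately, so $h$ is a nonzero constant and $\mygcd(f,g)=1$.

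For the converse $(\Rightarrow)$, I would argue by contradiction. Assume $\mygcd(f,g) = 1$ in $\BF[\B{y}]$ but $d(y_i) := \mygcd_1(f(y_i), g(y_i))$ has positive degree in $y_i$ for some index $i$. Clear denominators to form $\bar{d}(\B{y}) := \mu_d(\B{y}_i)\, d(y_i) \in \BF[\B{y}]$, then strip off its content (an element of $\BF[\B{y}_i]$) to obtain the primitive part $\tilde{d}(\B{y})$ with respect to $y_i$; by construction $\tilde{d}$ still has positive $y_i$-degree and differs from $d(y_i)$ only by a unit in $\BF(\B{y}_i)$, so $\tilde{d}(y_i) \mid_1 f(y_i)$ and $\tilde{d}(y_i) \mid_1 g(y_i)$. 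A Gauss-style promotion then lifts these divisibilities back to $\BF[\B{y}]$, yielding $\tilde{d} \mid f$ and $\tilde{d} \mid g$ in $\BF[\B{y}]$. Hence $\tilde{d} \mid \mygcd(f, g) = 1$, which contradicts $\deg_{y_i} \tilde{d} \geq 1$.

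The main obstacle is making the Gauss-style promotion precise. From $f(y_i) = q(y_i)\, \tilde{d}(y_i)$ in $\BF(\B{y}_i)[y_i]$, clearing denominators gives $\mu_q(\B{y}_i)\, f(\B{y}) = \bar{q}(\B{y})\, \tilde{d}(\B{y})$ in $\BF[\B{y}]$, where $\bar{q} := \mu_q \cdot q \in \BF[\B{y}]$. Corollary \ref{cor_relative_prime_2} (applied with $s=1$ to $q(y_i)$) guarantees $\mygcd(\mu_q(\B{y}_i), \bar{q}(\B{y})) = 1$, so in the UFD $\BF[\B{y}]$ we have $\mu_q \mid \tilde{d}$ by coprime cancellation. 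But $\tilde{d}$ is primitive in $y_i$, so any divisor of $\tilde{d}$ lying in $\BF[\B{y}_i]$ must be a unit of $\BF[\B{y}_i]$, i.e., a nonzero element of $\BF$; cancelling this scalar from $\mu_q f = \bar{q}\tilde{d}$ then delivers $\tilde{d} \mid f$ in $\BF[\B{y}]$. This is exactly the non-PID bookkeeping already developed for Theorem \ref{th_multivar_div}, so once one verifies that it transfers cleanly to the gcd setting, the remainder of the argument is routine.
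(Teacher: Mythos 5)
Your proof is correct, and both directions ultimately rest on the same two pillars as the paper's: the easy half of Theorem \ref{th_multivar_div} for the $(\Leftarrow)$ direction, and Corollary \ref{cor_relative_prime_2} for the $(\Rightarrow)$ direction. The difference lies in how the lifting step is organized. The paper keeps the denominator-cleared common divisor $w(\B{y})=\mu_v(\B{y}_i)v(y_i)$ as is, clears the denominators of \emph{both} quotients simultaneously with $s(\B{y}_i)=\mylcm(\mu_p(\B{y}_i),\mu_q(\B{y}_i))$, and extracts the contradiction from the single identity $s(\B{y}_i)\,\mygcd(f(\B{y}),g(\B{y}))=w(\B{y})\,\mygcd(\bar{p}(\B{y}),\bar{q}(\B{y}))$, concluding $s\mid w$ and that $\bar{w}=w/s$ is a nonconstant common divisor of $f$ and $g$. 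You instead normalize the common divisor first---passing to its primitive part $\tilde{d}$ with respect to $y_i$---and then lift each divisibility $\tilde{d}(y_i)\mid_1 f(y_i)$ separately by coprime cancellation in the UFD $\BF[\B{y}]$ ($\mygcd(\mu_q,\bar{q})=1$ forces $\mu_q\mid\tilde{d}$, and primitivity forces $\mu_q$ to be a scalar). This is a textbook Gauss's-lemma argument; it is somewhat more modular, since the statement ``a $y_i$-primitive polynomial dividing $f$ over $\BF(\B{y}_i)$ divides it over $\BF[\B{y}]$'' is reusable and is essentially the single-index content of the hard direction of Theorem \ref{th_multivar_div}, at the cost of introducing content and primitive part, notions the paper deliberately avoids. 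Two small points to make explicit: your claim that any divisor of $\tilde{d}$ lying in $\BF[\B{y}_i]$ is a unit implicitly invokes Lemma \ref{lemma_partial_div} (such a divisor divides every $y_i$-coefficient of $\tilde{d}$, hence its content), and the final contradiction needs the observation that dividing $\bar{d}$ by its content, an element of $\BF[\B{y}_i]$, preserves the positive $y_i$-degree. Both are immediate from the paper's toolkit, so the argument goes through.
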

\begin{proof}
First, assume for any $i\in \{1,2,\cdots,k\}$, $\mygcd_1(f(y_i),g(y_i))=1$. 
We use contradiction to prove that $\mygcd(f(\B{y}),g(\B{y}))=1$. 
Assume $u(\B{y})=\mygcd(f(\B{y}),g(\B{y}))$ is not constant. 
Let $y_i$ be a variable which is present in $u(\B{y})$. 
By Theorem \ref{th_multivar_div}, $u(y_i)\mid_1 f(y_i)$ and $u(y_i)\mid_1 g(y_i)$, which contradicts that $\mygcd_1(f(y_i),g(y_i))=1$.

Then, assume $\mygcd(f(\B{y}),g(\B{y}))=1$. 
We also use contradiction to prove that for any $i\in \{1,2,\cdots,k\}$, $\mygcd_1(f(y_i),g(y_i))=1$. 
Assume there exists $i\in \{1,\cdots,k\}$ such that $v(y_i)=\mygcd_1(f(y_i),g(y_i))$ is non-trivial. 
Define $w(\B{y})=\mu_v(\B{y}_i)v(y_i) \in \BF[\B{y}]$. 
Clearly, $w(y_i)\mid_1 f(y_i)$ and $w(y_i)\mid_1 g(y_i)$. 
Thus, there exists $p(y_i),q(y_i) \in \BF(\B{y}_i)[y_i]$ such that
\begin{flalign*}
f(y_i) = w(y_i)p(y_i) \hspace*{20pt} g(y_i) = w(y_i)q(y_i)
\end{flalign*}
Let $s(\B{y}_i)=\mylcm(\mu_p(\B{y}_i), \mu_q(\B{y}_i))$. 
Define $\bar{p}(\B{y})=s(\B{y}_i)p(y_i)$ and $\bar{q}(\B{y})=s(\B{y}_i)q(y_i)$. 
Apparently, $\bar{p}(\B{y}),\bar{q}(\B{y})\in \BF[\B{y}]$. 
It follows that
\begin{flalign*}
s(\B{y}_i)f(\B{y}) = w(\B{y})\bar{p}(\B{y}) \hspace*{20pt} s(\B{y}_i)g(\B{y}) = w(\B{y})\bar{q}(\B{y})
\end{flalign*}
Then the following equation holds
\begin{flalign*}
s(\B{y}_i)\mygcd(f(\B{y}), g(\B{y})) = w(\B{y})\mygcd(\bar{p}(\B{y}), \bar{q}(\B{y}))
\end{flalign*}
Due to Corollary \ref{cor_relative_prime_2}, $\mygcd(s(\B{y}_i), \mygcd(\bar{p}(\B{y}), \bar{q}(\B{y}))) = \mygcd(s(\B{y}_i), \bar{p}(\B{y}), \bar{q}(\B{y})) =1$. 
Hence $s(\B{y}_i) \mid w(\B{y})$. Let $\bar{w}(\B{y})=\frac{w(\B{y})}{s(\B{y}_i)}$. 
According to Lemma \ref{lemma_partial_div}, $\bar{w}(\B{y})$ is a non-trivial polynomial in $\BF[\B{y}]$. 
Thus, $\bar{w}(\B{y}) \mid \mygcd(f(\B{y}), g(\B{y}))$, contradicting $\mygcd(f(\B{y}), g(\B{y}))=1$.
\end{proof}

\begin{lemma}
\label{lemma_relative_prime_univar}
Consider two non-zero polynomials in $\BF[z]$, $f(z) = a_0 + a_1z + \cdots + a_pz^p$ and $g(z) = b_0 + b_1z + \cdots + b_qz^q$, where $a_i,b_j\in \BF$ for $i\in\{0,1,\cdots,p\},j\in \{0,1,\cdots,q\}$, $a_pb_q \neq 0$, $p\ge q$ and $\mygcd(f(z),g(z))=1$. Let $s(x), t(x)$ be two non-zero polynomials in $\BF[x]$ such that $\mygcd(s(x),t(x))=1$. Define the following polynomials in $\BF[x]$: 
\begin{align*}
\alpha(x)=f\bigg(\frac{s(x)}{t(x)}\bigg)t^p(x) =\sum^p_{k=0}\nolimits a_kt^{p-k}(x)s^k(x) \\
\beta(x)=g\bigg(\frac{s(x)}{t(x)}\bigg)t^p(x) =\sum^q_{k=0}\nolimits b_kt^{p-k}(x)s^k(x)
\end{align*}
Then $\mygcd(\alpha(x),\beta(x))=1$.
\end{lemma}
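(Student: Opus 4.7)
The plan is to rule out common roots of $\alpha(x)$ and $\beta(x)$ in an algebraic closure $\overline{\BF}$ of $\BF$. Since $\BF[x]$ is a unique factorization domain, a non-trivial common divisor of $\alpha$ and $\beta$ would produce a common root in $\overline{\BF}$, so showing no such root exists is equivalent to $\mygcd(\alpha(x), \beta(x)) = 1$.

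I would proceed by contradiction: assume $\omega \in \overline{\BF}$ satisfies $\alpha(\omega) = \beta(\omega) = 0$ and split the argument by whether $t(\omega)$ vanishes. If $t(\omega) \neq 0$, set $\zeta := s(\omega)/t(\omega) \in \overline{\BF}$. Then $\alpha(\omega) = t^p(\omega)\,f(\zeta)$ and $\beta(\omega) = t^p(\omega)\,g(\zeta)$ by construction, so $t^p(\omega) \neq 0$ forces $f(\zeta) = g(\zeta) = 0$. This means $\zeta$ is a common root of $f$ and $g$ in $\overline{\BF}$, contradicting $\mygcd(f(z),g(z))=1$ in $\BF[z]$.

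If instead $t(\omega) = 0$, I would use the explicit expansion $\alpha(\omega) = \sum_{k=0}^{p} a_k\,t^{p-k}(\omega)\,s^k(\omega)$: every term with $k < p$ carries a strictly positive power of $t$ and therefore vanishes at $\omega$, leaving $\alpha(\omega) = a_p\,s^p(\omega)$. Since $a_p \neq 0$, this forces $s(\omega) = 0$, making $\omega$ a common root of $s$ and $t$, which contradicts $\mygcd(s(x), t(x)) = 1$. Note that $\beta(\omega)$ gives no new information in this case: because $p \geq q$, every term in $\beta(\omega) = \sum_{k=0}^{q} b_k\,t^{p-k}(\omega)\,s^k(\omega)$ already carries a positive power of $t$ (when $p > q$) or reduces to $b_q s^q(\omega)$ (when $p=q$), so the contradiction is extracted entirely from $\alpha$.

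Neither case is possible, so $\alpha$ and $\beta$ share no root in $\overline{\BF}$, and the claim follows. I do not anticipate a serious obstacle; the argument is a clean case analysis, and the only point to track carefully is which hypothesis is used where: coprimality of $f$ and $g$ handles $t(\omega) \neq 0$, while coprimality of $s$ and $t$ together with $a_p \neq 0$ handles $t(\omega) = 0$. The hypothesis $p \geq q$ enters only to guarantee that $\alpha(x)$ and $\beta(x)$ are both honest polynomials in $\BF[x]$ of the form displayed in the statement.
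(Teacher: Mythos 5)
Your proof is correct and follows essentially the same route as the paper's: pass to an extension field, take a common root, and split on whether $t$ vanishes there, using $\mygcd(s,t)=1$ with $a_p\neq 0$ in one case and $\mygcd(f,g)=1$ in the other. The only cosmetic difference is that the paper phrases it as first ruling out $t(x_0)=0$ before deriving the contradiction from $f$ and $g$, which is the same case analysis in a different order.
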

\begin{proof}
Assume $w(x)=\mygcd(\alpha(x),\beta(x))$ is non-trivial. 
Thus we can find an extension field $\bar{\BF}$ of $\BF$ such that there exists $x_0 \in \bar{\BF}$ which satisfies $w(x_0)=0$ and hence $\alpha(x_0)=\beta(x_0)=0$. 
In the rest of this proof, we restrict our discussion in $\bar{\BF}$. 
Note that $\mygcd(f(z),g(z))=1$ and $\mygcd(s(x),t(x))=1$ also hold for $\bar{\BF}$.  
Assume $t(x_0)= 0$ and thus $x-x_0 \mid t(x)$. 
Since $\mygcd(s(x),t(x))=1$, it follows that $x-x_0 \nmid s(x)$ and thus $s(x_0)\neq 0$. 
Hence, $\alpha(x_0)=a_ps^p(x_0)\neq 0$, contradicting that $\alpha(x_0)=0$. 
Hence, we have proved that $t(x_0)\neq 0$. 
Then we have
\begin{flalign*}
f\bigg(\frac{s(x_0)}{t(x_0)}\bigg) = \frac{\alpha(x_0)}{t^p(x_0)} = 0 \hspace*{10pt} g\bigg(\frac{s(x_0)}{t(x_0)}\bigg) = \frac{\beta(x_0)}{t^p(x_0)} = 0
\end{flalign*}
which implies that $z-\frac{s(x_0)}{t(x_0)}$ is a common divisor of $f(z)$ and $g(z)$, contradicting $\mygcd(f(z),g(z))=1$.
\end{proof}

\begin{proof}[Proof of Lemma \ref{lemma_relative_prime_multivar}]
Note that if we substitute $\BF$ with $\BF(\B{y}_i)$ and $\mygcd$ with $\mygcd_1$ in Lemma \ref{lemma_relative_prime_univar}, the lemma also holds. 
Apparently, $f(z),g(z)\in \BF(\B{y}_i)[z]$. We will prove that $\mygcd_1(f(z),g(z))=1$. 
By contradiction, assume $r(z)=\mygcd_1(f(z),g(z))\in \BF(\B{y}_i)[z]$ is non-trivial. Let $\bar{f}(z)=\frac{f(z)}{r(z)}$ and $\bar{g}(z)=\frac{g(z)}{r(z)}$. 
Clearly, $\bar{f}(z)$ and $\bar{g}(z)$ are both non-zero polynomials in $\BF(\B{y}_i)[z]$. 
Then we can find an assignment to $\B{y}_i$, denoted by $\B{y}^*_i$, such that the coefficients of the maximum powers of $z$ in $r(z),\bar{f}(z)$ and $\bar{g}(z)$ are all non-zeros. 
Let $\bar{r}(z)$ denote the univariate polynomial acquired by assigning $\B{y}_i=\B{y}^*_i$ to $r(z)$. 
Clearly, $\bar{r}(z)$ is a common divisor of $f(z)$ and $g(z)$ in $\BF[z]$, contradicting $\mygcd(f(z),g(z))=1$. 
Moreover, due to $\mygcd(s(\B{y}),t(\B{y}))=1$ and Theorem \ref{th_multivar_gcd}, $\mygcd_1(s(y_i),t(y_i))=1$. 
Thus, by Lemma \ref{lemma_relative_prime_univar}, $\mygcd_1(\alpha(y_i),\beta(y_i))=1$. 
Since $i$ can be any integer in $\{1,2,\cdots,k\}$, it follows that $\mygcd(\alpha(\B{y}),\beta(\B{y}))=1$ by Theorem \ref{th_multivar_gcd}.
\end{proof}

\bibliographystyle{IEEEtran}
\bibliography{./nc-short}

\end{document}